\documentclass[aps,prl,reprint,superscriptaddress,dvipsnames,nofootinbib,nobibnotes,longbibliography,floatfix]{revtex4-2}
\usepackage{times}
\usepackage[stretch=40]{microtype}
\usepackage{graphicx}
\usepackage{needspace}
\usepackage{dcolumn}
\usepackage{mathrsfs}
\usepackage{pifont}
\usepackage{amsthm}
\usepackage{amsmath}
\usepackage{amssymb}
\usepackage{bm}
\usepackage{bbm}
\usepackage{latexsym}
\usepackage{dsfont}
\usepackage[colorlinks=true,
linkcolor=blue,
urlcolor=blue,
citecolor=blue]{hyperref}
\usepackage[capitalise]{cleveref}
\usepackage{color}
\usepackage{booktabs} 
\makeatletter
\providecommand{\href@noop}[0]{\@secondoftwo}
\makeatother

\newcommand{\ket}[1]{|#1\rangle}
\newcommand{\bra}[1]{\langle#1|}

\newcommand{\dif}{\mathrm{d}}
\newcommand{\Z}{\mathbb{Z}}

\newcommand{\RR}{\mathbb{R}}
\newcommand{\Tr}{\operatorname{Tr}}
\newtheorem{theorem}{Theorem}
\newtheorem{lemma}{Lemma}
\newtheorem{corollary}{Corollary}

\setcitestyle{numbers,square,sort&compress}

\begin{document}
\microtypesetup{expansion=false,protrusion=false}
\title{Radial Coarse Graining Restores Vacuum Majorization in Wigner Phase Space}
\author{Ao-Xiang Liu}
\affiliation{School of Physical Sciences, University of Chinese Academy of Sciences, 1 Yanqihu East Rd, Beijing 101408, China}
\author{Cong-Feng Qiao}
\email[]{qiaocf@ucas.ac.cn}
\affiliation{School of Physical Sciences, University of Chinese Academy of Sciences, 1 Yanqihu East Rd, Beijing 101408, China}
\affiliation{ICTP-AP, University of Chinese Academy of Sciences, Beijing 100190, China}
\date{\today}
\begin{abstract}
Quantum uncertainty limits how strongly Wigner functions can concentrate in phase space. It is commonly conjectured that the vacuum Wigner function continuously majorizes that of any Wigner-positive state. However, by constructing Wigner-positive states whose localized quantum coherence lowers their Wigner entropy below the vacuum value, we show that this conjecture is incorrect. We restore vacuum majorization at finite resolution by integrating single-mode Wigner functions over concentric radial energy shells, obtaining probability vectors majorized by that of the vacuum for all Wigner-positive states and for Wigner-negative states with non-negative shell weights. Anchoring these shells to a fixed oscillator frame breaks affine symplectic invariance, rendering this relation strict for every non-vacuum Gaussian pure state. For Wigner-negative states, the persistence of negative shell weights defines a radial negativity scale with Airy-edge semiclassical asymptotics for highly excited Fock states. Operationally, we combine the majorization relation with partial transposition to obtain a two-mode entanglement criterion directly evaluable from radially binned joint homodyne outcomes without state reconstruction.
\end{abstract}

\maketitle

\textit{Introduction}---The uncertainty principle places a fundamental limit on the phase-space localization of quantum states \cite{heisenberg27}. Standard variance-based relations quantify this bound in the second order of moment expansion \cite{kennard27,robertson29}, while entropic uncertainty relations probe the underlying distributions globally \cite{hirschman57,bialynicki75,beckner75,DD83U,coles17}, underpinning applications from entanglement detection \cite{walborn09,saboia11,li11memory,schneeloch19,haas21b} to quantum cryptography \cite{berta10,tomamichel11,furrer12,tomamichel12,curty14,zhang20cvsdi}. Majorization uncertainty relations establish an even stronger hierarchy by ordering complete distributions to pinpoint states of least disorder; a single relation implies every Schur-concave uncertainty bound \cite{marshall11,partovi11,friedland13,puchala13,rudnicki14,LJ19O,cerf24} while supplying operational criteria for entanglement and quantum steering \cite{zhu23conditional,garttner23prl,garttner23,yang26}. Such comparisons of entire distributions are particularly natural in phase space, where informationally complete quasiprobabilities fully describe the quantum state \cite{moyal49,cahill69a,cahill69b}. For the Husimi $Q$ representation, the Lieb--Solovej theorem establishes that the vacuum distribution continuously majorizes that of every single-mode bosonic state \cite{lieb14}.

\begingroup\microtypesetup{expansion=true,protrusion=true}
\looseness=-1 Among phase-space representations \cite{wigner32,hillery84}, the Wigner distribution stands out as the closest quantum analog to a classical phase-space distribution. Unlike classical probability densities, however, generic Wigner functions can take on negative values \cite{kenfack04,walschaers20,WM23O}, serving as a hallmark of nonclassicality and an indispensable resource for non-Gaussian quantum advantage \cite{mari12,veitch13,pashayan15,rahimi16,albarelli18}. Yet standard continuous majorization orders non-negative integrable densities by comparing the concentration of their decreasing rearrangements \cite{hardy29,ryff65,chong74}, strictly confining its direct application to the Wigner-positive sector. Within this domain, the vacuum is conjectured to continuously majorize all states \cite{vanherstraeten23}, implying the Wigner-entropy conjecture \cite{hertz17,vanherstraeten21}. This ordering has been established for selected Fock-diagonal families and the convex hull of multimode Gaussian states \cite{vanbever21,vanherstraeten23,qian24,boer25}, while the entropy bound holds for broader subclasses and under purity conditions \cite{vanherstraeten21,dias23,vanherstraeten25,qian26}.
\par\endgroup

\Needspace{2\baselineskip}
We construct an explicit family of Wigner-positive states whose Wigner functions have lower R\'enyi entropy than that of the vacuum, demonstrating that the vacuum is not a universal benchmark under continuous majorization even within the Wigner-positive sector. Details are presented in the Supplements attached. This breakdown originates from localized phase-space interference generated by quantum coherence, which alters the density values compared by continuous majorization even after decreasing rearrangement. This prompts the question of whether a universal majorization benchmark can be restored by averaging over finite phase-space regions.

In this work, we show that integrating single-mode Wigner functions over concentric shells bounded by contours of constant oscillator energy in a fixed frame restores vacuum majorization at finite phase-space resolution. For complete radial partitions with bounded shell areas, the shell-probability vector is majorized by that of the vacuum state for every Wigner-positive state and every Wigner-negative state with non-negative shell weights.

While continuous majorization treats all single-mode Gaussian pure states as equivalent to the vacuum under affine symplectic transformations \cite{vanherstraeten23}, anchoring radial shells to a fixed oscillator frame explicitly breaks this symmetry, rendering the coarse-grained relation strict for every non-vacuum Gaussian pure state at every positive shell width. For Wigner-negative states, the persistence of negative shell weights under radial coarse graining defines a radial negativity scale, half of which is asymptotic to the oscillator energy for highly excited Fock states, with an Airy-edge correction \cite{berry77,hanin20}. Operationally, combining the majorization relation with partial transposition yields a two-mode entanglement criterion \cite{peres96,horodecki96,simon00} directly evaluable from radially binned joint homodyne outcomes \cite{braunstein98,leonhardt97}, with finite shell widths treated exactly and no reconstruction of a density matrix or continuous phase-space distribution.

\Needspace{9\baselineskip}
\textit{Coarse-grained phase space}---Let us consider a single-mode bosonic system described by canonical phase-space coordinates $\boldsymbol r=(q,p)$, setting $\hbar=1$. The state $\rho$ is fully characterized by its Wigner distribution \cite{wigner32,hillery84}
\begin{align}
W_\rho(\boldsymbol{r})=\frac{1}{\pi}\int_{-\infty}^{\infty}e^{2ipy}\langle q-y|\rho|q+y\rangle\,\dif y\ ,
\label{eq:wigner}
\end{align}
normalized to unity over phase space. For an arbitrary Fock state $\ket n$, the Wigner function evaluates explicitly to
\begin{align}
W_n(\boldsymbol r)=\pi^{-1}(-1)^nL_n(2|\boldsymbol r|^2)e^{-|\boldsymbol r|^2}\ ,
\end{align}
which for the vacuum reduces to the isotropic Gaussian distribution $W_0(\boldsymbol r)=\pi^{-1}e^{-|\boldsymbol r|^2}$. Here, $L_n$ denotes the Laguerre polynomial of degree $n$. The rotational symmetry shared by all Fock states highlights the radial distance as the natural geometric coordinate for structuring phase-space coarse graining.

\begin{figure}[t]
\centering
\includegraphics[width=\columnwidth]{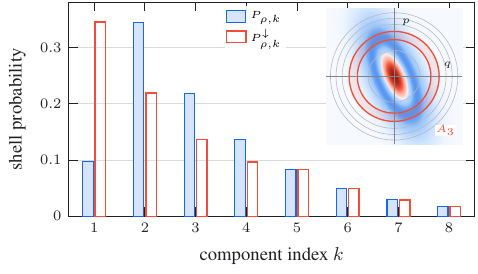}
\caption{Equal-area radial shells at $\Delta=3/2$ for a Wigner-negative, rotated squeezed one-photon state with squeezing magnitude $\varsigma=0.45$. The inset shows $W_\rho(q,p)$ and the shell boundaries, with $A_3$ highlighted. Blue filled bars show the first eight shell weights $P_{\rho,k}$ in radial order; coral open bars show their non-increasing rearrangement $P_{\rho,k}^{\downarrow}$. Here, the negative region lies entirely in $A_1$, whose weight is $P_{\rho,1}=0.096834>0$; $W_\rho$ is non-negative on every later shell, so the complete shell vector is componentwise non-negative.}
\label{fig:shell-construction}
\end{figure}

The harmonic oscillator Hamiltonian $\hat H=(\hat q^2+\hat p^2)/2=\hat n+1/2$ naturally identifies the radial variable $\xi=|\boldsymbol r|^2$ as an energy coordinate, where an interval of width $d$ in $\xi$ spans phase-space area $\pi d$ and corresponds to an energy window $d/2$. Let $\Pi:0=r_0<r_1<r_2<\cdots$ with $r_k\to\infty$ be a complete radial partition whose squared-radius increments $d_k:=r_k^2-r_{k-1}^2$ satisfy $\sup_k d_k<\infty$. The associated concentric shell $A_k(\Pi):=\{\boldsymbol r:r_{k-1}\le|\boldsymbol r|<r_k\}$ has phase-space area $\pi d_k$ and contains the integrated Wigner weight
\begin{align}
P_{\rho,k}(\Pi):=\int_{A_k(\Pi)}W_\rho(\boldsymbol r)\,\dif\boldsymbol r\ .
\label{eq:general-shell-weights}
\end{align}
For any phase-space domain $A$ of finite area, the corresponding region operator $\widehat R_A$ is defined by $\Tr[\rho\widehat R_A]:=\int_AW_\rho(\boldsymbol r)\,\dif\boldsymbol r$ for every state $\rho$ \cite{bracken99,ellinas06,ellinas08}, yielding $P_{\rho,k}(\Pi)=\Tr[\rho\widehat R_{A_k(\Pi)}]$. These region operators are not generally positive and therefore do not define a positive operator-valued measure. Interpreting the shell-weight vector $\boldsymbol P_\rho(\Pi):=(P_{\rho,1},P_{\rho,2},\ldots)$ as a probability vector requires componentwise non-negativity, a condition that depends on both the state and the chosen partition. \cref{fig:shell-construction} illustrates this feature for a rotated squeezed single-photon state: although its Wigner function exhibits pronounced local negativity near the origin, choosing equal-area shells with width $\Delta=3/2$ in $\xi$ yields an integrated shell vector that is componentwise strictly positive.

Integrating $W_0$ over the $k$th shell of the partition $\Pi$ gives
\begin{align}
P_{0,k}(\Pi)=e^{-r_{k-1}^2}-e^{-r_k^2}\ .
\label{eq:vacuum-general-weights}
\end{align}
Strict monotonic radial decay ensures that the integral of $W_0$ over any phase-space region of fixed area is uniquely maximized when that region is an origin-centered disk. We use these disk integrals to construct the majorization benchmark below from the total areas of finite selections of shells.

\textit{Vacuum majorization for radial partitions}---The theory of majorization provides a natural framework for quantifying phase-space localization by imposing a preorder on discrete probability distributions: for two probability vectors $\boldsymbol x$ and $\boldsymbol y$, one writes $\boldsymbol x\prec\boldsymbol y$ if their non-increasing rearrangements satisfy $\sum_{j=1}^m x_j^\downarrow\le\sum_{j=1}^m y_j^\downarrow$ for all $m\ge1$, with equality holding for the total sum. To bound the combined weight of any $m$ shells, we introduce the supremum of their total phase-space area divided by $\pi$,
\begin{align}
D_m(\Pi):=\sup_{\substack{I\subset\mathbb N\\|I|=m}}\sum_{i\in I}d_i\ ,
\quad D_0(\Pi):=0\ .
\label{eq:Dm-main}
\end{align}
Bounded increments $\sup_k d_k<\infty$ ensure that each $D_m(\Pi)$ is finite, while completeness makes the sequence grow without bound. These values form a concave sequence and determine the vacuum benchmark vector
\begin{align}
\label{eq:benchmark}
\boldsymbol P_0(\Pi^\downarrow):=(P_{0,1}(\Pi^\downarrow),P_{0,2}(\Pi^\downarrow),\ldots,P_{0,m}(\Pi^\downarrow),\ldots)\ ,
\end{align}
with components $P_{0,m}(\Pi^\downarrow):=e^{-D_{m-1}(\Pi)}-e^{-D_m(\Pi)}$ that decrease monotonically with $m$ by concavity of $D_m(\Pi)$.

Consequently, $\boldsymbol P_0(\Pi^\downarrow)$ forms a non-increasing probability vector whose cumulative sums satisfy
\begin{align}
\sum_{j=1}^m P_{0,j}(\Pi^\downarrow)=1-e^{-D_m(\Pi)}\ ,
\end{align}
matching the exact integral of the vacuum distribution $W_0$ over an origin-centered disk of area $\pi D_m(\Pi)$. The following theorem establishes this vector as a universal majorization benchmark whenever the shell weights are non-negative.

\begin{theorem}
\label{thm:radial-majorization}
Let $\rho$ be any single-mode quantum state and let $\Pi$ be a complete radial partition with bounded increments ($\sup_k d_k<\infty$). If $P_{\rho,k}(\Pi)\ge0$ for every $k$, then $\boldsymbol P_\rho(\Pi)$ is a normalized probability vector majorized by the vacuum benchmark defined in Eq.~\eqref{eq:benchmark}:
\begin{align}
\label{eq:general-Lorenz-main}
\boldsymbol P_\rho(\Pi)\prec\boldsymbol P_0(\Pi^\downarrow)\ .
\end{align}
Equivalently, for every integer $m\ge1$,
\begin{align}
\sum_{j=1}^{m}P_{\rho,j}^{\downarrow}(\Pi)\le 1-e^{-D_m(\Pi)}\ .
\end{align}
\end{theorem}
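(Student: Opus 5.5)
The plan is to reduce the statement to a single inequality about the vacuum-weighted measure of rotationally symmetric regions, and then to verify that inequality state by state in the Fock basis. First, normalization: since $\Pi$ is complete, the shells $A_k(\Pi)$ partition the plane up to a null set. Hence $\sum_k P_{\rho,k}(\Pi)=\int W_\rho=1$, and together with the hypothesis $P_{\rho,k}\ge0$ this makes $\boldsymbol P_\rho(\Pi)$ a probability vector. The total-sum condition in $\prec$ holds because both vectors sum to one, using $D_m(\Pi)\to\infty$. It then suffices to prove the cumulative bounds. Since the entries are non-negative, $\sum_{j\le m}P^\downarrow_{\rho,j}$ equals the supremum of $\sum_{i\in I}P_{\rho,i}$ over index sets with $|I|=m$; the supremum is attained because the weights tend to zero. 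So I will show, for every finite $I$ with $|I|=m$,
\begin{align}
\int_{A_I}W_\rho(\boldsymbol r)\,\dif\boldsymbol r\le 1-e^{-S_I}\ ,\qquad A_I:=\bigcup_{i\in I}A_i(\Pi)\ ,\quad S_I:=\sum_{i\in I}d_i\le D_m(\Pi)\ .
\end{align}
Monotonicity of $1-e^{-s}$ then gives the theorem. Note that this inequality does not use the positivity hypothesis at all.

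Second, reduction to Fock states. The set $A_I$ is rotation invariant, so the region operator $\widehat R_{A_I}$ commutes with $\hat n$ and is diagonal in the Fock basis. Its eigenvalues are $\lambda_n(A_I)=\int_{A_I}W_n$. Therefore $\Tr[\rho\widehat R_{A_I}]=\sum_n\langle n|\rho|n\rangle\,\lambda_n(A_I)$, a convex combination, and it suffices to prove $\lambda_n(A_I)\le1-e^{-S_I}$ for each $n$.

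Third, pass to the radial variable $\xi=|\boldsymbol r|^2$. Writing $E_I\subset[0,\infty)$ for the union of the intervals $[r_{i-1}^2,r_i^2)$, which has Lebesgue measure $S_I$, we get
\begin{align}
\lambda_n(A_I)=\int_{E_I}f_n(\xi)\,\dif\xi\ ,\qquad f_n(\xi):=(-1)^nL_n(2\xi)e^{-\xi}\ .
\end{align}
By the bathtub principle, the left side is at most $\int_0^{S_I}f_n^{+*}(\xi)\,\dif\xi$, where $f_n^{+*}$ is the decreasing rearrangement of the positive part of $f_n$. The claim is thus reduced to a one-dimensional continuous majorization statement: the decreasing rearrangement of $f_n^+$ satisfies $\int_0^s f_n^{+*}\le 1-e^{-s}=\int_0^s e^{-\xi}\,\dif\xi$ for all $s\ge0$. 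For $n=0$ this holds with equality.

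\textbf{Main obstacle.} This last Fock-level inequality is the step I expect to be hardest. Two consequences come cheaply: the pointwise bound $|L_n(2\xi)|e^{-\xi}\le1$ gives $f_n^{+*}\le1$, which is the right behaviour as $s\to0$, and $\int f_n=1$ controls large $s$. Neither suffices at intermediate $s$.

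My first attempt will use the interpretation of $\lambda_n$ on an interval $[a,b]$ as the exact value $\lambda_n=\int_a^b f_n$ and exploit the lobe structure of $f_n$: its sign changes occur at the Laguerre zeros, and its lobes have decreasing amplitude. The aim is to show that the optimal set consists of the largest positive lobes, and that their cumulative weight is dominated by $1-e^{-s}$. To make this quantitative I would use the Laplace transform, $\int_0^\infty e^{-t\xi}f_n(\xi)\,\dif\xi=(t-1)^n/(t+1)^{n+1}$. This gives complete-monotonicity-type comparisons against $e^{-\xi}$, which I would combine with a Karamata/Hardy–Littlewood argument on the level sets of $f_n^+$.

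If that fails, the fallback is to bound $\widehat R_{E}$ operatorially. I would seek a positive operator $B_s$ with $\Tr[\rho B_s]\le 1-e^{-s}$, for instance a Husimi-type or Lieb–Solovej disk bound applied after a Gaussian smoothing, such that $\widehat R_{A_I}\le B_{S_I}$.
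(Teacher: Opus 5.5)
\textbf{There is a genuine gap.} Your remark that the subset inequality ``does not use the positivity hypothesis at all'' is where the proof breaks. Without the hypothesis $P_{\rho,k}(\Pi)\ge0$ that inequality is false, and so is the Fock-level statement you reduce it to.

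Take $\rho=\ket{1}\bra{1}$, so $f_1(\xi)=(2\xi-1)e^{-\xi}$ and $\int_a^bf_1\,\dif\xi=(1+2a)e^{-a}-(1+2b)e^{-b}$. Use the admissible partition with $r_1^2=1/2$, $r_2^2=9/2$, and unit widths afterwards. The single shell $I=\{2\}$ has $S_I=4$, yet
\[
\lambda_1(A_2)=2e^{-1/2}-10e^{-9/2}\approx1.102>1-e^{-4}\ .
\]
This is not special to $n=1$. For every $n\ge1$ the profile $f_n$ has negative lobes, so $\int_0^\infty f_n^+\,\dif\xi=1+\int_0^\infty f_n^-\,\dif\xi>1$. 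Your target $\int_0^sf_n^{+*}\,\dif\xi\le1-e^{-s}$ therefore fails for large $s$ for every non-vacuum Fock state, and no lobe analysis or Laplace-transform argument can establish it.

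The operator fallback fails for the same reason. A positive $B_s$ with $\Tr[\rho B_s]\le1-e^{-s}$ for all states has norm below one, whereas $\widehat R_{A_2}$ above has an eigenvalue exceeding one.

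The information is lost in your convex-combination step. The state $\rho=\tfrac12(\ket{0}\bra{0}+\ket{1}\bra{1})$ has radial profile $\xi e^{-\xi}\ge0$ and obeys the theorem on this partition. Its component $\ket{1}\bra{1}$, however, has the negative first-shell weight $1-2e^{-1/2}$, and that negative weight is exactly what lets its later shell exceed the bound.

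\textbf{What is needed.} You need a per-Fock statement that is true and survives mixing, with positivity then imposed on $\rho$ itself. The paper uses a tail inequality: $e^{\xi}\mathcal T_n(\xi)$ is nondecreasing, because its derivative $2\sum_{j<n}(-1)^jL_j(2\xi)$ is, by the Mehler formula, a positive combination of squared Hermite polynomials. Averaging over Fock populations gives $\mathcal T_\rho(\xi+d)\ge e^{-d}\mathcal T_\rho(\xi)$ for every state. Equivalently, $P_{\rho,k}\le(1-e^{-d_k})T_k$ with $T_k=\mathcal T_\rho(r_{k-1}^2)$.

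Only then does the hypothesis enter. For selected indices $i_1<\dots<i_m$, nonnegativity of all shell weights gives $T_{i_\ell}\le1-R_{\ell-1}$, and induction yields $R_m\le1-e^{-\sum_\ell d_{i_\ell}}$. In your counterexample this is precisely the step that fails, since $T_2=\mathcal T_1(1/2)=2e^{-1/2}>1$.

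\textbf{A smaller point.} For an arbitrary state, $W_\rho$ need not be absolutely integrable; a box-shaped wave function gives $W$ decaying like $1/|p|$. So $\sum_kP_{\rho,k}=\int W_\rho=1$ is not available. The partial sums are disk integrals $1-\mathcal T_\rho(r_k^2)$. The paper shows they tend to one via a Laplace--Abel argument that again uses $P_{\rho,k}\ge0$ together with the tail inequality.
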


The proof rests on a geometric tail inequality in phase space. For the origin-centered open disk $D_\xi:=\{\boldsymbol r:|\boldsymbol r|^2<\xi\}$, we define the exterior tail weight $\mathcal T_\rho(\xi):=\Tr[\rho(\mathbb I-\widehat R_{D_\xi})]$, which determines individual shell weights through finite differences $P_{\rho,k}(\Pi)=\mathcal T_\rho(r_{k-1}^2)-\mathcal T_\rho(r_k^2)$. Rotational invariance of the disk removes all off-diagonal Fock contributions from the integral, giving $\mathcal T_\rho(\xi)=\sum_{n\ge0}\rho_n\mathcal T_n(\xi)$ with populations $\rho_n=\langle n|\rho|n\rangle$. For each Fock state $\ket n$, the Mehler expansion \cite[Eq.~18.18.28]{dlmf24} reveals that the rescaled tail $e^\xi\mathcal T_n(\xi)$ is nondecreasing on $[0,\infty)$. Positivity and unit trace of physical density operators ($\rho_n\ge0$, $\sum_n\rho_n=1$) directly preserve this monotonicity for arbitrary states, yielding the tail inequality
\begin{align}
\mathcal T_\rho(\xi+d)\ge e^{-d}\mathcal T_\rho(\xi),
\qquad \xi\ge0,\quad d>0,
\label{eq:state-tail-main}
\end{align}
saturated identically by the vacuum.

Non-negativity of all shell weights makes the sequence $\mathcal T_\rho(r_k^2)$ non-increasing. For a complete partition with bounded increments $\sup_k d_k<\infty$, combining the tail inequality with a Laplace-transform argument gives $\lim_{k\to\infty}\mathcal T_\rho(r_k^2)=0$ and establishes unit normalization $\sum_{k=1}^\infty P_{\rho,k}(\Pi)=1$ without requiring compact support in phase space. Applying Eq.~\eqref{eq:state-tail-main} inductively to any collection of $m$ shells with ordered indices $i_1<\cdots<i_m$ bounds their cumulative weight by the integral of $W_0$ over a central disk of the same total area,
\begin{align}
\sum_{\ell=1}^mP_{\rho,i_\ell}(\Pi)\le1-e^{-\sum_{\ell=1}^m d_{i_\ell}}\le1-e^{-D_m(\Pi)}\ .
\label{eq:subset-sum-main}
\end{align}
Taking the supremum over all index choices of size $m$ proves Eq.~\eqref{eq:general-Lorenz-main}. Pointwise Wigner positivity is not required, since negative values within individual shells are compatible with majorization whenever every integrated shell weight is non-negative. The complete analytical derivation is provided in the Supplemental Material.

\begin{figure}[t]
\centering
\includegraphics[width=\columnwidth]{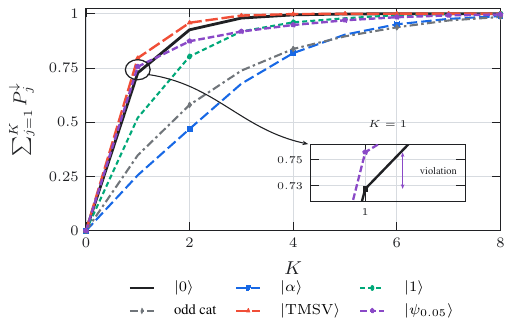}
\caption{Ordered cumulative shell weights $\sum_{j=1}^m P_{\rho,j}^\downarrow$ at resolution $\Delta=1.3$. Coherent ($\alpha=1.5$), single-photon Fock ($\ket1$), and odd cat (amplitude $1.1$) states lie strictly below the vacuum Lorenz curve, with the latter two remaining Wigner-negative yet admitting non-negative shell weights. For the two entangled states discussed below, the shell probabilities of the joint EPR outcomes exceed the vacuum benchmark at $m=1$. The inset resolves the violation for $\ket{\psi_{0.05}}=\sqrt{0.95}\ket{0,0}+\sqrt{0.05}\ket{2,2}$.}
\label{fig:lorenz-comparison}
\end{figure}

The equal-area partition $\Pi_\Delta$ has boundaries $r_k^2=k\Delta$ with resolution parameter $\Delta>0$. With the corresponding shells and weights denoted by $A_k(\Delta):=\{\boldsymbol r:(k-1)\Delta\le|\boldsymbol r|^2<k\Delta\}$ and $P_{\rho,k}(\Delta):=P_{\rho,k}(\Pi_\Delta)$, each shell spans an oscillator-energy window $\Delta/2$. At $\Delta=2$, the shells have area $2\pi$ and unit energy width, corresponding to the Planck--Bohr--Sommerfeld bands \cite{schleich88,lewis16}. For equal-area shells, $D_m(\Pi_\Delta)=m\Delta$, and the vacuum benchmark reduces to the geometric probability vector
\begin{align}
\boldsymbol P_0(\Delta)
=(1-e^{-\Delta})(1,e^{-\Delta},e^{-2\Delta},\ldots)\ .
\label{eq:pvac}
\end{align}
Specializing Theorem~\ref{thm:radial-majorization} to this uniform geometry yields the following corollary.

\Needspace{7\baselineskip}
\begin{corollary}
\label{cor:equal-area}
Let $\rho$ be any single-mode state and let $\Delta>0$. If $P_{\rho,k}(\Delta)\ge0$ for every $k$, then $\boldsymbol P_\rho(\Delta)$ is a normalized probability vector satisfying
\begin{align}
\boldsymbol P_\rho(\Delta)\prec\boldsymbol P_0(\Delta)\ .
\label{eq:main}
\end{align}
Equivalently, for every $m\ge1$,
\begin{align}
\sum_{j=1}^{m}P_{\rho,j}^{\downarrow}(\Delta)
\le 1-e^{-m\Delta}\ .
\end{align}
\end{corollary}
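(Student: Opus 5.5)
The corollary is the special case of Theorem~\ref{thm:radial-majorization} for the equal-area partition $\Pi_\Delta$, with boundaries $r_k^2=k\Delta$. The plan is to check that $\Pi_\Delta$ meets the theorem's hypotheses, evaluate $D_m(\Pi_\Delta)$, and identify the benchmark $\boldsymbol P_0(\Pi_\Delta^\downarrow)$ with the geometric vector of Eq.~\eqref{eq:pvac}.

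\emph{Hypotheses.} Since $r_k=\sqrt{k\Delta}\to\infty$, the partition $\Pi_\Delta$ is complete. Every increment is $d_k=\Delta$, so $\sup_k d_k=\Delta<\infty$. Together with the assumption $P_{\rho,k}(\Delta)=P_{\rho,k}(\Pi_\Delta)\ge0$ for all $k$, Theorem~\ref{thm:radial-majorization} applies. It gives that $\boldsymbol P_\rho(\Delta)$ is a normalized probability vector and that $\sum_{j=1}^m P^\downarrow_{\rho,j}(\Delta)\le 1-e^{-D_m(\Pi_\Delta)}$ for every $m\ge1$.

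\emph{Benchmark.} Any index set $I$ with $|I|=m$ has $\sum_{i\in I}d_i=m\Delta$, so the supremum in Eq.~\eqref{eq:Dm-main} is attained and $D_m(\Pi_\Delta)=m\Delta$. This gives $D_0=0$ and $P_{0,m}(\Pi_\Delta^\downarrow)=e^{-(m-1)\Delta}-e^{-m\Delta}=(1-e^{-\Delta})e^{-(m-1)\Delta}$, which is exactly the $m$th component of Eq.~\eqref{eq:pvac}. The bound from the theorem therefore reads $\sum_{j=1}^m P^\downarrow_{\rho,j}(\Delta)\le 1-e^{-m\Delta}$.

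\emph{Equivalence of the two forms.} The vector $\boldsymbol P_0(\Delta)$ is already non-increasing, so its cumulative sums are $1-e^{-m\Delta}$. Both vectors have total sum $1$. The majorization $\boldsymbol P_\rho(\Delta)\prec\boldsymbol P_0(\Delta)$ is therefore equivalent to the family of partial-sum inequalities above.

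There is no real obstacle; all the analytic content (the tail inequality and normalization) is inherited from Theorem~\ref{thm:radial-majorization}. The only points needing care are that the supremum defining $D_m$ is trivially attained for constant increments, and that the benchmark needs no rearrangement.
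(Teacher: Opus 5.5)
Your proposal is correct and follows the paper's own route: the paper obtains the corollary by specializing Theorem~\ref{thm:radial-majorization} to the equal-area partition $\Pi_\Delta$. As you do, it notes that $D_m(\Pi_\Delta)=m\Delta$, so the benchmark $\boldsymbol P_0(\Pi_\Delta^\downarrow)$ reduces to the geometric vector in Eq.~\eqref{eq:pvac}.
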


\begingroup\microtypesetup{expansion=true,protrusion=true}
Notably, these shell-majorization relations hold more broadly for weights and vacuum benchmarks evaluated in the same Cahill--Glauber $s$-ordered representation with $s\in[-1,0]$, smoothly connecting the Husimi $Q$ representation ($s=-1$) with the Wigner function ($s=0$). In what follows, we use the Wigner relation in Eq.~\eqref{eq:main}, for which the oscillator frame specifies the shell geometry and $\Delta$ sets the radial resolution.
\par\endgroup

The majorization relation in Eq.~\eqref{eq:main} is strict for every non-vacuum single-mode Gaussian pure state. In continuous phase space, area-preserving affine symplectic transformations render all Gaussian pure states level-equivalent to the vacuum \cite{vanherstraeten23}. Anchoring the radial partition to a fixed oscillator frame makes the integrated weights sensitive to displacement and squeezing, lifting the degeneracy of continuous majorization. Consequently, every non-vacuum state $\ket\psi=D(\alpha)S(\zeta)\ket0$ obeys $\boldsymbol P_\psi(\Delta)\prec\boldsymbol P_0(\Delta)$ for all $\Delta>0$, whereas the converse relation is strictly excluded.
The proof rules out reverse majorization by comparing the large-$k$ decay of Gaussian shell probabilities with the vacuum geometric progression in Eq.~\eqref{eq:pvac}. The coherent-state curve in \cref{fig:lorenz-comparison} illustrates this strict relation.

Optical loss and amplification raise the question of whether the output generated from a vacuum input remains a concentration benchmark. Existing majorization relations for phase-insensitive Gaussian channels compare output spectra \cite{mari14,jabbour16,vanherstraeten23ladder}. Such spectral comparisons are invariant under phase-space displacements, while radial shell weights depend on displacement relative to the fixed oscillator origin. For any single-mode phase-insensitive Gaussian channel $\Phi$, an arbitrary input $\rho$ satisfies the output-majorization relation
\begin{align}
\boldsymbol P_{\Phi(\rho)}(\Delta)\prec \boldsymbol P_{\Phi(\ket0\bra0)}(\Delta)
\end{align}
whenever the output shell weights remain non-negative (see Supplemental Material).

\textit{Wigner negativity scale}---The non-negativity of the shell weights depends on the phase-space resolution. To characterize how far negative shell weights persist under radial coarse graining, we define the radial negativity scale
\begin{align}
\Delta_\rho^*:=\sup\{\Delta>0:\exists\,k\ge 1,\,P_{\rho,k}(\Delta)<0\}\ ,
\label{eq:deltastar}
\end{align}
with $\Delta_\rho^*=0$ if the underlying set is empty. When this scale is finite, all shell weights are non-negative for every $\Delta>\Delta_\rho^*$, although non-negativity may also hold at smaller resolutions. Unlike the continuous negativity volume \cite{kenfack04}, which measures the total negative weight, $\Delta_\rho^*$ characterizes the persistence of negativity under radial averaging. This scale is finite for Gaussian-unitary images of finite Fock superpositions, finite mixtures of such states, and even and odd coherent cat states. It can nevertheless be infinite even for finite-energy states, as shown in the Supplemental Material.

For highly excited Fock states $\ket n$, this scale reveals a universal semiclassical crossover in phase space. Inside the classically allowed region $\xi < \xi_n = 2n+1$, quantum interference produces rapid oscillations with negative values, while the Wigner function decays exponentially in the classically forbidden region. The asymptotic relation $\Delta_n^*/2\sim E_n=n+1/2$ shows that the resolution beyond which all shell weights remain non-negative grows with the oscillator energy. Near the turning boundary, uniform Laguerre--Airy asymptotics approximate the first-shell weight by the Airy integral $\mathcal I_{\rm Ai}(t):=\int_{-\infty}^{t}\mathrm{Ai}(u)\,\dif u$ on the characteristic scale $\ell_n:=2^{-1/3}(4n+2)^{1/3}$ \cite{berry77,hanin20}. Here $t=(\Delta-\xi_n)/\ell_n$ measures the displacement of the first-shell boundary from the classical turning point in units of $\ell_n$. The largest zero of this integral, $t_0\simeq-1.38418$, determines the leading correction to the classical boundary,
\begin{align}
\Delta_n^*=\xi_n+t_0\ell_n+O(n^{-1/3})\ ,\quad n\to\infty\ .
\label{eq:airy-law}
\end{align}
The classical orbital energy thus dictates the leading linear growth, while quantum corrections are governed by universal Airy-edge scaling. As shown in \cref{fig:negativity-airy}, exact numerical evaluations rapidly converge to this asymptotic scaling law.

\begin{figure}[!t]
\centering
\includegraphics[width=\columnwidth]{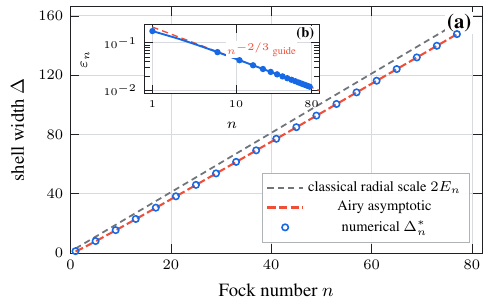}
\caption{Fock-state radial negativity scale $\Delta_n^*$. (a) Numerically evaluated finite-$n$ values of $\Delta_n^*$ (blue circles), the asymptotic Airy expansion in Eq.~\eqref{eq:airy-law} (coral dashed curve), and the classical turning boundary $\xi_n=2n+1$ (gray dashed line). (b) Normalized residual $\varepsilon_n=[\Delta_n^*-(\xi_n+t_0\ell_n)]/\ell_n$, consistent with the predicted $O(n^{-2/3})$ decay.}
\label{fig:negativity-airy}
\end{figure}

\begingroup\microtypesetup{expansion=true,protrusion=true}
\textit{Coarse-grained entanglement criterion}---Combining equal-area majorization with partial transposition yields an entanglement criterion that can be evaluated directly from radially binned joint homodyne outcomes. In a dual-homodyne arrangement, mixing two modes at a balanced beam splitter allows the simultaneous detection of the commuting EPR quadratures $\hat u=(\hat q_1-\hat q_2)/\sqrt2$ and $\hat v=(\hat p_1+\hat p_2)/\sqrt2$ \cite{braunstein98,leonhardt97}. Radial binning of paired outcomes $(u,v)$ into concentric shells $(k-1)\Delta\le u^2+v^2<k\Delta$ directly samples discrete shell probabilities, bypassing any reconstruction of density operators or continuous phase-space distributions.
\par\endgroup

Under partial transposition of the second mode, the joint EPR probability density satisfies $F_\rho(u,v)=W_{\tau_\rho}(u,v)$, where $\tau_\rho=\Tr_+(U_{\rm BS}\rho_{12}^\Gamma U_{\rm BS}^\dagger)$ \cite{peres96,horodecki96}. Here, $\Tr_+$ traces out the symmetric output mode of the balanced beam splitter, leaving $\tau_\rho$ as a single-mode operator. For every separable state $\rho_{12}$, the positivity of the partial transpose ($\rho_{12}^\Gamma\ge0$) ensures that $\tau_\rho$ is a physical density operator. Because $F_\rho$ is a joint measurement probability density, its shell weights are non-negative, and Corollary~\ref{cor:equal-area} gives the following entanglement criterion.

\begin{corollary}
\label{cor:epr-criterion}
For any separable two-mode state $\rho_{12}$ and any resolution $\Delta>0$, the shell-probability vector obtained from the EPR distribution $F_\rho$ satisfies
\begin{align}
\boldsymbol P_{\tau_\rho}(\Delta)\prec\boldsymbol P_0(\Delta)
\ .
\label{eq:epr-criterion}
\end{align}
Any violation of this majorization relation certifies entanglement.
\end{corollary}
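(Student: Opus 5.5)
The plan is to reduce Corollary~\ref{cor:epr-criterion} to Corollary~\ref{cor:equal-area} applied to the single-mode operator $\tau_\rho$. This needs three ingredients: an identity relating the EPR distribution to a Wigner function, a proof that $\tau_\rho$ is a genuine state for separable inputs, and a check of the non-negativity hypothesis on the shell weights.

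\emph{Step 1: EPR distribution as a Wigner function.} First I will establish $F_\rho(u,v)=W_{\tau_\rho}(u,v)$.
\begin{itemize}
\item Partial transposition of mode 2 acts on the two-mode Wigner function as $p_2\mapsto -p_2$, so $W_{\rho^\Gamma}(q_1,p_1,q_2,p_2)=W_\rho(q_1,p_1,q_2,-p_2)$.
\item The balanced beam splitter acts as a symplectic rotation of phase-space coordinates, $W_{U\sigma U^\dagger}(\boldsymbol r)=W_\sigma(S^{-1}\boldsymbol r)$. The partial trace over the symmetric output integrates out its quadrature pair.
\item Combining these, $W_{\tau_\rho}(u,v)$ is the marginal of $W_\rho$ in which $u=(q_1-q_2)/\sqrt2$ and $v=(p_1+p_2)/\sqrt2$ are held fixed and the conjugate combinations are integrated out.
\item Since $\hat u$ and $\hat v$ commute, this marginal is exactly the joint probability density of their measured outcomes. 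The standard Wigner-marginal property for commuting linear quadratures gives this directly.
\end{itemize}
The identity holds for any $\rho_{12}$, because $\rho^\Gamma$ is always Hermitian with unit trace.

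\emph{Step 2: $\tau_\rho$ is a state for separable inputs.} For a separable $\rho_{12}$, we have $\rho_{12}^\Gamma=\sum_i p_i\,\sigma_i\otimes\omega_i^{T}\ge0$ with unit trace. A unitary conjugation followed by a partial trace is a CPTP map, so $\tau_\rho$ is a legitimate single-mode density operator.

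\emph{Step 3: shell weights are non-negative.} The shell weights of $\tau_\rho$ are $P_{\tau_\rho,k}(\Delta)=\int_{A_k(\Delta)}F_\rho\,\dif u\,\dif v$. These are probabilities of the events $\{(k-1)\Delta\le u^2+v^2<k\Delta\}$ under the joint measurement of $\hat u,\hat v$, so they are $\ge0$. Here the non-negativity comes directly from the measurement and not from any Wigner positivity.

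\emph{Step 4: conclusion.} With Steps 2 and 3 the hypotheses of Corollary~\ref{cor:equal-area} hold for $\tau_\rho$, which yields $\boldsymbol P_{\tau_\rho}(\Delta)\prec\boldsymbol P_0(\Delta)$. The entanglement certification is the contrapositive: if the measured shell vector violates the relation, then $\rho_{12}$ cannot be separable. Strictly, the contrapositive shows $\rho^\Gamma\not\ge0$, i.e.\ NPT, which implies entanglement.

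\emph{Main obstacle.} The delicate point is Step 1, specifically getting the signs and conventions right. Which output port is "symmetric," and whether the transposition flips $p_2$ so that $p_1+p_2$ (rather than $p_1-p_2$) appears, must be fixed consistently, together with the Jacobian normalization of the rotated coordinates. I would verify the identity by computing characteristic functions: the characteristic function of the joint $(\hat u,\hat v)$ distribution is $\Tr[\rho\,e^{i(a\hat u+b\hat v)}]$. I would then check that it equals the Weyl-symbol characteristic function of $\tau_\rho$ evaluated at $(a,b)$, using $\Tr[\rho^\Gamma X]=\Tr[\rho X^\Gamma]$ and the fact that partial transposition maps the relevant displacement operator to one with $p_2$ sign-flipped.
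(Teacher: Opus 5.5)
Your proposal is correct and follows the paper's proof essentially step for step. The paper also writes $F_\rho$ as the $(q_+,p_-)$-marginal of $W_{\rho_{12}}$ in the balanced coordinates. It then uses momentum reflection $p_2\mapsto-p_2$, which interchanges $p_+$ and $p_-$, to get $F_\rho=W_{\tau_\rho}$, and obtains $\tau_\rho\ge0$ from $\rho_{12}^\Gamma\ge0$ for separable inputs. Finally it treats the shell weights as measurement probabilities, invokes Corollary~\ref{cor:equal-area}, and reads a violation as NPT entanglement.

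One minor remark: your aside that the identity holds for every $\rho_{12}$ because $\rho^\Gamma$ has unit trace is not needed, and it is delicate in infinite dimensions. For $\sum_n c_n\ket{n,n}$ with $\sum_n|c_n|=\infty$, the partial transpose is not trace class. It is therefore best to state Step~1 only for PPT inputs, which is all the corollary uses.
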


For a two-mode squeezed vacuum, both EPR quadratures have variance $\nu=e^{-2\zeta}/2$. The resulting first-shell probability $1-e^{-\Delta/(2\nu)}$ exceeds the vacuum benchmark $1-e^{-\Delta}$ for every $\zeta>0$ and $\Delta>0$, giving an infimum detectable squeezing of $\zeta_{\min}=0$ at every positive shell width. As shown in \cref{fig:epr-nongaussian-benchmarks}(a), the criteria of Tasca et al.\ \cite{tasca13} and G{\"a}rttner et al.\ \cite{garttner23} give positive detection thresholds when evaluated using Cartesian bins of the same phase-space area.

\Needspace{7\baselineskip}
\begingroup\microtypesetup{expansion=true,protrusion=true}
\predisplaypenalty=10000\postdisplaypenalty=10000
The criterion also provides an operational witness for non-Gaussian entanglement. For the Schmidt-correlated state $\ket{\psi_p}=\sqrt{1-p}\ket{0,0}+\sqrt p\ket{2,2}$, the second moments do not violate the Mancini--Giovannetti--Vitali--Tombesi (MGVT) variance criterion \cite{mancini02,giovannetti03}. The EPR marginal distributions satisfy the Saboia--Toscano--Walborn (STW) entropic bound \cite{saboia11} because they coincide with the conjugate-quadrature distributions of the same physical single-mode state (see Supplemental Material). The balanced G{\"a}rttner--Haas--Noll (GHN) criterion detects entanglement only for $p\in(0,16/41)$ \cite{garttner23}. In contrast, violation of the vacuum bound on the first-shell probability certifies entanglement in the region $0<p<p_*(\Delta)$ shown in \cref{fig:epr-nongaussian-benchmarks}(b), with boundary
\begin{align}
p_*(\Delta)=
\frac{(2-\Delta)^2}
{(2-\Delta)^2+\Delta^2(2-\Delta+\Delta^2/4)^2}
\label{eq:schmidt-shell-boundary-main}
\end{align}
for $0<\Delta<2$. Since $p_*(\Delta)\to1$ as $\Delta\to0$, every fixed state with $0<p<1$ is detected at sufficiently small $\Delta$. More generally, for every entangled Schmidt-correlated pure state with finite Fock support, suitable local phase rotations yield a violation of the first-shell bound at sufficiently fine resolution (see Supplemental Material for related R\'enyi shell-entropy and moment inequalities).
\par\endgroup

\begin{figure}[t]
\centering
\includegraphics[width=\columnwidth]{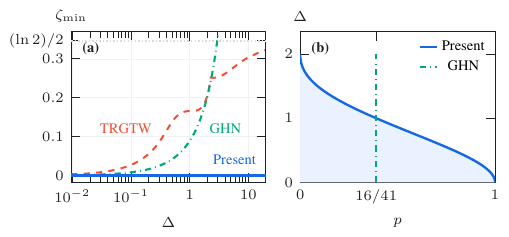}
\caption{EPR entanglement benchmarks. (a) Infimum detectable squeezing versus resolution $\Delta$. The infimum $\zeta_{\min}=0$ (solid blue) persists across all $\Delta$, outperforming the bounds of Tasca et al.\ (TRGTW) \cite{tasca13} (coral dashed) and G{\"a}rttner et al.\ \cite{garttner23} (green dash-dotted, matched area $\Delta=\delta^2/(2\pi)$). Here $\delta$ denotes the square-bin width in the coordinate convention of Ref.~\cite{garttner23}. (b) Detection domain for $\ket{\psi_p}=\sqrt{1-p}\ket{0,0}+\sqrt p\ket{2,2}$. The first-shell boundary in Eq.~\eqref{eq:schmidt-shell-boundary-main} (solid blue) extends well beyond the balanced GHN limit $p=16/41$ (green dash-dotted line).}
\label{fig:epr-nongaussian-benchmarks}
\end{figure}

\textit{Conclusions}---Our counterexamples show that Wigner positivity alone does not guarantee continuous majorization by the vacuum. By integrating Wigner functions over radial oscillator-energy shells, we establish a finite-resolution majorization relation with the vacuum as a universal reference for phase-space concentration, and find that the relation applies to all Wigner-positive states and to Wigner-negative states whose shell weights are non-negative. Notably, the same construction is extendable to the Cahill--Glauber family $s\in[-1,0]$. Although continuous majorization treats the Wigner functions of all single-mode Gaussian pure states as equivalent, the vacuum shell-probability vector strictly majorizes that of every other such state for every $\Delta>0$ in a fixed oscillator frame. For Wigner-negative states, the radial negativity scale characterizes how far negative shell weights persist under coarse graining, complementing the negativity volume, which quantifies the total negative weight of the Wigner function \cite{kenfack04}. For highly excited Fock states, half of the radial negativity scale is asymptotic to the oscillator energy, with an Airy correction. We find that after partial transposition, the majorization relation yields a two-mode entanglement criterion directly evaluated from radially binned joint homodyne outcomes without reconstructing a density matrix or continuous phase-space distribution. Whether the vacuum bound extends to multimode shells of fixed total oscillator energy remains open.

\section*{Acknowledgements}

We appreciate the discussions with Ma-Cheng Yang on this subject. This work was supported in part by the National Natural Science Foundation of China (NSFC) under Grants 12475087 and 12235008.

\setlength{\bibsep}{0pt plus 1pt}
\setlength{\bibsep}{0pt}

\onecolumngrid
\clearpage

\begin{center}
{\large\bfseries Supplemental Material for\\
``Radial Coarse Graining Restores Vacuum Majorization in Wigner Phase Space''}\\[0.75em]
Ao-Xiang Liu$^{1}$ and Cong-Feng Qiao$^{1,2}$\\[0.35em]
{\small $^{1}$School of Physical Sciences, University of Chinese Academy of
Sciences, 1 Yanqihu East Rd, Beijing 101408, China}\\[-0.1em]
{\small $^{2}$ICTP-AP, University of Chinese Academy of Sciences,
Beijing 100190, China}
\end{center}

\setcounter{secnumdepth}{1}
\setcounter{section}{0}
\renewcommand{\thesection}{S\arabic{section}}
\renewcommand{\theHsection}{SM.section.\arabic{section}}
\setcounter{equation}{0}
\renewcommand{\theequation}{S\arabic{equation}}
\renewcommand{\theHequation}{SM.equation.\arabic{equation}}
\setcounter{figure}{0}
\renewcommand{\thefigure}{S\arabic{figure}}
\renewcommand{\theHfigure}{SM.figure.\arabic{figure}}
\setcounter{table}{0}
\renewcommand{\thetable}{S\arabic{table}}
\renewcommand{\theHtable}{SM.table.\arabic{table}}
\setcounter{lemma}{0}
\renewcommand{\thelemma}{S\arabic{lemma}}
\small
\setlength{\abovedisplayskip}{5pt plus 2pt minus 2pt}
\setlength{\belowdisplayskip}{5pt plus 2pt minus 2pt}
\setlength{\abovedisplayshortskip}{3pt plus 2pt}
\setlength{\belowdisplayshortskip}{3pt plus 2pt minus 2pt}
\setlength{\jot}{2pt}

\section{Wigner-positive counterexamples to continuous vacuum majorization}
\label{app:continuous-counterexample}

We construct Wigner-positive states that are incomparable with the vacuum
under continuous majorization. The entropy and positive-part calculations below identify the angular
coherence responsible for the failure, while the final part connects these
states to the radial relation in the main text.

\subsection*{Construction and global Wigner positivity}

The construction begins with a weak coherence
between the vacuum and an $n$-photon Fock state. For a fixed integer $n\ge3$ and
$0<\epsilon<1$, the pure state is
\begin{align}
\ket{\psi_{\epsilon,n}}
=\sqrt{1-\epsilon^2}\ket0+\epsilon\ket n\ .
\label{eq:counterexample-pure}
\end{align}
The Wigner matrix elements make the coherence contribution explicit.
In the convention of \cref{eq:wigner}, the Fock operator
$\ket m\bra k$ has, for $k\ge m$, the transform
\begin{align}
W_{\ket m\bra k}(x,\theta)
&=\frac{e^{-x}}{\pi}(-1)^m\sqrt{\frac{m!}{k!}}
(2x)^{(k-m)/2}L_m^{(k-m)}(2x)e^{i(k-m)\theta}\ ,
\label{eq:counterexample-wigner-matrix}
\end{align}
where $L_m^{(k-m)}$ is an associated Laguerre polynomial
\cite{cahill69a,cahill69b}. The reverse matrix element is its complex
conjugate. For $m=0$, the associated Laguerre polynomial equals one.
Combining the two off-diagonal matrix elements with the populations
$1-\epsilon^2$ and $\epsilon^2$ gives the pure-state Wigner function
in the polar coordinates
$x=|\boldsymbol r|^2$ and $\theta=\arg(q+ip)$, is
\begin{align}
W_{\psi_{\epsilon,n}}(x,\theta)
&=\frac{e^{-x}}{\pi}\Big[(1-\epsilon^2)
+\epsilon^2P_n(x)
+2\epsilon\sqrt{1-\epsilon^2}\,\beta_nx^{n/2}\cos(n\theta)\Big]\ ,
\label{eq:counterexample-pure-wigner}
\end{align}
where $P_n(x):=(-1)^nL_n(2x)$ and
$\beta_n:=2^{n/2}/\sqrt{n!}$. The angular term is the Wigner image of the
coherence $\ket0\bra n+\ket n\bra0$. It may make the pure-state Wigner
function negative, but only in a region that moves into the Gaussian tail as
$\epsilon\to0$.

A phase rotation $e^{-i\phi\hat N}$ multiplies the operator
$\ket0\bra n$ by $e^{in\phi}$. Its Hermitian combination therefore
has angular harmonic $n$, while the diagonal populations are radial.
Here $\hat N=\hat a^\dagger\hat a$ is the number operator.

The negative tail can be removed without changing the leading entropy
effect of the coherence. The thermal state with fixed mean photon number
$\nu>0$ is
\begin{align}
\tau_\nu
&=\sum_{k=0}^\infty\frac{\nu^k}{(\nu+1)^{k+1}}\ket k\bra k\ .
\label{eq:counterexample-thermal-state}
\end{align}
Substituting its diagonal Wigner matrix elements and applying
$\sum_{k\ge0}z^kL_k(w)=(1-z)^{-1}e^{-wz/(1-z)}$ at
$z=-\nu/(\nu+1)$ gives
\begin{align}
W_{\tau_\nu}(x)
&=\frac{e^{-x}}{\pi(\nu+1)}
\sum_{k=0}^\infty\left(-\frac{\nu}{\nu+1}\right)^kL_k(2x)
=\frac{e^{-x/(2\nu+1)}}{\pi(2\nu+1)}\ .
\label{eq:counterexample-thermal-wigner}
\end{align}
We consider the mixed state
\begin{align}
\rho_{\epsilon,n}
&=(1-\delta_\epsilon)
\ket{\psi_{\epsilon,n}}\bra{\psi_{\epsilon,n}}
+\delta_\epsilon\tau_\nu\ .
\label{eq:counterexample-mixed}
\end{align}
The thermal weight $\delta_\epsilon$ can be chosen to make
$W_{\rho_{\epsilon,n}}$ non-negative while remaining smaller than every
power of $\epsilon$. We prove this statement next.

The polynomial $P_n$ has a positive leading coefficient. The finite constant
$M_n:=\max\{0,-\min_{x\ge0}P_n(x)\}$ therefore satisfies
$P_n(x)\ge-M_n$. For $n=3$, one has
$P_3(x)=\tfrac43x^3-6x^2+6x-1$ and $M_3=1+\sqrt3$,
obtained at $x=(3+\sqrt3)/2$.
The exact minimum over the angle is $\pi^{-1}e^{-x}B_\epsilon(x)$,
where
\begin{align}
B_\epsilon(x)
&=1+\epsilon^2[P_n(x)-1]
-2\epsilon\sqrt{1-\epsilon^2}\,\beta_nx^{n/2}\ .
\label{eq:counterexample-worst-bracket}
\end{align}
The polynomial lower bound gives
\begin{align}
(1-\epsilon^2)+\epsilon^2P_n(x)
-2\epsilon\sqrt{1-\epsilon^2}\,\beta_nx^{n/2}
\ge1-(1+M_n)\epsilon^2-2\epsilon\beta_nx^{n/2}\ .
\label{eq:counterexample-core-bound}
\end{align}
For sufficiently small $\epsilon$, the right-hand side is positive whenever
$x<x_-(\epsilon)$, where
\begin{align}
x_-(\epsilon)
&:=\left[
\frac{1-(1+M_n)\epsilon^2}{2\epsilon\beta_n}
\right]^{2/n}\ .
\label{eq:counterexample-xminus}
\end{align}
Possible negativity is consequently confined to $x\ge x_-(\epsilon)$, and
the pure-state Wigner function satisfies the global lower bound
\begin{align}
W_{\psi_{\epsilon,n}}(x,\theta)
&\ge-\frac{2\epsilon\beta_n}{\pi}x^{n/2}e^{-x}\ .
\label{eq:counterexample-pure-lower}
\end{align}

The thermal distribution has a broader Gaussian tail. The parameters
$s_\nu:=2\nu+1$ and $c:=1-s_\nu^{-1}>0$ give a convenient sufficient
thermal weight,
\begin{align}
\delta_\epsilon
&:=K\,2\epsilon\beta_ns_\nu
x_-(\epsilon)^{n/2}e^{-cx_-(\epsilon)}\ ,
\qquad K>1\ .
\label{eq:counterexample-delta}
\end{align}
For small enough $\epsilon$, one has
$x_-(\epsilon)>n/(2c)$, and the function
$x^{n/2}e^{-cx}$ decreases throughout $[x_-(\epsilon),\infty)$. It follows
from \cref{eq:counterexample-delta} that
\begin{align}
\delta_\epsilon\frac{e^{-x/s_\nu}}{s_\nu}
&\ge2\epsilon\beta_nx^{n/2}e^{-x}
\qquad\bigl(x\ge x_-(\epsilon)\bigr)\ .
\label{eq:counterexample-thermal-dominates}
\end{align}
On the inner region, both terms in \cref{eq:counterexample-mixed} are
non-negative. On the outer region,
\cref{eq:counterexample-pure-lower,eq:counterexample-thermal-dominates}
show that the thermal contribution dominates the largest possible negative
value of the pure-state contribution. Hence
$W_{\rho_{\epsilon,n}}(\boldsymbol r)\ge0$ over the full phase space.

The identity
$2\epsilon\beta_nx_-(\epsilon)^{n/2}
=1-(1+M_n)\epsilon^2$ reduces \cref{eq:counterexample-delta} to
\begin{align}
\delta_\epsilon
&=Ks_\nu\bigl[1-(1+M_n)\epsilon^2\bigr]
e^{-cx_-(\epsilon)}\ .
\label{eq:counterexample-delta-reduced}
\end{align}
Since $x_-(\epsilon)\to\infty$, this choice satisfies
$\delta_\epsilon\to0$. In particular, $0<\delta_\epsilon<1$ for sufficiently
small $\epsilon$, and \cref{eq:counterexample-mixed} defines a valid quantum
state.
Since $x_-(\epsilon)\asymp\epsilon^{-2/n}$, there is a constant $c'>0$
such that
\begin{align}
\delta_\epsilon
&=O\!\left(e^{-c'\epsilon^{-2/n}}\right)
=o(\epsilon^m)
\qquad\text{for every fixed }m>0\ .
\label{eq:counterexample-delta-asymptotic}
\end{align}
Moreover, $|\ln\delta_\epsilon|=O(\epsilon^{-2/n})$, which implies
$\delta_\epsilon|\ln\delta_\epsilon|=o(\epsilon^m)$ for every fixed
$m>0$. The strict margin $K>1$ also gives
$W_{\rho_{\epsilon,n}}\ge(1-K^{-1})\delta_\epsilon W_{\tau_\nu}>0$.
On the core this follows from positivity of the pure component.
On the exterior, the chosen thermal term is at least $K$ times the
negative envelope in \cref{eq:counterexample-pure-lower}.

For finite parameters, the least admissible thermal weight can be
expressed without the envelope approximation. With
$d_\epsilon(x)=\pi^{-1}e^{-x}[-B_\epsilon(x)]_+$,
the worst-angle condition is
$\delta W_{\tau_\nu}(x)\ge(1-\delta)d_\epsilon(x)$.
It holds at every radius precisely when
\begin{align}
\delta\ge\delta_{\min}(\epsilon;n,\nu)
&=\sup_{x\ge0}
\frac{d_\epsilon(x)}{W_{\tau_\nu}(x)+d_\epsilon(x)}\ .
\label{eq:counterexample-minimal-weight}
\end{align}
The ratio vanishes when the pure Wigner function is non-negative at
every angle. The sufficient choice in \cref{eq:counterexample-delta}
bounds this minimum from above, and is not assumed to attain it.
This distinction allows a smaller weight to be verified directly at a
finite parameter point.

\Cref{fig:counterexample-phase-space-app} shows the angular coherence and a
finite Wigner-positive member of the family with $n=3$.
\begin{figure}[tb]
\centering
\begin{tabular}{@{}c@{\hspace{0.04\textwidth}}c@{}}
\textbf{(a)} & \textbf{(b)}\\[-2pt]
\includegraphics[height=0.23\textwidth]{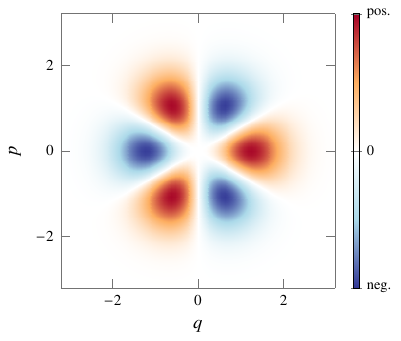}
&
\includegraphics[height=0.23\textwidth]{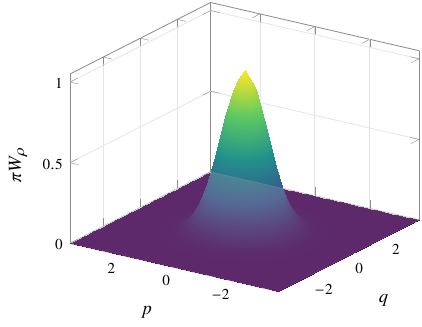}
\end{tabular}
\caption{\label{fig:counterexample-phase-space-app}
Phase-space structure of the $n=3$ counterexample. (a) Interference
contribution proportional to $e^{-x}x^{3/2}\cos(3\theta)$. (b) The
Wigner-positive density $\pi W_{\rho}(q,p)$ for $\epsilon=10^{-2}$,
$\nu=3$, and $\delta=5.16\times10^{-6}$. The grey plane marks $W=0$.}
\end{figure}

\subsection*{Shannon entropy and control of the full phase space}

The entropy decrease must be established for the repaired, non-negative
Wigner function. In particular, the small thermal weight alone does not
justify expanding the logarithm over the full plane, since the ratio of the
thermal distribution to the vacuum grows exponentially with $x$. We first
control this ratio on a growing disk and then estimate the exterior without
a Taylor expansion.

The normalization of $W$ and $W_0=\pi^{-1}e^{-x}$ gives the exact identity
\begin{align}
h(W)-h(W_0)
&=\langle x\rangle_W-1-D(W\Vert W_0)\ ,
\label{eq:counterexample-entropy-identity}
\end{align}
where $h(W)=-\int W\ln W\,\dif^2\boldsymbol r$ and
$D(W\Vert W_0)=\int W\ln(W/W_0)\,\dif^2\boldsymbol r$.
Indeed, substituting $\ln W_0=-x-\ln\pi$ in the definition of $D$
gives \cref{eq:counterexample-entropy-identity}. The moment relation
$\langle x\rangle_W=1+2\Tr(\rho\hat N)$, with
$\hat N=\hat a^\dagger\hat a$, makes the first term explicit for our family,
\begin{align}
\langle x\rangle_{W_{\rho_{\epsilon,n}}}-1
&=2(1-\delta_\epsilon)n\epsilon^2+2\delta_\epsilon\nu
=2n\epsilon^2+o(\epsilon^2)\ .
\label{eq:counterexample-radius-moment}
\end{align}
The same identity follows directly by integrating the finite Laguerre
expansion, which gives
$\int_0^\infty e^{-x}P_n(x)\,\dif x=1$ and
$\int_0^\infty xe^{-x}P_n(x)\,\dif x=2n+1$.

The relative perturbation is
$r_\epsilon=W_{\rho_{\epsilon,n}}/W_0-1$. The scalar function
$\varphi(r)=(1+r)\ln(1+r)-r$, with $\varphi(-1)=1$, is non-negative
for $r\ge-1$. Since $\int W_0r_\epsilon\,\dif^2\boldsymbol r=0$,
\begin{align}
D(W_{\rho_{\epsilon,n}}\Vert W_0)
&=\int W_0\varphi(r_\epsilon)\,\dif^2\boldsymbol r\ .
\label{eq:counterexample-relative-entropy}
\end{align}
For fixed $0<\gamma<2/n$, the growing disk
$x\le X_\epsilon=\epsilon^{-\gamma}$ lies inside the positive core for
sufficiently small $\epsilon$. The thermal correction satisfies
\begin{align}
\sup_{x\le X_\epsilon}
\delta_\epsilon\frac{W_{\tau_\nu}}{W_0}
&\le\frac{\delta_\epsilon}{s_\nu}e^{cX_\epsilon}
=o(\epsilon^m)
\qquad\text{for every fixed }m>0\ ,
\label{eq:counterexample-core-thermal}
\end{align}
because $X_\epsilon=o(\epsilon^{-2/n})$ and
\cref{eq:counterexample-delta-asymptotic} controls the negative exponential
in $\delta_\epsilon$. Expanding only the square-root amplitude in
\cref{eq:counterexample-pure-wigner} now gives
\begin{align}
\frac{W_{\rho_{\epsilon,n}}}{W_0}
&=1+\epsilon u_n(x)\cos(n\theta)
+\epsilon^2[P_n(x)-1]+R_\epsilon(x,\theta)\ ,
\label{eq:counterexample-core-expansion}
\end{align}
where $u_n(x)=2\beta_nx^{n/2}$. The coherence remainder has size
$O(\epsilon^3x^{n/2})$, while all thermal corrections are smaller than
every fixed power of $\epsilon$ on this disk. Consequently,
\begin{align}
\sup_{x\le X_\epsilon}|R_\epsilon|
&=O(\epsilon^{3-\gamma n/2})+o(\epsilon^m)
=o(\epsilon^2)\ .
\label{eq:counterexample-core-remainder}
\end{align}
Here and below $m$ may be chosen larger than two. The relative perturbation
tends uniformly to zero because both
$\epsilon X_\epsilon^{n/2}$ and $\epsilon^2X_\epsilon^n$ tend to zero.
Gaussian integrability of the polynomial factors also gives
$\int_{x\le X_\epsilon}W_0r_\epsilon^2\,\dif^2\boldsymbol r
=O(\epsilon^2)$.

On this disk, $\varphi(r_\epsilon)=r_\epsilon^2/2+O(|r_\epsilon|^3)$.
The integrated cubic remainder is controlled by
\begin{align}
\int_{x\le X_\epsilon}W_0|r_\epsilon|^3\,\dif^2\boldsymbol r
&\le\sup_{x\le X_\epsilon}|r_\epsilon|
\int_{x\le X_\epsilon}W_0r_\epsilon^2\,\dif^2\boldsymbol r
=o(\epsilon^2)\ .
\label{eq:counterexample-cubic-remainder}
\end{align}
The angular cross term between the coherence and the radial population
change vanishes. The population square contributes $O(\epsilon^4)$,
and the remaining quadratic moment is
\begin{align}
\frac12\int W_0u_n(x)^2\cos^2(n\theta)\,\dif^2\boldsymbol r
&=\beta_n^2\int_0^\infty x^ne^{-x}\,\dif x=2^n\ .
\label{eq:counterexample-shannon-coherence}
\end{align}
Its omitted tail is a polynomial in $X_\epsilon$ times
$e^{-X_\epsilon}$. Therefore the disk contribution to
\cref{eq:counterexample-relative-entropy} is
$2^n\epsilon^2+o(\epsilon^2)$.

To bound the exterior, the explicit Wigner functions give
\begin{align}
0\le W_{\rho_{\epsilon,n}}(x,\theta)
&\le C_ne^{-x}(1+x^n)
+C_\nu\delta_\epsilon e^{-x/s_\nu}\ ,
\label{eq:counterexample-tail-bound}
\end{align}
with constants independent of $\epsilon$. In terms of the density ratio
$y=W_{\rho_{\epsilon,n}}/W_0$, this implies
$0\le y\le A(x)=C_n(1+x^n)+C_\nu\delta_\epsilon e^{cx}$.
For $0\le y\le1$, one has $y|\ln y|\le e^{-1}$, while for
$1\le y\le A$ one has $y\ln y\le A\ln(1+A)$.
Since $\delta_\epsilon\le1$, the logarithmic factor obeys
$\ln(1+A(x))\le C_{n,\nu}(1+x)$ after increasing the constant.
It follows that, for a fixed polynomial $\mathcal P$ with non-negative
coefficients,
\begin{align}
0\le W_0\varphi(r_\epsilon)
&\le C e^{-x}\mathcal P(x)
+C\delta_\epsilon e^{-x/s_\nu}\mathcal P(x)\ .
\label{eq:counterexample-shannon-tail-pointwise}
\end{align}
The affine terms in $\varphi$ satisfy the same bound. Repeated integration
by parts of each monomial in $\mathcal P$ gives
\begin{align}
\int_{x>X_\epsilon}W_0\varphi(r_\epsilon)\,\dif^2\boldsymbol r
&\le C e^{-X_\epsilon}\mathcal P_1(X_\epsilon)
+C\delta_\epsilon e^{-X_\epsilon/s_\nu}\mathcal P_2(X_\epsilon)
=o(\epsilon^2)\ ,
\label{eq:counterexample-shannon-tail}
\end{align}
where $\mathcal P_1$ and $\mathcal P_2$ are fixed polynomials.
These estimates also establish finiteness of the entropy and relative
entropy used above. Combining the disk and exterior contributions gives
\begin{align}
D(W_{\rho_{\epsilon,n}}\Vert W_0)
&=2^n\epsilon^2+o(\epsilon^2)\ .
\label{eq:counterexample-shannon-relative-expansion}
\end{align}
Substitution into \cref{eq:counterexample-entropy-identity} proves
\begin{align}
h(W_{\rho_{\epsilon,n}})-h(W_0)
&=(2n-2^n)\epsilon^2+o(\epsilon^2)<0
\qquad(n\ge3)\ .
\label{eq:counterexample-shannon-expansion}
\end{align}
The population change contributes $2n\epsilon^2$ to the mean squared
radius, whereas the angular coherence contributes $2^n\epsilon^2$ to
the relative entropy. The latter contribution is larger for $n\ge3$.

The zero quadratic coefficients at $n=1,2$ agree with the tangent
directions of the Gaussian orbit through the vacuum. The linear generators
$\hat q$ and $\hat p$ couple the vacuum to $\ket1$, while the quadratic
generators $\hat q^2-\hat p^2$ and
$\hat q\hat p+\hat p\hat q$ couple it to $\ket2$.
Exact displacement and squeezing preserve the continuous Wigner entropy.
This tangent-space statement does not make the truncated superpositions
Gaussian and does not determine their higher-order entropy response.
Thus $n=3$ is the lowest sector with a negative quadratic coefficient in
\cref{eq:counterexample-shannon-expansion}.

\subsection*{A finite Wigner-positive counterexample}

The asymptotic construction is complemented by the finite state shown in
\cref{fig:counterexample-phase-space-app}. Its parameters are
\begin{align}
n=3\ ,\qquad \epsilon=\frac1{100}\ ,\qquad
\nu=3\ ,\qquad \delta=\frac{129}{25000000}\ .
\label{eq:counterexample-certified-parameters}
\end{align}
Here $\delta$ is a fixed thermal weight, rather than the sufficient
asymptotic choice in \cref{eq:counterexample-delta}. Global positivity and
the entropy sign are verified separately at this point.

The angle dependence can be written as
\begin{align}
\pi W_\rho(x,\theta)
&=P_\pi(x)+Q_\pi(x)\cos(3\theta)\ ,
\label{eq:counterexample-certified-density}
\end{align}
where the radial part is
\begin{align}
P_\pi(x)
&=(1-\delta)e^{-x}[1+\epsilon^2(P_3(x)-1)]
+\frac{\delta}{7}e^{-x/7}\ ,
\label{eq:counterexample-certified-radial}
\end{align}
and the non-negative angular amplitude is
\begin{align}
Q_\pi(x)
&=(1-\delta)2\epsilon\sqrt{1-\epsilon^2}\,
\beta_3x^{3/2}e^{-x}\ .
\label{eq:counterexample-certified-angular}
\end{align}
Positivity reduces to $P_\pi(x)-Q_\pi(x)>0$ for every $x\ge0$.
Removing the common Gaussian decay gives the equivalent function
\begin{align}
H(x)
&=e^x[P_\pi(x)-Q_\pi(x)]
=(1-\delta)B_\epsilon(x)+\frac{\delta}{7}e^{6x/7}\ .
\label{eq:counterexample-certified-H}
\end{align}
Arb interval arithmetic through \texttt{python-flint}~0.8.0, with
$256$-bit working precision, encloses $H$ on the $384$ intervals
$[j/16,(j+1)/16]$, $j=0,\ldots,383$. The smallest certified lower
endpoint exceeds $0.150224$. This bounds the function on each complete
interval, rather than only at sampled points.

For $x\ge24$, the polynomial $P_3(x)$ is positive.
For example, $P_3(24)>0$ and $P_3'(x)=4x^2-12x+6>0$ on this
half-line. The radial pure-state contribution to $P_\pi$ is therefore
non-negative, and
\begin{align}
\frac{Q_\pi(x)}{P_\pi(x)}
&\le\frac{14\epsilon\beta_3}{\delta}
x^{3/2}e^{-6x/7}
\le0.004288<1\ .
\label{eq:counterexample-certified-tail}
\end{align}
The last bound is evaluated at $x=24$, since
$x^{3/2}e^{-6x/7}$ decreases for $x\ge7/4$.
Together with the interval calculation, it proves $W_\rho>0$ over
the full phase space.

The entropy can be certified without truncating an entropy integral of
unknown sign. For $r\ge-1$, the scalar function in
\cref{eq:counterexample-relative-entropy} obeys
\begin{align}
\varphi(r)
&\ge\frac{r^2}{2}-\frac{|r|^3}{6}\ .
\label{eq:counterexample-cubic-lower}
\end{align}
For $r\ge0$, the second derivative of the difference between the two
sides is $r^2/(1+r)\ge0$, and its value and first derivative vanish at
zero. For $-1<r<0$, the stronger inequality
$\varphi(r)\ge r^2/2$ follows by the same argument, since the second
derivative of $\varphi(r)-r^2/2$ is $-r/(1+r)>0$.
Continuity includes $r=-1$.

For the state in \cref{eq:counterexample-certified-parameters},
$r=W_\rho/W_0-1$ and the disk integral
\begin{align}
J_X
&=\int_{x\le X}W_0
\left(\frac{r^2}{2}-\frac{|r|^3}{6}\right)
\dif^2\boldsymbol r
\label{eq:counterexample-certified-core-integral}
\end{align}
is a lower bound on $D(W_\rho\Vert W_0)$, because
$\varphi\ge0$ on the exterior. At $X=10$, interval integration on
$400$ radial intervals of width $1/40$ and $360$ angular intervals of
width $\pi/180$ gives the bounds in
\cref{tab:counterexample-certificate}. The integration uses
$\dif^2\boldsymbol r=\tfrac12\dif x\,\dif\theta$ and encloses the
integrand on every rectangle. The exact mean-radius shift is
$\langle x\rangle_W-1=0.000630956904$.

\begin{table}[tb]
\centering
\caption{Verified bounds for the finite counterexample in
\cref{eq:counterexample-certified-parameters}. Decimal inequalities
are rounded outward. The last row combines the preceding entropy bound
with the exact mean-radius shift.}
\label{tab:counterexample-certificate}
\begin{tabular}{ll}
\toprule
Quantity & Verified bound\\
\midrule
$H$ on $[0,24]$ & $H>0.150224$\\
$Q_\pi/P_\pi$ on $[24,\infty)$ & $Q_\pi/P_\pi<0.004288$\\
$1+r$ on $x\le10$ & $1+r>0.351666$\\
$J_{10}$ & $J_{10}>0.0006968757138$\\
$h(W_\rho)-h(W_0)$ & $h(W_\rho)-h(W_0)<-0.0000659188$\\
\bottomrule
\end{tabular}
\end{table}

For an independent numerical evaluation, the angular integral can be
performed exactly. With $P=P_\pi/\pi$, $Q=Q_\pi/\pi$,
$b=Q/P$ and $g=\sqrt{1-b^2}$, positivity gives $|b|<1$.
The factorization
\begin{align}
1+b\cos\phi
&=\frac{1+g}{2}\left|1+\frac{b}{1+g}e^{i\phi}\right|^2
\label{eq:counterexample-angular-factorization}
\end{align}
gives the angular averages
$\langle\ln(1+b\cos\phi)\rangle_\phi=\ln[(1+g)/2]$ and
$\langle\cos\phi\ln(1+b\cos\phi)\rangle_\phi=b/(1+g)$
by expanding the logarithm in its convergent Fourier series.
The replacement $\phi=3\theta$ covers three identical periods.
The entropy therefore reduces to
\begin{align}
h(W_\rho)
&=-\pi\int_0^\infty P(x)
\left[\ln P(x)+\ln\frac{1+g(x)}{2}+1-g(x)\right]\dif x\ .
\label{eq:counterexample-angular-entropy}
\end{align}
High-precision radial quadrature gives
$h(W_\rho)-h(W_0)\simeq-1.7427877032\times10^{-4}$.
This value is a numerical estimate, whereas
\cref{tab:counterexample-certificate} already establishes the strict
entropy inequality with controlled errors.

\subsection*{An entropy bound at finite parameters}

An explicit remainder bound connects the quadratic mechanism to a finite
parameter range. In this subsection $n\ge3$ and $\nu>0$ are fixed, and
$\delta$ may be any weight for which the state in
\cref{eq:counterexample-mixed} is Wigner-positive. The pure relative
perturbation is
\begin{align}
a(x,\theta)
&=\frac{W_{\psi_{\epsilon,n}}}{W_0}-1
=\epsilon\sqrt{1-\epsilon^2}\,u_n(x)\cos(n\theta)
+\epsilon^2[P_n(x)-1]\ .
\label{eq:counterexample-effective-a}
\end{align}
The repaired ratio is $r=a+\delta b$, where
$b=(W_{\tau_\nu}-W_{\psi_{\epsilon,n}})/W_0$.
For a fixed $0<\vartheta<1$, the region
$\Omega_\vartheta=\{1+a\ge\vartheta\}$ contains the disk
$x\le x_\vartheta$, with
\begin{align}
x_\vartheta
&=\left[\frac{1-\vartheta-(1+M_n)\epsilon^2}
{2\beta_n\epsilon}\right]^{2/n}\ ,
\label{eq:counterexample-effective-cutoff}
\end{align}
provided $(1+M_n)\epsilon^2<1-\vartheta$.
The Gaussian tail fraction needed below is
\begin{align}
\mathcal T_n(X)
&=\frac1{n!}\int_X^\infty x^ne^{-x}\,\dif x
=e^{-X}\sum_{k=0}^n\frac{X^k}{k!}\ .
\label{eq:counterexample-effective-tail}
\end{align}

All constants in the bound are finite Gaussian moments. With
$F_n(x)=P_n(x)-1$, the population variance is
$\mathcal V_n=\int_0^\infty e^{-x}F_n(x)^2\,\dif x$.
The cubic coherence coefficient is
\begin{align}
\kappa_n
&=\frac{16\beta_n^3}{9\pi}
\Gamma\!\left(\frac{3n}{2}+1\right)\ .
\label{eq:counterexample-effective-kappa}
\end{align}
The cubic terms containing two coherence factors are controlled by
\begin{align}
\mathcal J_{4,n}
&=2\beta_n^2\int_0^\infty e^{-x}x^n|F_n(x)|\,\dif x\ .
\label{eq:counterexample-effective-J4}
\end{align}
The terms containing one coherence factor involve
\begin{align}
\mathcal J_{5,n}
&=\frac{4\beta_n}{\pi}\int_0^\infty
 e^{-x}x^{n/2}F_n(x)^2\,\dif x\ ,
\label{eq:counterexample-effective-J5}
\end{align}
while the population term uses
\begin{align}
\mathcal J_{6,n}
&=\int_0^\infty e^{-x}|F_n(x)|^3\,\dif x\ .
\label{eq:counterexample-effective-J6}
\end{align}
The first absolute moments are
$g_{1,n}=(4\beta_n/\pi)\Gamma(n/2+1)$,
$g_{2,n}=\int_0^\infty e^{-x}|F_n(x)|\,\dif x$, and
$g_{2,n}^{\tau}=s_\nu^{-1}\int_0^\infty
e^{-x/s_\nu}|F_n(x)|\,\dif x$.

In this notation, every admissible state with
$(1+M_n)\epsilon^2<1-\vartheta$ obeys
\begin{align}
h(W_\rho)-h(W_0)
&\le(2n-2^n)\epsilon^2+\kappa_n\epsilon^3
+2^{n+1}\epsilon^2\mathcal T_n(x_\vartheta)
+R_n(\epsilon)
+\delta[2\nu+\Lambda_n(\epsilon,\vartheta)]\ .
\label{eq:counterexample-effective-bound}
\end{align}
The population and mixed remainders are collected in
\begin{align}
R_n(\epsilon)
&=\left(2^n+\frac{\mathcal V_n}{2}
+\frac{\mathcal J_{4,n}}{2}\right)\epsilon^4
+\frac{\mathcal J_{5,n}}{2}\epsilon^5
+\frac{\mathcal J_{6,n}}{6}\epsilon^6\ ,
\label{eq:counterexample-effective-R}
\end{align}
while the cost of thermal mixing is bounded by
\begin{align}
\Lambda_n(\epsilon,\vartheta)
&=g_{1,n}(1+s_\nu^{n/2})\epsilon
+(g_{2,n}+g_{2,n}^{\tau}+2^{n+1})\epsilon^2
+\mathcal V_n\epsilon^4
\nonumber\\
&\quad+\ln(1/\vartheta)
(2+g_{1,n}\epsilon+g_{2,n}\epsilon^2)\ .
\label{eq:counterexample-effective-Lambda}
\end{align}

To prove the bound, non-negativity of $\varphi$ allows its integral to be
restricted to $\Omega_\vartheta$. Convexity then gives
\begin{align}
\varphi(a+\delta b)
&\ge\varphi(a)+\delta b\ln(1+a)
\qquad\text{on }\Omega_\vartheta\ .
\label{eq:counterexample-effective-tangent}
\end{align}
The logarithm satisfies
$|\ln(1+a)|\le |a|+\ln(1/\vartheta)$ on this region, and
$|b|\le W_{\tau_\nu}/W_0+1+|a|$.
Consequently,
\begin{align}
\left|\int_{\Omega_\vartheta}W_0b\ln(1+a)\,\dif^2\boldsymbol r\right|
&\le\int W_{\tau_\nu}|a|\,\dif^2\boldsymbol r
+\int W_0|a|\,\dif^2\boldsymbol r
+\int W_0a^2\,\dif^2\boldsymbol r
\nonumber\\
&\quad+\ln(1/\vartheta)
\left(2+\int W_0|a|\,\dif^2\boldsymbol r\right)
\le\Lambda_n(\epsilon,\vartheta)\ .
\label{eq:counterexample-effective-mixing}
\end{align}
The last step uses
$\int W_0|a|\le g_{1,n}\epsilon+g_{2,n}\epsilon^2$ and
$\int W_{\tau_\nu}|a|
\le g_{1,n}s_\nu^{n/2}\epsilon+g_{2,n}^{\tau}\epsilon^2$.
The quadratic integral is exact,
\begin{align}
\int W_0a^2\,\dif^2\boldsymbol r
&=2^{n+1}\epsilon^2(1-\epsilon^2)
+\mathcal V_n\epsilon^4\ ,
\label{eq:counterexample-effective-square}
\end{align}
since the angular cross term vanishes.

The cubic lower bound in \cref{eq:counterexample-cubic-lower} applies
to $a$ throughout $\Omega_\vartheta$. Its integral satisfies
\begin{align}
\int_{\Omega_\vartheta}W_0\varphi(a)\,\dif^2\boldsymbol r
&\ge\frac12\int W_0a^2\,\dif^2\boldsymbol r
-\frac12\int_{x>x_\vartheta}W_0a^2\,\dif^2\boldsymbol r
-\frac16\int W_0|a|^3\,\dif^2\boldsymbol r\ .
\label{eq:counterexample-effective-restriction}
\end{align}
The missing quadratic contribution is bounded using
$(u+v)^2\le2u^2+2v^2$,
\begin{align}
\frac12\int_{x>x_\vartheta}W_0a^2\,\dif^2\boldsymbol r
&\le2^{n+1}\epsilon^2\mathcal T_n(x_\vartheta)
+\mathcal V_n\epsilon^4\ .
\label{eq:counterexample-effective-square-tail}
\end{align}
Expanding
$|a|^3\le[\epsilon u_n|\cos(n\theta)|+\epsilon^2|F_n|]^3$
and using the angular averages
$\langle|\cos(n\theta)|\rangle=2/\pi$,
$\langle\cos^2(n\theta)\rangle=1/2$, and
$\langle|\cos(n\theta)|^3\rangle=4/(3\pi)$ gives
\begin{align}
\frac16\int W_0|a|^3\,\dif^2\boldsymbol r
&\le\kappa_n\epsilon^3
+\frac{\mathcal J_{4,n}}2\epsilon^4
+\frac{\mathcal J_{5,n}}2\epsilon^5
+\frac{\mathcal J_{6,n}}6\epsilon^6\ .
\label{eq:counterexample-effective-cubic}
\end{align}
Combining these estimates yields a lower bound on $D(W_\rho\Vert W_0)$.
The identity in \cref{eq:counterexample-entropy-identity} and
$\langle x\rangle_W-1\le2n\epsilon^2+2\nu\delta$ then give
\cref{eq:counterexample-effective-bound}.

For $n=3$ and $\nu=3$, the constants can be evaluated without numerical
quadrature. Here $\mathcal V_3=244$,
$\kappa_3=140\sqrt3/(3\sqrt\pi)$, and
$g_{1,3}=2\sqrt3/\sqrt\pi$.
The polynomial $F_3(x)=\tfrac43x^3-6x^2+6x-2$ has a unique positive
root $x_r=3.2611666966796562484\ldots$.
Splitting the absolute-value integrals at this root reduces all remaining
moments to polynomial Gaussian integrals. In particular,
\begin{align}
\int_R^\infty x^k e^{-a x}\,\dif x
&=e^{-aR}\sum_{j=0}^k
\frac{k!}{j!}\frac{R^j}{a^{k-j+1}}
\qquad(a>0)\ ,
\label{eq:counterexample-polynomial-tail}
\end{align}
and the half-integer moments in $\mathcal J_{5,3}$ are Gamma values.
Rational enclosure of $x_r$ and outward-rounded evaluation give
$\kappa_3<45.602902$, $g_{1,3}<1.954411$,
$g_{2,3}<2.415356$, $g_{2,3}^{\tau}<2197.356553$,
$\mathcal J_{4,3}<1025.388$,
$\mathcal J_{5,3}<9638.174$, and
$\mathcal J_{6,3}<173404.860$.

At the parameters in \cref{eq:counterexample-certified-parameters},
choosing $\vartheta=1/4$ gives $x_\vartheta=10.175689\ldots$.
Insertion in \cref{eq:counterexample-effective-bound} yields the strict
upper bound
\begin{align}
h(W_\rho)-h(W_0)&<-8.44\times10^{-5}\ .
\label{eq:counterexample-effective-finite}
\end{align}
This establishes the entropy sign using the analytic remainder bound,
independently of the disk integration in
\cref{tab:counterexample-certificate}.

The same estimate provides a finite parameter window. For $n=3$,
$\nu=3$, $\vartheta=1/4$, and $\delta\le6\epsilon^3$, replacing
$\delta$ by $6\epsilon^3$ bounds the last term from above.
After division by $\epsilon^2$, every positive contribution in
\cref{eq:counterexample-effective-bound} is nondecreasing in $\epsilon$.
For the tail term this follows because $x_\vartheta$ decreases and
$\mathcal T_n'(X)=-e^{-X}X^n/n!<0$.
The upper bound at $\epsilon=1/80$ is less than
$-3.08\times10^{-5}$. Thus
\begin{align}
h(W_\rho)<h(W_0)
\quad\text{if}\quad
0<\epsilon\le\frac1{80}\ ,\quad
0\le\delta\le6\epsilon^3\ ,\quad W_\rho\ge0\ .
\label{eq:counterexample-effective-window}
\end{align}
Wigner positivity remains a separate hypothesis in this window.
It holds at the certified point, and admissible weights exist for all
sufficiently small $\epsilon$ by
\cref{eq:counterexample-delta-asymptotic}.

\subsection*{R\'enyi entropy and the order-two boundary}

We now evaluate the Wigner--R\'enyi entropy of the repaired state. For a
non-negative Wigner function, the order-$\alpha$ entropy is determined by
\begin{align}
J_\alpha(W)
&:=\int_{\mathbb R^2}W(\boldsymbol r)^\alpha\,\dif^2\boldsymbol r\ ,
\qquad
h_\alpha(W):=\frac{\ln J_\alpha(W)}{1-\alpha}\ .
\label{eq:counterexample-renyi-definition}
\end{align}
The vacuum values are
\begin{align}
J_\alpha(W_0)
&=\frac{\pi^{1-\alpha}}{\alpha}\ ,
\qquad
h_\alpha(W_0)
=\ln\pi+\frac{\ln\alpha}{\alpha-1}\ .
\label{eq:counterexample-vacuum-renyi}
\end{align}
The Shannon value at $\alpha=1$ is understood by continuity.

The expansion in \cref{eq:counterexample-core-expansion} applies
on $x\le X_\epsilon$ for each fixed $\alpha>0$.
Its remainder remains integrable against the Gaussian weight used below.

The normalized Gaussian measure associated with $W_0^\alpha$ is
\begin{align}
\dif\mu_\alpha
&:=\frac{W_0^\alpha}{J_\alpha(W_0)}\,
\dif^2\boldsymbol r
=\frac{\alpha}{2\pi}e^{-\alpha x}\,\dif x\,\dif\theta\ .
\label{eq:counterexample-gaussian-measure}
\end{align}
Taylor expansion gives
$(1+r_\epsilon)^\alpha=1+\alpha r_\epsilon
+\alpha(\alpha-1)r_\epsilon^2/2+O_\alpha(|r_\epsilon|^3)$
on the growing disk. Gaussian moments bound the integrated quadratic
term by $O(\epsilon^2)$, and its cubic remainder is
$o(\epsilon^2)$ by the argument in
\cref{eq:counterexample-cubic-remainder}, now with the measure
$\mu_\alpha$. The population term enters linearly and the coherence
enters quadratically, since its first angular moment vanishes.
After controlling the exterior as below, this gives
\begin{align}
\frac{J_\alpha(W_{\rho_{\epsilon,n}})}{J_\alpha(W_0)}
&=1+c_n(\alpha)\epsilon^2+o(\epsilon^2)\ ,
\label{eq:counterexample-J-expansion}
\end{align}
where only two Gaussian moments enter. They are
\begin{align}
\bigl\langle P_n-1\bigr\rangle_{\mu_\alpha}
&=\left(\frac{2-\alpha}{\alpha}\right)^n-1\ ,
\label{eq:counterexample-population-moment}
\end{align}
and
\begin{align}
\bigl\langle u_n^2\cos^2(n\theta)\bigr\rangle_{\mu_\alpha}
&=\frac{2^{n+1}}{\alpha^n}\ .
\label{eq:counterexample-coherence-moment}
\end{align}
The first identity follows from
$\int_0^\infty e^{-\alpha x}L_n(2x)\,\dif x
=(\alpha-2)^n/\alpha^{n+1}$, and the second uses
$\int_0^\infty x^ne^{-\alpha x}\,\dif x=n!/\alpha^{n+1}$.
Their substitution gives
\begin{align}
c_n(\alpha)
&=\alpha^{1-n}\left[(2-\alpha)^n
+2^n(\alpha-1)\right]-\alpha\ .
\label{eq:counterexample-cn}
\end{align}

For completeness, the passage from the growing core to the full phase space
is uniform at every fixed $\alpha>0$. The envelope in
\cref{eq:counterexample-tail-bound} controls the repaired density.
For non-negative $a$ and $b$,
$(a+b)^\alpha\le C_\alpha(a^\alpha+b^\alpha)$, with
$C_\alpha=1$ for $0<\alpha\le1$ and
$C_\alpha=2^{\alpha-1}$ for $\alpha>1$. Integration of
\cref{eq:counterexample-tail-bound} over $x>X_\epsilon$ yields
\begin{align}
\int_{x>X_\epsilon}W_{\rho_{\epsilon,n}}^\alpha
\,\dif^2\boldsymbol r
&\le C_{\alpha,n}e^{-\alpha X_\epsilon/2}
+C_{\alpha,\nu}\delta_\epsilon^\alpha
e^{-\alpha X_\epsilon/s_\nu}
=o(\epsilon^m)
\label{eq:counterexample-tail-integral}
\end{align}
for every fixed $m>0$. The vacuum tail satisfies the same conclusion.
This proves the full-space expansion in
\cref{eq:counterexample-J-expansion}, including for $0<\alpha<1$.

For $\alpha\ne1$, taking the logarithm of
\cref{eq:counterexample-J-expansion} gives
\begin{align}
h_\alpha(W_{\rho_{\epsilon,n}})-h_\alpha(W_0)
&=K_n(\alpha)\epsilon^2+o(\epsilon^2)\ ,
\qquad
K_n(\alpha):=\frac{c_n(\alpha)}{1-\alpha}\ .
\label{eq:counterexample-renyi-expansion}
\end{align}
The coefficient has the continuous factorization
\begin{align}
K_n(\alpha)
&=2\left[
\sum_{k=0}^{n-1}t_\alpha^k-(1+t_\alpha)^{n-1}
\right]\ ,
\qquad
t_\alpha:=\frac{2-\alpha}{\alpha}\ .
\label{eq:counterexample-K-factorization}
\end{align}
When $0<\alpha<2$, $t_\alpha>0$, and the binomial theorem gives
\begin{align}
K_n(\alpha)
&=-2\sum_{k=1}^{n-2}
\left[\binom{n-1}{k}-1\right]t_\alpha^k<0
\qquad(n\ge3)\ .
\label{eq:counterexample-K-negative}
\end{align}
The independent Shannon proof in
\cref{eq:counterexample-shannon-expansion} gives
$K_n(1)=2n-2^n<0$, in agreement with the continuous value of
\cref{eq:counterexample-K-factorization}.
In particular, the lowest sector $n=3$ satisfies
\begin{align}
h_\alpha(W_{\rho_{\epsilon,3}})-h_\alpha(W_0)
&=\frac{2(\alpha-2)}{\alpha}\epsilon^2+o(\epsilon^2)\ .
\label{eq:counterexample-n3}
\end{align}
The precise quantifiers are
\begin{align}
\forall\alpha\in(0,2)\ \exists\epsilon_0(\alpha)>0:\qquad
0<\epsilon<\epsilon_0(\alpha)
\ \Longrightarrow\
h_\alpha(W_{\rho_{\epsilon,n}})<h_\alpha(W_0)\ .
\label{eq:counterexample-quantifiers}
\end{align}
The fixed parameters $n$, $\nu$, and $K$ are suppressed in
$\epsilon_0(\alpha)$. The estimates are not uniform as
$\alpha\downarrow0$ or $\alpha\uparrow2$.

At order two, Moyal's identity \cite{moyal49} gives the exact relation
\begin{align}
h_2(W_\rho)-h_2(W_0)
&=-\ln\Tr\rho^2\ge0\ .
\label{eq:counterexample-order-two}
\end{align}
For the repaired family, the thermal overlap is
$A_\epsilon=\langle\psi_{\epsilon,n}|\tau_\nu|\psi_{\epsilon,n}\rangle
=(1-\epsilon^2)/(\nu+1)
+\epsilon^2\nu^n/(\nu+1)^{n+1}$, and the purity is
\begin{align}
\Tr\rho_{\epsilon,n}^2
&=(1-\delta_\epsilon)^2
+2\delta_\epsilon(1-\delta_\epsilon)A_\epsilon
+\frac{\delta_\epsilon^2}{2\nu+1}\ .
\label{eq:counterexample-purity}
\end{align}
Since $A_\epsilon\le1/(\nu+1)<1$, the mixed state has a strictly
positive order-two entropy shift,
\begin{align}
h_2(W_{\rho_{\epsilon,n}})-h_2(W_0)
&=2\delta_\epsilon(1-A_\epsilon)+O(\delta_\epsilon^2)>0\ .
\label{eq:counterexample-order-two-shift}
\end{align}
This shift is smaller than every algebraic order in $\epsilon$.
At each fixed nonzero $\epsilon$, continuity in $\alpha$ leaves a
nonviolating interval immediately below two. This is consistent with
\cref{eq:counterexample-quantifiers}, which fixes $\alpha$ before taking
the small-$\epsilon$ limit.

For each fixed finite $\alpha>2$, the same full-space expansion has
$K_n(\alpha)>0$. Indeed, $u=-t_\alpha$ lies in $(0,1)$.
For even $n$, $\sum_{k=0}^{n-1}(-u)^k
=(1-u)\sum_{j=0}^{n/2-1}u^{2j}\ge1-u$, and for odd $n$,
the sum equals $(1+u^n)/(1+u)>1-u$.
Both exceed $(1-u)^{n-1}$ when $n\ge3$.
Thus \cref{eq:counterexample-K-factorization} is positive above two.
The entropy violation for this family is therefore specific to orders
below two, with the quantifiers stated in
\cref{eq:counterexample-quantifiers}.

Wigner--R\'enyi entropies are Schur concave. If the vacuum continuously
majorized $W_{\rho_{\epsilon,n}}$, then every entropy in
\cref{eq:counterexample-quantifiers} would be no smaller than its vacuum
value. Thus, for every fixed $\alpha\in(0,2)$ and sufficiently small
$\epsilon$, one has
\begin{align}
W_{\rho_{\epsilon,n}}
&\nprec_{\mathrm{cont}}W_0\ .
\label{eq:counterexample-no-cont-majorization}
\end{align}

\subsection*{Continuous-majorization incomparability}

The reverse ordering can be excluded through the positive-part
characterization of continuous majorization. For a non-negative density
$W$, let $\Phi_t(W):=\int(W-t)_+\,\dif^2\boldsymbol r$, where
$(y)_+:=\max\{y,0\}$. Two normalized
densities obey $W\prec_{\mathrm{cont}}V$ precisely when
$\Phi_t(W)\le\Phi_t(V)$ for every $t\ge0$ \cite{ryff65,chong74}. At the fixed vacuum level
$t=e^{-s}/\pi$, the counterexample satisfies
\begin{align}
\Phi_t(W_{\rho_{\epsilon,n}})-\Phi_t(W_0)
&=\epsilon^2G_n(s)+o(\epsilon^2)\ ,
\label{eq:counterexample-positive-part-expansion}
\end{align}
where
\begin{align}
G_n(s)
&=\int_0^s e^{-x}\bigl[P_n(x)-1\bigr]\,\dif x
+\frac{2^n}{n!}s^ne^{-s}\ .
\label{eq:counterexample-Gn}
\end{align}
For fixed $s>0$, the boundary calculation can be carried out first for
the pure Wigner function. The functional $\Phi_t$ is well-defined for this
integrable signed function, although a probability-density entropy is not.
On a fixed compact region,
$W_{\psi_{\epsilon,n}}=W_0+\epsilon W_1+\epsilon^2W_2+O(\epsilon^3)$,
where $W_1=W_0u_n(x)\cos(n\theta)$ and $W_2=W_0[P_n(x)-1]$.

The superlevel set has a single radial boundary for sufficiently small
$\epsilon$. To establish this, a fixed $0<\eta<s/2$ separates the inner
disk $x\le s-\eta$, where $W_0>t$ by a positive margin, from the annulus
$s+\eta\le x\le R$, where $W_0<t$ by a positive margin.
A sufficiently large fixed $R$ also makes
$C_ne^{-x}(1+x^n)<t$ throughout $x\ge R$.
These margins persist under the perturbation. In the remaining annulus
$|x-s|<\eta$, the derivative $\partial_xW_0=-W_0$ is bounded away from
zero, and the perturbed radial derivative remains negative.
The implicit-function theorem gives a unique boundary
$x=\xi_\epsilon(\theta)=s+\epsilon\xi_1(\theta)+O(\epsilon^2)$,
with
\begin{align}
\xi_1(\theta)
&=-\frac{W_1(s,\theta)}{W_0'(s)}
=u_n(s)\cos(n\theta)\ .
\label{eq:counterexample-boundary-displacement}
\end{align}
The positive-part integral is consequently
\begin{align}
\Phi_t(W_{\psi_{\epsilon,n}})
&=\frac12\int_0^{2\pi}\dif\theta
\int_0^{\xi_\epsilon(\theta)}
[W_{\psi_{\epsilon,n}}(x,\theta)-t]\,\dif x\ .
\label{eq:counterexample-moving-integral}
\end{align}
Because $W_0(s)=t$, the first-order boundary term vanishes.
At second order, expansion of the inner integral gives
\begin{align}
\int_0^{\xi_\epsilon}(W_{\psi_{\epsilon,n}}-t)\,\dif x
&=\int_0^s(W_0-t)\,\dif x
+\epsilon\int_0^sW_1\,\dif x
\nonumber\\
&\quad+\epsilon^2\left[
\int_0^sW_2\,\dif x
+W_1(s,\theta)\xi_1
+\frac12W_0'(s)\xi_1^2\right]+O(\epsilon^3)\ .
\label{eq:counterexample-boundary-expansion}
\end{align}
The two terms arising from boundary displacement combine to
$W_1(s,\theta)^2/[2W_0(s)]$.
The angular average removes the first-order term, while the fixed-domain
second-order term becomes
$\int_0^s e^{-x}[P_n(x)-1]\,\dif x$.
The boundary contribution is
\begin{align}
\frac14\int_0^{2\pi}
\frac{W_1(s,\theta)^2}{W_0(s)}\,\dif\theta
&=\beta_n^2s^ne^{-s}
=\frac{2^n}{n!}s^ne^{-s}\ ,
\label{eq:counterexample-boundary-contribution}
\end{align}
which proves the coefficient in \cref{eq:counterexample-Gn} for the
pure curve. Finally, the map $z\mapsto(z-t)_+$ is $1$-Lipschitz,
and the $L^1$ norm of the pure Wigner function is uniformly bounded
by its polynomial Gaussian envelope. Hence
\begin{align}
\left|\Phi_t(W_{\rho_{\epsilon,n}})
-\Phi_t(W_{\psi_{\epsilon,n}})\right|
&\le\delta_\epsilon
\int|W_{\tau_\nu}-W_{\psi_{\epsilon,n}}|\,\dif^2\boldsymbol r
=O(\delta_\epsilon)=o(\epsilon^2)\ .
\label{eq:counterexample-positive-part-repair}
\end{align}
Thus the boundary coefficient is inherited by the Wigner-positive family.
The expansion is at fixed $s$, and does not require uniform control as
$s$ tends to zero or infinity.

Near the origin, the Laguerre expansion gives
\begin{align}
G_n(s)
&=\begin{cases}
-2s+O(s^2),&n\ \text{odd},\\
-ns^2+O(s^3),&n\ \text{even},
\end{cases}
\qquad s\downarrow0\ .
\label{eq:counterexample-Gn-small}
\end{align}
For even $n$, the construction has $n\ge4$.
At large $s$, normalization permits the integral in
\cref{eq:counterexample-Gn} to be written as
$-\int_s^\infty e^{-x}[P_n(x)-1]\,\dif x$.
The leading polynomial coefficients are
$a_n=2^n/n!$ and $a_{n-1}=-n^2 2^{n-1}/n!$.
Integration by parts gives a leading tail term
$a_ns^ne^{-s}$, which cancels the boundary contribution, and a next
coefficient $-(na_n+a_{n-1})=n(n-2)2^{n-1}/n!$.
Consequently,
\begin{align}
G_n(s)
&=\frac{n(n-2)2^{n-1}}{n!}
s^{n-1}e^{-s}+O\!\left(s^{n-2}e^{-s}\right)>0\ .
\label{eq:counterexample-Gn-large}
\end{align}
The coefficient is therefore negative at a fixed sufficiently small value
of $s$ and positive at a fixed sufficiently large value. Both signs persist
for all sufficiently small $\epsilon$ at every fixed $n\ge3$. The small-$s$ sign excludes
$W_0\prec_{\mathrm{cont}}W_{\rho_{\epsilon,n}}$, while the large-$s$ sign
excludes the reverse ordering. Hence
\begin{align}
W_{\rho_{\epsilon,n}}&\nprec_{\mathrm{cont}}W_0\ ,
&
W_0&\nprec_{\mathrm{cont}}W_{\rho_{\epsilon,n}}\ .
\label{eq:counterexample-incomparability}
\end{align}
For $n=3$, the crossing is explicit because
$G_3(s)=2s(s-1)e^{-s}$. The fixed choices $s=1/2$ and $s=2$
give $G_3(1/2)=-e^{-1/2}/2<0$ and $G_3(2)=4e^{-2}>0$.
Only two fixed levels are needed to exclude both orderings.

\Cref{fig:counterexample-tests-app} illustrates the entropy violation and
the sign change of the positive-part coefficient.
\begin{figure}[tb]
\centering
\begin{tabular}{@{}c@{\hspace{0.03\textwidth}}c@{}}
\textbf{(a)} & \textbf{(b)}\\[-2pt]
\includegraphics[height=0.22\textwidth]{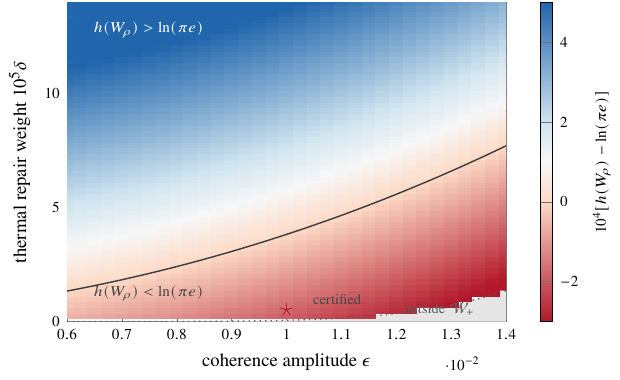}
&
\includegraphics[height=0.22\textwidth]{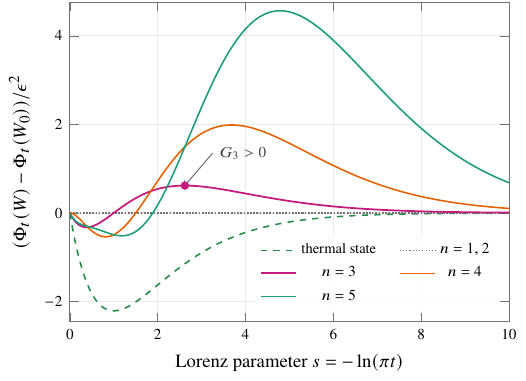}
\end{tabular}
\caption{\label{fig:counterexample-tests-app}
Entropy and continuous-majorization tests. (a) Numerical entropy landscape
for $n=3$ and $\nu=3$. The color scale shows
$10^4[h(W_\rho)-\ln(\pi e)]$. The solid curve marks equality with the vacuum,
the dotted curve is the threshold for Wigner positivity, and the star marks
the parameters used in \cref{fig:counterexample-phase-space-app}(b).
(b) The coefficient $G_n(s)$ in
\cref{eq:counterexample-positive-part-expansion}. The curves for
$n=3,4,5$ illustrate the negative and positive regions that exclude the two
directions of continuous majorization.}
\end{figure}

\subsection*{Physical properties and radial coarse graining}

The counterexamples approach the vacuum in trace norm and energy. The
projector difference for the pure state in
\cref{eq:counterexample-pure} has trace norm $2\epsilon$, because its
two nonzero eigenvalues are $\pm\epsilon$. Convexity of the trace norm
therefore gives
\begin{align}
\left\|\rho_{\epsilon,n}-\ket0\bra0\right\|_1
&\le2(1-\delta_\epsilon)\epsilon+2\delta_\epsilon
\longrightarrow0\ .
\label{eq:counterexample-trace-convergence}
\end{align}
The mean photon number is exactly
\begin{align}
E_\epsilon
&=\Tr(\rho_{\epsilon,n}\hat N)
=(1-\delta_\epsilon)n\epsilon^2+\delta_\epsilon\nu
\longrightarrow0\ .
\label{eq:counterexample-energy-convergence}
\end{align}
Thus the failure of continuous vacuum majorization occurs arbitrarily
close to the vacuum, with a vanishing absolute entropy difference.

For $n\ge3$, the vacuum--Fock coherence contributes neither
$\langle\hat a\rangle$ nor $\langle\hat a^2\rangle$, since the
ladder operators change the photon number by only one or two.
The thermal state has the same vanishing moments. The covariance matrix
of $\hat q=(\hat a+\hat a^\dagger)/\sqrt2$ and
$\hat p=(\hat a-\hat a^\dagger)/(i\sqrt2)$ is consequently
\begin{align}
V_{\rho_{\epsilon,n}}
&=\left(\frac12+E_\epsilon\right)I_2\ .
\label{eq:counterexample-covariance}
\end{align}
Its determinant exceeds $1/4$ for every nonzero member of the family.
The entropy decrease therefore coexists with quadrature variances larger
than the vacuum values.

The quadrature entropies also distinguish the joint phase-space effect
from marginal concentration. For the real $n=3$ superposition,
the vacuum quadrature density is
$f_0(y)=\pi^{-1/2}e^{-y^2}$, and the ratio of oscillator wave functions is
$R_3(y)=\psi_3(y)/\psi_0(y)=(2y^3-3y)/\sqrt3$.
On a growing central interval, the repaired quadrature densities obey
\begin{align}
f_q(y)
&=f_0(y)\left[1+2\epsilon R_3(y)
+\epsilon^2(R_3(y)^2-1)\right]+o(\epsilon^2)\ ,
\label{eq:counterexample-q-marginal}
\end{align}
and
\begin{align}
f_p(y)
&=f_0(y)\left[1+\epsilon^2(R_3(y)^2-1)\right]
+o(\epsilon^2)\ .
\label{eq:counterexample-p-marginal}
\end{align}
The linear interference vanishes in the momentum marginal because the
three-photon wave function acquires the phase $(-i)^3$ under Fourier
transformation. The Gaussian moments
$\int f_0R_3^2\,\dif y=1$ and
$\int y^2f_0(R_3^2-1)\,\dif y=3$ give
\begin{align}
h(f_q)-h(f_0)&=\epsilon^2+o(\epsilon^2)\ ,
\label{eq:counterexample-q-entropy}
\end{align}
where the relative-entropy correction is $2\epsilon^2$, and
\begin{align}
h(f_p)-h(f_0)&=3\epsilon^2+o(\epsilon^2)\ ,
\label{eq:counterexample-p-entropy}
\end{align}
where that correction begins at order $\epsilon^4$.
These are full-line entropy expansions. Their justification uses
$y^2\le\epsilon^{-\gamma}$ with $0<\gamma<2/3$ and the same
polynomial Gaussian and thermal tail estimates as
\cref{eq:counterexample-core-thermal,eq:counterexample-shannon-tail}.
The exponentially small thermal mixing does not change the displayed
coefficients. Both marginal entropies increase, while the joint Wigner
entropy decreases by $2\epsilon^2+o(\epsilon^2)$.

The Shannon violation also places the counterexamples outside convex
mixtures of Gaussian states. A one-mode Gaussian state with covariance
matrix $V$ has
$h(W)=\ln(2\pi e\sqrt{\det V})\ge\ln(\pi e)$ because
$\det V\ge1/4$. Concavity of $-w\ln w$ then gives, for any Gaussian
mixture $W=\int W_\lambda\,\dif\mu(\lambda)$,
\begin{align}
h(W)
&\ge\int h(W_\lambda)\,\dif\mu(\lambda)
\ge\ln(\pi e)\ .
\label{eq:counterexample-gaussian-mixture}
\end{align}
This argument also applies to continuous probability mixtures.
Each Gaussian Wigner density is bounded by $1/\pi$, making its
entropy integrand non-negative and allowing the order of integration
to be exchanged. The states with a strict Shannon violation are
therefore quantum non-Gaussian in this convex-mixture sense, despite
having non-negative Wigner functions.

The entropy and continuous-majorization conclusions are preserved by
Gaussian unitaries. If $S$ is symplectic and $\boldsymbol d$ is a
displacement, the transformed density is
$W'(\boldsymbol r)=W(S^{-1}(\boldsymbol r-\boldsymbol d))$.
The Jacobian is one, preserving both $J_\alpha$ and every
$\Phi_t$. The Gaussian image of the vacuum is itself level-equivalent
to the vacuum. The transformed counterexample therefore has the same
entropy deficit and remains incomparable with $W_0$.
Independent copies also preserve the entropy violation, since
\begin{align}
h_\alpha(W_{\rho^{\otimes m}})-h_\alpha(W_{0^{\otimes m}})
&=m\,[h_\alpha(W_\rho)-h_\alpha(W_0)]\ .
\label{eq:counterexample-tensor-entropy}
\end{align}
For $\alpha\ne1$ this follows by factorization of $J_\alpha$,
and for $\alpha=1$ by Shannon additivity. This statement concerns the
entropy comparison and does not assume a multimode radial-shell theorem.

The same angular structure explains the connection to the finite-resolution
relation in the main text. Averaging over phase rotations produces
\begin{align}
\overline\rho_{\epsilon,n}
&=\frac1{2\pi}\int_0^{2\pi}
e^{-i\phi\hat N}\rho_{\epsilon,n}e^{i\phi\hat N}\,\dif\phi
\nonumber\\
&=(1-\delta_\epsilon)
\left[(1-\epsilon^2)\ket0\bra0+\epsilon^2\ket n\bra n\right]
+\delta_\epsilon\tau_\nu\ .
\label{eq:counterexample-phase-average}
\end{align}
Every radial shell has the same integrated weight for
$\rho_{\epsilon,n}$ and $\overline\rho_{\epsilon,n}$, because its full
angular integral removes $\cos(n\theta)$ exactly.
All these weights are non-negative for the Wigner-positive family.
Applying \cref{eq:main} in the chosen oscillator frame gives
\begin{align}
\boldsymbol P_{\rho_{\epsilon,n}}(\Delta)
&=\boldsymbol P_{\overline\rho_{\epsilon,n}}(\Delta)
\prec\boldsymbol P_0(\Delta)
\qquad\text{for every }\Delta>0\ .
\label{eq:counterexample-shell-majorization}
\end{align}
The counterexample states are therefore incomparable with the vacuum under
continuous majorization while obeying the finite-resolution radial relation
at every shell width.

\section{Proof of the majorization relation for radial shell weights}
\label{app:radial-majorization}

We give the complete proof of \cref{thm:radial-majorization}, repeating the
partition notation so that the argument is self-contained.
Let $\Pi:0=r_0<r_1<\cdots$ be a complete radial partition with
$r_k\to\infty$ and bounded squared-radius widths
$d_k:=r_k^2-r_{k-1}^2$. Its shells are
$A_k(\Pi):=\{\boldsymbol r:r_{k-1}\le|\boldsymbol r|<r_k\}$.
Their Wigner weights are
\begin{align}
P_{\rho,k}(\Pi):=\int_{A_k(\Pi)}W_\rho(\boldsymbol r)\,\dif\boldsymbol r\ .
\label{eq:general-shell-weights-app}
\end{align}
The complete shell vector is
$\boldsymbol P_\rho(\Pi):=(P_{\rho,1},P_{\rho,2},\ldots)$.
The equal-area partition in the main text is the specialization
$r_k^2=k\Delta$.

Phase averaging first reduces the shell weights to those of a
Fock-diagonal state. After introducing the radial profiles and their
tails, we give a general subset-sum argument that converts a tail
inequality into majorization. To establish this inequality for physical
states, the Mehler formula for Hermite polynomials
\cite[Eq.~18.18.28]{dlmf24} expresses the relevant alternating Laguerre
sum as a non-negative sum of squared Hermite polynomials. A
Laplace-transform argument then establishes normalization whenever a
complete partition with bounded squared-radius widths has componentwise
non-negative shell weights. Combining these results yields majorization
by $\boldsymbol P_0(\Pi^\downarrow)$, the decreasing vacuum benchmark
defined by the order statistics $D_m$. This benchmark is the vacuum
shell vector of an actual reordered partition only when the widths admit
a non-increasing enumeration. The relation for equal-area shells follows
directly.

Let $U_\theta=e^{-i\theta\hat n}$ denote a phase rotation, and let
$\mathsf R_\theta$ be the corresponding planar rotation matrix. Define
$\bar\rho:=(2\pi)^{-1}\int_0^{2\pi}
U_\theta\rho U_\theta^\dagger\,\dif\theta$.
Its Wigner function is
$W_{\bar\rho}(\boldsymbol r)=(2\pi)^{-1}\int_0^{2\pi}
W_\rho(\mathsf R_{-\theta}\boldsymbol r)\,\dif\theta$. Since every
$A_k(\Pi)$ is rotation invariant, Fubini's theorem and the change of
variables $\boldsymbol r\mapsto\mathsf R_\theta\boldsymbol r$ give
\begin{align}
\int_{A_k(\Pi)}W_{\bar\rho}(\boldsymbol r)\,\dif\boldsymbol r
&=\frac{1}{2\pi}\int_0^{2\pi}\int_{A_k(\Pi)}
W_\rho(\mathsf R_{-\theta}\boldsymbol r)
\,\dif\boldsymbol r\,\dif\theta
=\int_{A_k(\Pi)}W_\rho(\boldsymbol r)\,\dif\boldsymbol r\ .
\label{eq:phase-average-general-app}
\end{align}
Thus, $\boldsymbol P_\rho(\Pi)=
\boldsymbol P_{\bar\rho}(\Pi)$ for every radial partition. The phase
average is complete dephasing in the Fock basis,
$\bar\rho=\sum_{n=0}^\infty\rho_n\ket n\bra n$, where
$\rho_n\ge0$ and $\sum_n\rho_n=1$.

With $\xi=|\boldsymbol r|^2$, radial integration converts the Fock Wigner
function into the radial Wigner profile
\begin{align}
g_n(\xi):=\pi W_n(\sqrt{\xi},0)
&=(-1)^nL_n(2\xi)e^{-\xi}\ .
\label{eq:fock-radial-density}
\end{align}
This profile is normalized,
$\int_0^\infty g_n(\xi)\,\dif \xi=1$. For the phase-averaged state, define
$g_{\bar\rho}(\xi):=\sum_{n=0}^\infty\rho_ng_n(\xi)$.
The shell weights then become
\begin{align}
P_{\rho,k}(\Pi)
=\int_{r_{k-1}^2}^{r_k^2}g_{\bar\rho}(\xi)\,\dif \xi\ .
\label{eq:general-shell-radial}
\end{align}
Introduce the exterior Fock tail
$\mathcal T_n(\xi):=\int_\xi^\infty g_n(u)\,\dif u
=(-1)^n\int_\xi^\infty L_n(2u)e^{-u}\,\dif u$
and its state average
$\mathcal T_\rho(\xi):=\sum_{n=0}^\infty\rho_n\mathcal T_n(\xi)$.
These identities require no energy constraint. To justify this state-average
expression, let
$D_\xi=\{\boldsymbol r:|\boldsymbol r|^2<\xi\}$ and
$\widehat Q_\xi=\mathbb I-\widehat R_{D_\xi}$. The disk region operator is
Hilbert--Schmidt and hence bounded, so $\widehat Q_\xi$ is bounded. The finite
Fock truncations of $\bar\rho$ converge in
trace norm. Therefore their expectations of $\widehat Q_\xi$ converge and give
$\mathcal T_\rho(\xi)=\Tr(\bar\rho\,\widehat Q_\xi)
=\sum_n\rho_n\bra n\widehat Q_\xi\ket n
=\sum_n\rho_n\mathcal T_n(\xi)$.
In particular, the shell weights are the tail differences
$P_{\rho,k}(\Pi)=
\mathcal T_\rho(r_{k-1}^2)-\mathcal T_\rho(r_k^2)$.

\smallskip
\noindent\textit{From tail inequalities to majorization.} Let
$\boldsymbol q=(q_1,q_2,\ldots)$ be a probability vector with tails
$T_k:=\sum_{j\ge k}q_j$. Let $d_k>0$, $\sum_kd_k=\infty$, and suppose
\begin{align}
T_{k+1}\ge e^{-d_k}T_k
\quad\Longleftrightarrow\quad
q_k\le(1-e^{-d_k})T_k
\qquad(k\ge1)\ .
\label{eq:geometric-tail-condition}
\end{align}
For $m\ge0$, define
\begin{align}
D_m:=\sup_{\substack{I\subset\mathbb N\\|I|=m}}\sum_{i\in I}d_i\ ,
\qquad D_0:=0\ .
\label{eq:Dm-app}
\end{align}
The decreasing vacuum-benchmark components are
$P_{0,m}(\Pi^\downarrow):=e^{-D_{m-1}}-e^{-D_m}$.
If the widths are bounded, then
$\boldsymbol P_0(\Pi^\downarrow)=
(P_{0,1}(\Pi^\downarrow),P_{0,2}(\Pi^\downarrow),\ldots)$ is a
non-increasing probability vector and
\begin{align}
\boldsymbol q\prec\boldsymbol P_0(\Pi^\downarrow)\ .
\label{eq:variable-tail-majorization}
\end{align}

Since $\sup_kd_k<\infty$, every $D_m$ is finite. For the first $N$
widths, let $D_m^{(N)}$ be the sum of their $m$ largest members. Writing those
members in non-increasing order gives
$D_m^{(N)}-D_{m-1}^{(N)}\ge D_{m+1}^{(N)}-D_m^{(N)}$.
For each fixed $m$, $D_m^{(N)}$ increases to $D_m$ as $N\to\infty$.
Passing to the limit shows that $D_m$ is a concave sequence. Hence its
increments $\delta_m:=D_m-D_{m-1}$ are non-increasing and non-negative. Moreover,
$D_m\ge\sum_{k=1}^md_k\to\infty$, so
$\sum_{m=1}^\infty P_{0,m}(\Pi^\downarrow)
=\lim_{M\to\infty}(1-e^{-D_M})=1$.
The components are themselves non-increasing because
\begin{align}
P_{0,m+1}(\Pi^\downarrow)&=e^{-D_m}(1-e^{-\delta_{m+1}})
\le e^{-D_{m-1}}e^{-\delta_m}(1-e^{-\delta_m})
\le P_{0,m}(\Pi^\downarrow)\ .
\label{eq:vacuum-benchmark-decreasing}
\end{align}
Thus, $\boldsymbol P_0(\Pi^\downarrow)$ is a decreasing probability vector.

\smallskip
\noindent\textit{Interpretation of the decreasing benchmark.}
For the first $N$ shell widths, placing the $m$ largest members in
non-increasing order from the origin gives the cumulative vacuum weight
$1-e^{-D_m^{(N)}}$ in the first $m$ shells. Since
$D_m^{(N)}\uparrow D_m$ for every fixed $m$, these cumulative weights
converge to those of $\boldsymbol P_0(\Pi^\downarrow)$. If the full width
sequence admits a non-increasing enumeration, this is the vacuum shell vector
of the corresponding reordered partition. Otherwise,
$\boldsymbol P_0(\Pi^\downarrow)$ is interpreted as the
limiting benchmark; for example, $d_k=1-1/(k+1)$ gives $D_m=m$ but has no
largest width. The majorization argument below depends only on $D_m$ and does
not require the suprema to be attained.

Now choose arbitrary indices $i_1<\cdots<i_m$ and set
$R_\ell:=\sum_{j=1}^{\ell}q_{i_j}$, with $R_0=0$. The previously selected
components lie before the tail beginning at $i_\ell$. Since all components
are non-negative,
$T_{i_\ell}\le1-R_{\ell-1}$. Using
\cref{eq:geometric-tail-condition},
$R_\ell=R_{\ell-1}+q_{i_\ell}
\le R_{\ell-1}+(1-e^{-d_{i_\ell}})(1-R_{\ell-1})$.
Induction on $\ell$, followed by the supremum over all $m$-element index
sets, yields
\begin{align}
R_m\le1-\exp\!\left(-\sum_{\ell=1}^m d_{i_\ell}\right)
\le1-e^{-D_m}\ .
\label{eq:subset-tail-bound}
\end{align}
Taking the supremum over the index sets gives
\begin{align*}
\sum_{j=1}^m q_j^\downarrow
\le1-e^{-D_m}
=\sum_{j=1}^m P_{0,j}(\Pi^\downarrow)\ ,
\end{align*}
which is \cref{eq:variable-tail-majorization}.

For equal widths $d_k=d$, one has $D_m=md$. Setting $a=e^{-d}$ gives the
geometric probability vector obtained from the vacuum state
$\boldsymbol g_a=(1-a)(1,a,a^2,\ldots)$, with
$q_k\le(1-a)T_k$ as the sufficient condition. This is the equal-area
specialization used below.

The analytic input is a positive decomposition of the alternating Laguerre
sum
\begin{align}
S_N(\xi):=\sum_{j=0}^N(-1)^jL_j(2\xi),\qquad N\ge0\ .
\label{eq:S-def-app}
\end{align}

For every $N\ge0$ and $\xi\ge0$, the Mehler formula gives
\begin{align}
S_N(\xi)=
\sum_{\ell=0}^{\lfloor N/2\rfloor}
\frac{\binom{2\ell}{\ell}}{4^\ell}
\frac{H_{N-2\ell}(\sqrt{\xi})^2}
{2^{N-2\ell}(N-2\ell)!}\ge0\ .
\label{eq:mehler-sos-app}
\end{align}
If $N$ is even, $S_N(\xi)>0$ for every $\xi\ge0$. If $N$ is odd,
$S_N(0)=0$ and $S_N(\xi)>0$ for every $\xi>0$.

The ordinary Laguerre generating function is \cite[Eq.~18.12.13]{dlmf24}
\begin{align}
\sum_{n=0}^\infty L_n(2u)z^n
=\frac{1}{1-z}\exp\!\left(-\frac{2uz}{1-z}\right)\ ,
\qquad |z|<1\ .
\label{eq:laguerre-generating-app}
\end{align}
Set $\Psi_n(\xi):=e^\xi\mathcal T_n(\xi)$. Substituting $z=-t$ in
\cref{eq:laguerre-generating-app}, multiplying by $e^{-u}$, and integrating
over $u\in[\xi,\infty)$ gives
\begin{align}
\sum_{n=0}^\infty\Psi_n(\xi)t^n
=\frac{e^\xi}{1+t}\int_\xi^\infty
\exp\!\left[-u\frac{1-t}{1+t}\right]\dif u
=\frac{1}{1-t}\exp\!\left(\frac{2\xi t}{1+t}\right)\ ,\quad |t|<1\ .
\label{eq:Psi-genfun}
\end{align}

For completeness, the exchange of the Laguerre series and the improper
integral can be justified uniformly on compact subsets of $|t|<1$. Fix
$0<r<R<1$. Cauchy's coefficient estimate applied to
\cref{eq:laguerre-generating-app} on $|z|=R$ gives
$|L_n(2u)|\le R^{-n}\exp[2Ru/(1+R)]/(1-R)$, since
$\operatorname{Re}[z/(1-z)]\ge-R/(1+R)$ on this circle. Therefore, uniformly
for $|t|\le r$,
\begin{align}
\sum_{n=0}^\infty |t|^n|L_n(2u)|e^{-u}
&\le
\frac{1}{(1-R)(1-r/R)}
\exp\!\left[-u\frac{1-R}{1+R}\right]\ .
\label{eq:laguerre-dominated}
\end{align}
The right-hand side is integrable on $[\xi,\infty)$. Dominated convergence
proves \cref{eq:Psi-genfun} and its locally uniform convergence in $t$, which
also permits the coefficient comparisons and differentiations below.

Differentiating \cref{eq:Psi-genfun} with respect to $\xi$ yields
$\sum_{n=0}^\infty\Psi_n'(\xi)t^n=
2t\exp[2\xi t/(1+t)]/[(1-t)(1+t)]$.
On the other hand, \cref{eq:S-def-app,eq:laguerre-generating-app} give
$\sum_{n\ge1}2S_{n-1}(\xi)t^n
=2t\exp[2\xi t/(1+t)]/[(1-t)(1+t)]$.
Coefficient comparison proves $\Psi_0'(\xi)=0$ and
$\Psi_n'(\xi)=2S_{n-1}(\xi)$ for $n\ge1$.

It remains to establish the sign of $S_N$. Summing
\cref{eq:S-def-app} first over $N$ gives
\begin{align}
\sum_{N=0}^\infty S_N(\xi)t^N
=\frac{1}{1-t^2}
\exp\!\left(\frac{2\xi t}{1+t}\right)\ .
\label{eq:S-generating}
\end{align}
The Mehler formula for the physicists' Hermite polynomials reads \cite[Eq.~18.18.28]{dlmf24}
\begin{align}
\sum_{m=0}^\infty
\frac{H_m(q)H_m(q')}{2^m m!}t^m
=\frac{1}{\sqrt{1-t^2}}
\exp\!\left[
\frac{2qq't-(q^2+q'^2)t^2}{1-t^2}
\right]\ .
\label{eq:mehler-app}
\end{align}
Taking $q=q'=\sqrt{\xi}$ reduces the exponent to $2\xi t/(1+t)$, so
\begin{align}
\sum_{m=0}^\infty
\frac{H_m(\sqrt{\xi})^2}{2^m m!}t^m
=\frac{1}{\sqrt{1-t^2}}
\exp\!\left(\frac{2\xi t}{1+t}\right)\ .
\label{eq:mehler-diagonal}
\end{align}
The remaining factor is
$(1-t^2)^{-1/2}=\sum_{\ell\ge0}\binom{2\ell}{\ell}t^{2\ell}/4^\ell$.
Multiplying this expansion by \cref{eq:mehler-diagonal} reproduces
\cref{eq:S-generating}. Extracting the coefficient of $t^N$ proves
\cref{eq:mehler-sos-app}.

Every summand in \cref{eq:mehler-sos-app} is non-negative. If $N$ is even,
the term $\ell=N/2$ contains $H_0^2=1$ and is strictly positive. If $N$ is
odd, the term $\ell=(N-1)/2$ contains
$H_1(\sqrt{\xi})^2=4\xi$, which is positive for $\xi>0$. At $\xi=0$, every Hermite
polynomial appearing has odd degree and vanishes. This proves the stated
strictness.

\smallskip
\noindent\textit{Tail inequality on a radial interval.}\par\nopagebreak[4]

Integrating $\Psi_n'=2S_{n-1}\ge0$ for $n\ge1$, with $\Psi_0'=0$
for the vacuum, gives, for every $n\ge0$, $\xi\ge0$, and $d>0$,
\begin{align}
\mathcal T_n(\xi+d)\ge e^{-d}\mathcal T_n(\xi)\ .
\label{eq:finite-shell-inequality}
\end{align}
Equality holds for the vacuum. For every $n\ge1$, the difference has the
explicit positive representation
\begin{align}
\mathcal T_n(\xi+d)-e^{-d}\mathcal T_n(\xi)
&=2e^{-(\xi+d)}
\sum_{\ell=0}^{\lfloor(n-1)/2\rfloor}
\frac{\binom{2\ell}{\ell}}{4^\ell}
\int_\xi^{\xi+d}
\frac{H_{n-1-2\ell}(\sqrt{u})^2}
{2^{n-1-2\ell}(n-1-2\ell)!}\,\dif u
>0\ .
\label{eq:finite-shell-squares}
\end{align}

Indeed, $\mathcal T_0(\xi)=e^{-\xi}$. For $n\ge1$, integrating the coefficient identity
above between $\xi$ and $\xi+d$ gives
$e^{\xi+d}\mathcal T_n(\xi+d)-e^\xi\mathcal T_n(\xi)
=2\int_\xi^{\xi+d}S_{n-1}(u)\,\dif u$.
Substitution of \cref{eq:mehler-sos-app} and multiplication by
$e^{-(\xi+d)}$ give \cref{eq:finite-shell-squares}. By
\cref{eq:mehler-sos-app}, $S_{n-1}$ is positive throughout the integration
interval except possibly at its single endpoint $u=0$. The integral is
therefore strictly positive.

Averaging \cref{eq:finite-shell-inequality} over the Fock populations gives
\begin{align}
\mathcal T_\rho(\xi+d)\ge e^{-d}\mathcal T_\rho(\xi)
\qquad(\xi\ge0,\ d>0)\ .
\label{eq:state-finite-shell}
\end{align}
No pointwise positivity of $W_\rho$, and no non-negativity of any shell
weight, is needed here.

\begin{lemma}
\label{lem:shell-normalization-app}
Let $\Pi$ be a complete radial partition, set $x_k=r_k^2$, and assume
$D:=\sup_k(x_k-x_{k-1})<\infty$. If
$P_{\rho,k}(\Pi)\ge0$ for every $k$, then
$\lim_{k\to\infty}\mathcal T_\rho(x_k)=0$.
Consequently,
\begin{align}
\sum_{k=1}^\infty P_{\rho,k}(\Pi)=1\ .
\label{eq:shell-normalization-app}
\end{align}
\end{lemma}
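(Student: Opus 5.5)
The plan is to argue by contradiction, combining the monotonicity of $\mathcal T_\rho(x_k)$ with the tail inequality \cref{eq:state-finite-shell} and an Abelian (Laplace-transform) limit at $s\downarrow0$. Since $P_{\rho,k}(\Pi)=\mathcal T_\rho(x_{k-1})-\mathcal T_\rho(x_k)\ge0$, the sequence $\mathcal T_\rho(x_k)$ is non-increasing. Because $\mathcal T_\rho(0)=1$ and the partial sums telescope to $\sum_{k=1}^K P_{\rho,k}=1-\mathcal T_\rho(x_K)$, it suffices to show that $L:=\lim_k\mathcal T_\rho(x_k)\in[-\infty,1]$ equals $0$.

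The first step converts information on the grid into information on the whole half-line, using the bounded widths. Suppose $L>0$ and take $\xi\in[x_{k-1},x_k]$. Applying \cref{eq:state-finite-shell} with base point $x_{k-1}$ gives
\[
\mathcal T_\rho(\xi)\ge e^{-(\xi-x_{k-1})}\mathcal T_\rho(x_{k-1})\ge e^{-D}L .
\]
Hence $\mathcal T_\rho(\xi)\ge e^{-D}L>0$ for every $\xi\ge0$.

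Now suppose instead that $\mathcal T_\rho(x_K)<0$ for some $K$. For $k>K$ and $\xi\in[x_{k-1},x_k]$, apply \cref{eq:state-finite-shell} with base point $\xi$:
\[
\mathcal T_\rho(x_k)\ge e^{-(x_k-\xi)}\mathcal T_\rho(\xi),
\qquad\text{so}\qquad
\mathcal T_\rho(\xi)\le e^{x_k-\xi}\mathcal T_\rho(x_k).
\]
Since $\mathcal T_\rho(x_k)\le\mathcal T_\rho(x_K)<0$ and $e^{x_k-\xi}\ge1$, this yields $\mathcal T_\rho(\xi)\le\mathcal T_\rho(x_K)<0$ for all $\xi\ge x_K$. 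In either case $\mathcal T_\rho$ is eventually bounded away from zero with a fixed sign.

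The second step shows that $\mathcal T_\rho$ tends to zero in the Abel sense. For each Fock state, integration by parts together with the Laplace transform of the profile \cref{eq:fock-radial-density} gives
\[
s\int_0^\infty e^{-s\xi}\mathcal T_n(\xi)\,\dif\xi
=1-\int_0^\infty e^{-s\xi}g_n(\xi)\,\dif\xi
=1-\frac{(1-s)^n}{(1+s)^{n+1}} .
\]
This can be checked directly from $\int_0^\infty e^{-a\xi}L_n(2\xi)\,\dif\xi=(a-2)^n/a^{n+1}$ with $a=1+s$. Alternatively, it follows from the generating function \cref{eq:Psi-genfun}, whose Laplace transform in $\xi$ can be taken termwise for small $s$. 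For $0<s<1$ the subtracted term has modulus at most $1$ and tends to $1$ as $s\downarrow0$. Averaging over the populations $\rho_n$ and using dominated convergence in $n$ would then give
\[
\lim_{s\downarrow0}\,s\int_0^\infty e^{-s\xi}\mathcal T_\rho(\xi)\,\dif\xi=0 .
\]
This contradicts the first step: an eventual bound $\mathcal T_\rho\ge c>0$, or $\mathcal T_\rho\le -c<0$, forces this Abel mean to have $\liminf\ge c$, or $\limsup\le -c$, respectively, because the contribution of any bounded initial segment vanishes as $s\to0$. Hence $L=0$, and the normalization \cref{eq:shell-normalization-app} follows from the telescoping identity.

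The main obstacle is justifying the interchange of $\sum_n\rho_n$ with the Laplace integral over $\xi\in[0,\infty)$, which amounts to local integrability and controlled growth of $\mathcal T_\rho$. I would not rely on uniform-in-$n$ bounds on $\mathcal T_n$. Instead I would work at the operator level: the Gaussian-weighted average $s\int_0^\infty e^{-s\xi}\widehat Q_\xi\,\dif\xi$ is the region-type operator of the smooth weight $1-e^{-s|\boldsymbol r|^2}$, which is bounded. Its expectation in $\bar\rho$ is therefore a trace-norm-continuous functional, and evaluating it on finite Fock truncations (exactly as in the justification of $\mathcal T_\rho=\sum_n\rho_n\mathcal T_n$) recovers the population average of the per-$n$ formula above. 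One also needs boundedness of $\mathcal T_\rho$ on compact intervals, so that the initial segment is harmless. This should follow from $\widehat Q_\xi=\mathbb I-\widehat R_{D_\xi}$, with $\widehat R_{D_\xi}$ Hilbert--Schmidt and its norm growing at most polynomially in $\xi$.
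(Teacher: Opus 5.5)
Your proposal is correct and follows essentially the paper's proof: the tail inequality with bounded widths gives $\mathcal T_\rho(\xi)\ge e^{-D}L$ on the whole half-line. This contradicts the vanishing Abel mean $s\int_0^\infty e^{-s\xi}\mathcal T_\rho(\xi)\,\dif\xi=1-\widehat g_\rho(s)\to0$ obtained from the Laguerre Laplace transform. You missed one simplification: the tail inequality from base point $0$ already gives $\mathcal T_n(\xi)\ge e^{-\xi}>0$, so your negative-limit case is vacuous, and Tonelli can justify the Fock-sum/integral exchange in place of your operator-norm (Hilbert--Schmidt) argument.
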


\begin{proof}
Taking the initial point to be zero and $d=\xi$ in
\cref{eq:finite-shell-inequality} gives
$\mathcal T_n(\xi)\ge e^{-\xi}\mathcal T_n(0)=e^{-\xi}$, so every Fock
tail is non-negative. For $0<a_{\rm L}<1$, the Laguerre Laplace transform and
\cref{eq:fock-radial-density} give
\begin{align}
\widehat g_\rho(a_{\rm L})
&:=\int_0^\infty e^{-a_{\rm L}\xi}g_{\bar\rho}(\xi)\,\dif\xi
=\frac{1}{1+a_{\rm L}}\sum_{n=0}^\infty\rho_n
\left(\frac{1-a_{\rm L}}{1+a_{\rm L}}\right)^n
\xrightarrow[a_{\rm L}\downarrow0]{}1\ .
\label{eq:radial-laplace-app}
\end{align}
The bound $|g_n(\xi)|\le1$ \cite[Eq.~18.14.8]{dlmf24} justifies exchanging
the Fock sum and the integral, while dominated convergence in $n$ gives the
limit. Since $\mathcal T_n\ge0$, Tonelli's theorem and integration by parts
for each Fock tail yield
\begin{align}
a_{\rm L}\int_0^\infty e^{-a_{\rm L}\xi}\mathcal T_\rho(\xi)\,\dif\xi
=1-\widehat g_\rho(a_{\rm L})
\xrightarrow[a_{\rm L}\downarrow0]{}0\ .
\label{eq:tail-abel-app}
\end{align}

Because
$P_{\rho,k}=\mathcal T_\rho(x_{k-1})-\mathcal T_\rho(x_k)\ge0$,
the boundary tails decrease to some $L\ge0$. For
$\xi\in[x_{k-1},x_k]$, \cref{eq:state-finite-shell} implies
$\mathcal T_\rho(\xi)\ge
e^{-(\xi-x_{k-1})}\mathcal T_\rho(x_{k-1})\ge e^{-D}L$.
Hence the left-hand side of \cref{eq:tail-abel-app} is at least $e^{-D}L$
for every $0<a_{\rm L}<1$, and its limit forces $L=0$. Finally,
$\sum_{k=1}^N P_{\rho,k}(\Pi)=
\mathcal T_\rho(0)-\mathcal T_\rho(x_N)\to1$ as $N\to\infty$, which
proves the claim.
\end{proof}

\noindent\textit{Majorization of the radial shell weights.} Let $\Pi$ be a
complete radial partition with bounded squared-radius widths $d_k$, and write
$D_m:=D_m(\Pi)$ as defined in \cref{eq:Dm-app}. If
$\boldsymbol P_\rho(\Pi)$ is componentwise non-negative, then
\begin{align}
\boldsymbol P_\rho(\Pi)
&\prec\boldsymbol P_0(\Pi^\downarrow)\ .
\label{eq:radial-shell-app}
\end{align}
The ordered vacuum components are
\begin{align}
P_{0,m}(\Pi^\downarrow)
&=e^{-D_{m-1}}-e^{-D_m}\ .
\label{eq:ordered-vacuum-app}
\end{align}
Equivalently,
\begin{align}
\sum_{j=1}^mP_{\rho,j}^\downarrow(\Pi)
\le1-e^{-D_m}
\qquad(m\ge1)\ .
\label{eq:general-Lorenz-bound-app}
\end{align}

By \cref{eq:shell-normalization-app}, write
$T_k:=\sum_{j\ge k}P_{\rho,j}(\Pi)=\mathcal T_\rho(r_{k-1}^2)$.
Taking
$\xi=r_{k-1}^2$ and $d=d_k$ in \cref{eq:state-finite-shell} gives
$T_{k+1}=\mathcal T_\rho(r_k^2)
\ge e^{-d_k}\mathcal T_\rho(r_{k-1}^2)=e^{-d_k}T_k$.
The componentwise non-negativity assumption makes
$\boldsymbol P_\rho(\Pi)$ a probability vector. The preceding subset-sum
argument then proves
\cref{eq:radial-shell-app,eq:general-Lorenz-bound-app}.

The right-hand side has an exact concentration interpretation. Select any
$m$ shells with total radial width
$D(I)=\sum_{i\in I}d_i$. Since a width $d$ in $\xi=|\boldsymbol r|^2$ has
phase-space area $\pi d$, placing the selected widths consecutively from the
origin produces the disk $0\le \xi<D(I)$. The integral of the vacuum Wigner function over this disk
is $\int_0^{D(I)}e^{-\xi}\,\dif \xi=1-e^{-D(I)}$.
Because $e^{-\xi}$ is strictly decreasing, this is the largest integral of $W_0$
over a region of the selected total area. Maximizing over all
$m$-element selections gives $1-e^{-D_m}$.

\smallskip
\noindent\textit{Admissibility check for \cref{fig:shell-construction}.}
For $\rho=S(\varsigma)\ket1\bra1S^\dagger(\varsigma)$, choose principal
axes $(u,v)$ and set
$R_\varsigma^2=e^{2\varsigma}u^2+e^{-2\varsigma}v^2$. The Wigner function is
$W_\rho(u,v)=\pi^{-1}(2R_\varsigma^2-1)e^{-R_\varsigma^2}$.
The negative region is $R_\varsigma^2<1/2$, on which
$u^2+v^2\le e^{2\varsigma}/2$. For $\varsigma=0.45$ this upper bound is
$1.229802<\Delta=3/2$, so every shell after $A_1$ lies in the pointwise
non-negative region. It remains only to check the first shell. Setting
$a=\cosh(2\varsigma)$ and $b=\sinh(2\varsigma)$, angular integration followed
by integration by parts gives
\begin{align}
P_{\rho,1}(\Delta)
&=\int_0^\Delta e^{-a\xi}I_0(b\xi)\,\dif\xi
-2\Delta e^{-a\Delta}I_0(b\Delta)\notag\\
&=0.096833984\ldots>0
\qquad(\varsigma=0.45,\ \Delta=3/2),
\label{eq:fig1-admissibility}
\end{align}
where $I_0$ is the modified Bessel function. Hence the complete shell vector
used in \cref{fig:shell-construction}, not only its displayed components, is
componentwise non-negative.

\section{Strict majorization for Gaussian pure states}

For every non-vacuum single-mode Gaussian pure state $\ket\psi$, the
probabilities obtained from equal-area shells in a fixed oscillator frame
satisfy
\begin{align}
\boldsymbol P_\psi(\Delta)&\prec\boldsymbol P_0(\Delta),
\qquad
\boldsymbol P_0(\Delta)\nprec\boldsymbol P_\psi(\Delta)
\qquad(\Delta>0)\ .
\label{eq:strict-gaussian}
\end{align}

This section proves the reverse exclusion in \cref{eq:strict-gaussian};
together with the forward relation \cref{eq:main}, this establishes strictness
for non-vacuum Gaussian pure states. The ratios of consecutive shell probabilities show how
displacement and squeezing separate every non-vacuum Gaussian pure state from
the vacuum at finite radial resolution.
Fix $\Delta>0$ and use
$\boldsymbol r=(q,p)=\sqrt{\xi}(\cos\theta,\sin\theta)$, where
$\xi=|\boldsymbol r|^2\ge0$. For a phase-invariant state $\bar\sigma$, define
$g(\xi):=\pi W_{\bar\sigma}(\boldsymbol r)|_{|\boldsymbol r|^2=\xi}$, so that
$\int_0^\infty g(\xi)\,\dif \xi=1$. Consider the displaced squeezed state
$\ket{\psi} = D(\alpha) S(\zeta)\ket0$, where $\zeta = \varsigma e^{i\phi_0}$
and $\varsigma=|\zeta|\ge0$. Its Wigner function is \cite{weedbrook12}
\begin{align}
W_\psi(\boldsymbol{r}) = \frac{1}{\pi}
\exp\Bigl[-\tfrac{1}{2}(\boldsymbol{r}-\boldsymbol r_\alpha)^{\mathsf T}
\Sigma^{-1}(\boldsymbol{r}-\boldsymbol r_\alpha)\Bigr]\ ,
\label{eq:wigner_gauss}
\end{align}
where $\boldsymbol r_\alpha \in \RR^2$ is the displacement. With
$[\hat q,\hat p]=i$, the complex amplitude in $D(\alpha)$ and the real
phase-space displacement are related by
$\boldsymbol r_\alpha=(\sqrt2\,\operatorname{Re}\alpha,
\sqrt2\,\operatorname{Im}\alpha)$, and hence
$|\boldsymbol r_\alpha|^2=2|\alpha|^2$.
Moreover,
$\Sigma=\tfrac12\mathsf R_\phi\operatorname{diag}
(e^{-2\varsigma},e^{2\varsigma})\mathsf R_\phi^{\mathsf T}$, with
$\phi=\phi_0/2$ and $\mathsf R_\phi$ the rotation by $\phi$. Equivalently,
$\tfrac12\boldsymbol u^{\mathsf T}\Sigma^{-1}\boldsymbol u
=e^{2\varsigma}(\boldsymbol u\cdot\boldsymbol e_\phi)^2
+e^{-2\varsigma}(\boldsymbol u\cdot\boldsymbol e_\phi^\perp)^2$,
where $\boldsymbol{e}_\phi = (\cos\phi, \sin\phi)$ is the squeezing direction
and $\boldsymbol{e}_\phi^\perp = (-\sin\phi, \cos\phi)$. Choose coordinates
so that $\boldsymbol r_\alpha = (a, 0)$ with $a := |\boldsymbol r_\alpha| \ge 0$,
and introduce the squeezed-frame components
$A:=a\cos\phi$ and $B:=a\sin\phi$.
Averaging \cref{eq:wigner_gauss} over $\theta$ and substituting
$\vartheta=\theta-\phi$ yields
\begin{align}
g_\psi(\xi) &= \frac{1}{2\pi}\int_0^{2\pi} e^{-Q(\vartheta;\, \xi)}\,\dif\vartheta\ ,
\label{eq:gpsi}
\end{align}
Here, the exponent is
\begin{align}
Q(\vartheta;\, \xi) &= e^{2\varsigma}\bigl(\sqrt{\xi}\cos\vartheta - A\bigr)^2
+ e^{-2\varsigma}\bigl(\sqrt{\xi}\sin\vartheta + B\bigr)^2\ .
\label{eq:Q}
\end{align}

For coherent states, let $\zeta=0$ and $\alpha\neq0$. The Wigner
function reduces to $W_\alpha(\boldsymbol{r}) = \pi^{-1}\exp(-|\boldsymbol{r}
- \boldsymbol r_\alpha|^2)$, and using polar coordinates $\boldsymbol{r}
= \sqrt{\xi}(\cos\theta, \sin\theta)$ with $\boldsymbol r_\alpha = (a, 0)$ gives
the identity $|\boldsymbol{r}-\boldsymbol r_\alpha|^2=
\xi+a^2-2a\sqrt{\xi}\cos\theta$.
Phase averaging then yields, via the integral representation
$I_0(z) = (2\pi)^{-1}\int_0^{2\pi} e^{z\cos\theta}\,\dif\theta$
 \cite[Eq.~10.32.1]{dlmf24},
\begin{align}
g_\alpha(\xi) = \pi\,\overline{W_\alpha}(\boldsymbol{r})
= e^{-(\xi+a^2)}\,I_0(2a\sqrt{\xi}), \quad a > 0\ .
\label{eq:gcoh}
\end{align}
Normalization $\int_0^\infty g_\alpha\,\dif \xi = 1$ follows from termwise
integration of the series $I_0(2a\sqrt{\xi}) = \sum_{j\ge 0} (a^2 \xi)^j/(j!)^2$,
giving $\int_0^\infty e^{-\xi}\,\allowbreak I_0(2a\sqrt{\xi})\,\dif \xi = e^{a^2}$.

Here, $I_\nu$ denotes the modified Bessel function of the first kind. For the
non-negative orders used below, it is defined by \cite[Eq.~10.25.2]{dlmf24}
as $I_\nu(z)=\sum_{j=0}^\infty
\bigl(\frac{z}{2}\bigr)^{2j+\nu}/[j!\,\Gamma(j+\nu+1)]$ for
$z\in\mathbb C$ and $\nu\ge0$.
Since $I_0'(z)=I_1(z)>0$ for $z>0$ \cite[Eq.~10.29.3]{dlmf24},
$I_0$ is strictly increasing on $[0,\infty)$. For fixed $\nu$, its
large-argument expansion is
$I_\nu(z)=e^z(2\pi z)^{-1/2}[1+O(z^{-1})]$ as
$z\to\infty$ \cite[Eq.~10.40.1]{dlmf24}.

A straightforward calculation gives
$g_\alpha(y+\Delta)/g_\alpha(y)=
e^{-\Delta}I_0(2a\sqrt{y+\Delta})/I_0(2a\sqrt y)$.
Since $a>0$ and $\Delta>0$, strict monotonicity of $I_0$ gives
$I_0(2a\sqrt{y+\Delta})>I_0(2a\sqrt y)$. Therefore
\begin{align}
\label{eq:pointwise}
\frac{g_\alpha(y+\Delta)}{g_\alpha(y)} > e^{-\Delta} \quad \text{for all } y \ge 0,\ \Delta > 0\ .
\end{align}
Writing $p_k:=P_{\alpha,k}(\Delta)$ and translating the integral for
$p_{k+1}$ by $\Delta$ gives
$p_{k+1}-e^{-\Delta}p_k=\int_{(k-1)\Delta}^{k\Delta}
[g_\alpha(y+\Delta)-e^{-\Delta}g_\alpha(y)]\,\dif y$.
The integrand is continuous and strictly positive by \cref{eq:pointwise}, and
$p_k>0$. Hence
\begin{align}
\label{eq:shellbound}
\frac{p_{k+1}}{p_k}>e^{-\Delta}
\qquad\text{for every }k\ge1\ .
\end{align}
 
The large-argument behavior of the same shell ratios is
\begin{align}
\lim_{k\to\infty}\frac{p_{k+1}}{p_k}=e^{-\Delta}\ .
\label{eq:coherent-shell-ratio-limit}
\end{align}
To verify this, write the exact decomposition
$g_\alpha(\xi)=h(\xi)e^{-\xi}$, where
$h(\xi)=e^{-a^2}I_0(2a\sqrt{\xi})$ and
$\dif\log h(\xi)/\dif\xi=
[a/\sqrt{\xi}]I_1(2a\sqrt{\xi})/\allowbreak I_0(2a\sqrt{\xi})=O(\xi^{-1/2})$.
For fixed $\Delta$ and $u\in[0,\Delta]$, integration of this logarithmic
derivative gives, uniformly in $u$,
$h((k-1)\Delta+u)=h((k-1)\Delta)[1+O(k^{-1/2})]$. It follows that
\begin{align*}
p_k &=h\bigl((k-1)\Delta\bigr)(1-e^{-\Delta})e^{-(k-1)\Delta}
[1+O(k^{-1/2})]\ .
\end{align*}
The adjacent-shell ratio therefore satisfies
\begin{align*}
\frac{p_{k+1}}{p_k}
&=\frac{h(k\Delta)}{h((k-1)\Delta)}e^{-\Delta}
[1+O(k^{-1/2})]\longrightarrow e^{-\Delta}\ .
\end{align*}

\Cref{eq:main} gives
$\boldsymbol P_\alpha(\Delta)\prec\boldsymbol P_0(\Delta)$. Suppose, for contradiction,
that the converse relation also holds. Antisymmetry then implies
$\boldsymbol{P}_\alpha^{\downarrow}(\Delta) = \boldsymbol{P}_0(\Delta)$.
Because the vacuum shell probabilities are all distinct, there is a bijection
$\pi:\mathbb N\to\mathbb N$ such that
$p_k=(1-e^{-\Delta})e^{-[\pi(k)-1]\Delta}$. Consequently,
\begin{align}
\frac{p_{k+1}}{p_k}
=e^{-[\pi(k+1)-\pi(k)]\Delta}
\in \{e^{-m\Delta}:m\in\Z\}\quad\forall k\ .
\label{eq:cohdiscrete}
\end{align}
By \cref{eq:shellbound,eq:coherent-shell-ratio-limit}, the ratio is larger than $e^{-\Delta}$ for
every $k$ and converges to $e^{-\Delta}$. Thus, there is an integer $K$ such that
$p_{k+1}/p_k < (1+e^{-\Delta})/2 < 1$ for all $k \ge K$. These facts imply that
$p_{k+1}/p_k \in (e^{-\Delta}, 1)$ whenever $k \ge K$.
 
No factor $e^{-m\Delta}$ with integer $m$ lies strictly between
$e^{-\Delta}$ and $1$. This contradicts \cref{eq:cohdiscrete}. Hence
$\boldsymbol P_\alpha(\Delta)\prec\boldsymbol P_0(\Delta)$ and
$\boldsymbol P_0(\Delta)\nprec\boldsymbol P_\alpha(\Delta)$.

For squeezed states with $\varsigma>0$ and arbitrary
$\boldsymbol r_\alpha$, the same contradiction follows once the ratios
of consecutive shell probabilities are shown to converge to
$\exp[-e^{-2\varsigma}\Delta]$. Because $0<e^{-2\varsigma}<1$, this limit
lies strictly between $e^{-\Delta}$ and $1$, where no ratio allowed by a
permutation of the vacuum vector can occur. We obtain the required
limit from a Laplace-type analysis of \cref{eq:gpsi,eq:Q} at large radius.

For sufficiently large $\xi$, the function
$\vartheta\mapsto Q(\vartheta;\xi)$ has two
minima in $[0,2\pi)$, located at
$\vartheta_* = \pi/2+\epsilon$ and
$\vartheta_{**}=3\pi/2-\epsilon$, with the
asymptotic deviation
\begin{align}
\epsilon = -\frac{e^{2\varsigma} A}{2\sqrt{\xi}\,\sinh 2\varsigma} + O(\xi^{-1})\ .
\label{eq:eps}
\end{align}

To see this, expand $Q(\vartheta;\xi)$ in powers of $\sqrt{\xi}$ as
$Q(\vartheta;\xi)=\xi F(\vartheta)+2\sqrt{\xi}G(\vartheta)+H$, with
$F(\vartheta):=e^{2\varsigma}\cos^2\vartheta
+e^{-2\varsigma}\sin^2\vartheta$,
$G(\vartheta):=-e^{2\varsigma}A\cos\vartheta
+e^{-2\varsigma}B\sin\vartheta$, and
$H:=e^{2\varsigma}A^2+e^{-2\varsigma}B^2$.
The critical points of $F$ satisfy
$F'(\vartheta)=-2\sinh 2\varsigma\,\sin 2\vartheta=0$, yielding
$\vartheta=0,\pi/2,\pi,3\pi/2$. Since
$F''(\vartheta)=-4\sinh 2\varsigma\,\cos 2\vartheta$,
the minima of $F$ lie at $\pi/2$ and $3\pi/2$ with common value
$F(\pi/2) = F(3\pi/2) = e^{-2\varsigma} =: c$ and $F''(\pi/2) = 4\sinh 2\varsigma > 0$.

Apply the implicit function theorem \cite[Theorem 9.28]{rudin76} to
$\xi^{-1}\partial_\vartheta Q(\vartheta;\xi)=0$ with parameter
$t=\xi^{-1/2}$. Since
$F''(\pi/2) \neq 0$, there exists $\xi_0 > 0$ such that for all $\xi > \xi_0$ the
equation $\partial_\vartheta Q(\vartheta;\xi)=0$ has a unique smooth solution
$\vartheta_*(\xi)$ in a neighborhood of $\pi/2$, with
$\vartheta_*(\xi)\to\pi/2$ as
$\xi \to \infty$. The same holds near $3\pi/2$, yielding a second minimum
$\vartheta_{**}(\xi)$. Moreover,
$\xi^{-1}\partial_\vartheta Q=F'+2\xi^{-1/2}G'$ converges uniformly to $F'$.
Outside fixed neighborhoods of the four simple zeros of $F'$, it is therefore
bounded away from zero for sufficiently large $\xi$. The two remaining critical
points, near $0$ and $\pi$, are maxima. Thus, $\vartheta_*$ and
$\vartheta_{**}$ are the only minima.

To compute the asymptotic expansion, write
$\vartheta_*=\pi/2+\epsilon$. Then
$\cos\vartheta_*=-\epsilon+O(\epsilon^3)$ and
$\sin\vartheta_*=1+O(\epsilon^2)$. Substituting into
$\partial_\vartheta Q=0$ gives
$0=4\xi\epsilon\sinh 2\varsigma
+2\sqrt{\xi}\,e^{2\varsigma}A+O(1)$
to leading order,
yielding \cref{eq:eps}. An analogous expansion at
$\vartheta_{**}=3\pi/2+\epsilon'$ gives $\epsilon'=-\epsilon$.

Let $c=e^{-2\varsigma}$. At the first minimum, the exponent has the asymptotic
value
\begin{align}
Q(\vartheta_*;\xi) = c\,\xi + 2cB\sqrt{\xi} + C_0 + O(\xi^{-1/2})\ .
\label{eq:Qstar}
\end{align}
At the second minimum,
\begin{align*}
Q(\vartheta_{**};\xi) = c\,\xi - 2cB\sqrt{\xi} + C_0 + O(\xi^{-1/2})\ .
\end{align*}
Here, $C_0=e^{-2\varsigma} B^2-A^2/(2\sinh 2\varsigma)$.

To obtain these expressions, use
$\cos\vartheta_*=-\epsilon+O(\epsilon^3)$, which gives
$(\sqrt{\xi}\cos\vartheta_*-A)=e^{-2\varsigma}A/(2\sinh 2\varsigma)
+O(\xi^{-1/2})$ and hence
$e^{2\varsigma}(\sqrt{\xi}\cos\vartheta_*-A)^2=e^{-2\varsigma}A^2/
(4\sinh^2 2\varsigma)+O(\xi^{-1/2})$.
For the second term, the $O(\epsilon^2)$ contribution from
$\sin\vartheta_*=1-\epsilon^2/2+O(\epsilon^4)$ must be retained.
Thus $\sqrt{\xi}\sin\vartheta_*+B=\sqrt{\xi}+B-\sqrt{\xi}\epsilon^2/2+
O(\xi^{-3/2})$ and
$e^{-2\varsigma}(\sqrt{\xi}\sin\vartheta_*+B)^2=e^{-2\varsigma}
(\xi+2B\sqrt{\xi}+B^2-\xi\epsilon^2)+O(\xi^{-1/2})$.
The order-one contribution is
$-e^{-2\varsigma}\xi\epsilon^2=-e^{2\varsigma}A^2/
(4\sinh^2 2\varsigma)+O(\xi^{-1/2})$. Thus
$C_0=e^{-2\varsigma}B^2+
(e^{-2\varsigma}A^2-e^{2\varsigma}A^2)/(4\sinh^2 2\varsigma)
=e^{-2\varsigma}B^2-A^2/(2\sinh 2\varsigma)$, proving \cref{eq:Qstar}. For
$\vartheta_{**}=3\pi/2-\epsilon$, substituting
$\sin\vartheta_{**}=-1+\epsilon^2/2$ flips the sign of the $O(\sqrt{\xi})$
cross-term, yielding the displayed asymptotic formula for
$Q(\vartheta_{**};\xi)$.

The curvatures at both minima satisfy
$Q''(\vartheta_*;\xi)=Q''(\vartheta_{**};\xi)
=4\xi\sinh 2\varsigma[1+O(\xi^{-1/2})]$.
Indeed, differentiating \cref{eq:Q} gives
$Q''(\vartheta;\xi)=-4\xi\sinh 2\varsigma\cos2\vartheta+
2\sqrt{\xi}[e^{2\varsigma}A\cos\vartheta
-e^{-2\varsigma}B\sin\vartheta]$.
At $\vartheta_*=\pi/2+\epsilon$, one has
$\cos2\vartheta_*=-1+O(\xi^{-1})$, while the second term is
$O(\sqrt{\xi})$. Therefore
$Q''(\vartheta_*;\xi)=4\xi\sinh 2\varsigma+O(\sqrt{\xi})$. The same
estimate holds at $\vartheta_{**}$.

The Laplace estimate is uniform over the two wells. Indeed, choose disjoint
fixed neighborhoods $U_*$ and $U_{**}$ of $\pi/2$ and $3\pi/2$. Since $F$
has precisely two strict non-degenerate minima with value $c$, there are
$\eta,c_0>0$ such that, for all sufficiently large $\xi$,
\begin{align*}
Q(\vartheta;\xi)&\ge c\xi+\frac{\eta}{2}\xi
\qquad(\vartheta\notin U_*\cup U_{**})\ .
\end{align*}
Within the two neighborhoods,
\begin{align*}
Q''(\vartheta;\xi)&\ge c_0\xi
\qquad(\vartheta\in U_*\cup U_{**})\ .
\end{align*}
The first bound follows from
$Q=\xi F+2\sqrt{\xi}G+H$ and the positive gap of $F$ away from its minima.
In either neighborhood, set
$x=\sqrt{\xi}\,[\vartheta-\vartheta_\bullet(\xi)]$, where
$\vartheta_\bullet$ is the corresponding minimum. Taylor expansion about
that minimum, together with
$\partial_\vartheta^jQ=O(\xi)$ for $j=3,4$, gives a Gaussian integrand with
a relative $O(\xi^{-1/2})$ remainder, uniformly for the two minima.
Gaussian domination supplied by the second bound gives
\begin{align*}
\int_{U_\bullet}e^{-Q(\vartheta;\xi)}\,\dif\vartheta
=e^{-Q(\vartheta_\bullet;\xi)}
\sqrt{\frac{2\pi}{Q''(\vartheta_\bullet;\xi)}}
\bigl[1+O(\xi^{-1/2})\bigr].
\end{align*}
The contribution outside $U_*\cup U_{**}$ is exponentially smaller than
the sum of these terms. The same estimates remain uniform when $\xi$ is
replaced by $\xi+u$ with $u$ in a fixed interval.

Applying this estimate at the two non-degenerate minima gives
$g_\psi(\xi)\sim(2\pi)^{-1}
\sqrt{2\pi/(4\xi\sinh 2\varsigma)}
[e^{-Q(\vartheta_*;\xi)}+e^{-Q(\vartheta_{**};\xi)}]
[1+O(\xi^{-1/2})]$.
Inserting the two asymptotic formulas above gives
$e^{-Q(\vartheta_*)}+e^{-Q(\vartheta_{**})}
=2e^{-c\xi-C_0}\cosh(2cB\sqrt{\xi})[1+O(\xi^{-1/2})]$ and hence
\begin{align}
g_\psi(\xi) = \frac{e^{-C_0-c\xi}}{\sqrt{2\pi \xi \sinh 2\varsigma}}\,
\cosh(2cB\sqrt{\xi})\bigl[1 + O(\xi^{-1/2})\bigr]\ .
\label{eq:asympA}
\end{align}

No differentiation of the asymptotic remainder is needed. Write
$g_\psi(\xi)=e^{-c\xi}h_\psi(\xi)$. By \cref{eq:asympA},
\begin{align*}
h_\psi(\xi)=
\frac{e^{-C_0}}{\sqrt{2\pi\sinh 2\varsigma}}\,
\xi^{-1/2}\cosh(2cB\sqrt{\xi})
\bigl[1+O(\xi^{-1/2})\bigr]\ .
\end{align*}
For fixed $u\in[0,\Delta]$,
$\sqrt{\xi+u}-\sqrt{\xi}=O(\xi^{-1/2})$ uniformly in $u$.
Since $|\dif(\log\cosh x)/\dif x|=|\tanh x|\le1$, the explicit leading
factor satisfies
\begin{align*}
\frac{(\xi+u)^{-1/2}\cosh(2cB\sqrt{\xi+u})}
{\xi^{-1/2}\cosh(2cB\sqrt{\xi})}
=1+O(\xi^{-1/2})
\end{align*}
uniformly on this interval. The relative remainder in \cref{eq:asympA}
therefore gives
$h_\psi(\xi+u)/h_\psi(\xi)=1+O(\xi^{-1/2})$, again uniformly.

For the squeezed state, set $p_k:=P_{\psi,k}(\Delta)$. Then
$\lim_{k\to\infty}p_{k+1}/p_k=e^{-c\Delta}$.
To see this, set $\xi_k=(k-1)\Delta$. Substituting
$g_\psi(\xi)=h_\psi(\xi)e^{-c\xi}$ into the shell integral gives
\begin{align*}
p_k &= e^{-c\,\xi_k}\int_0^\Delta h_\psi(\xi_k+u)e^{-c\,u}\,\dif u
=h_\psi(\xi_k)\frac{1-e^{-c\Delta}}{c}e^{-c\,\xi_k}
\bigl[1+O(k^{-1/2})\bigr]\ .
\end{align*}
The adjacent-shell ratio is
\begin{align*}
\frac{p_{k+1}}{p_k}
&=\frac{h_\psi(k\Delta)}{h_\psi((k-1)\Delta)}
e^{-c\Delta}\bigl[1+O(k^{-1/2})\bigr]
\longrightarrow e^{-c\Delta}\ .
\end{align*}
Here, the preceding ratio estimate gives
$h_\psi(\xi_k+u)=h_\psi(\xi_k)[1+O(k^{-1/2})]$ uniformly for
$u\in[0,\Delta]$, while
$h_\psi(k\Delta)/h_\psi((k-1)\Delta)=1+O(k^{-1/2})\to1$.

To exclude the converse relation, write $\ket\psi=D(\alpha)S(\zeta)\ket0$ with
$(\alpha,\zeta)\neq(0,0)$. The coherent case $\zeta=0$ is established earlier in this section. Suppose now that $|\zeta|>0$. \Cref{eq:main} gives
$\boldsymbol P_\psi(\Delta)\prec\boldsymbol P_0(\Delta)$. If the converse
majorization relation also held, antisymmetry
would imply $\boldsymbol P_\psi^\downarrow(\Delta)=\boldsymbol
P_0(\Delta)$. As in the coherent case, this would require
$p_{k+1}/p_k\in\{e^{-m\Delta}:m\in\mathbb Z\}$ for every $k$.
On the other hand, the preceding limit is $e^{-c\Delta}$, where
$c=e^{-2|\zeta|}\in(0,1)$.
The limit belongs to $(e^{-\Delta},1)$, so the ratios themselves lie in this interval
for all sufficiently large $k$. This is impossible because
$\{e^{-m\Delta}:m\in\mathbb Z\}\cap(e^{-\Delta},1)=\varnothing$. Hence
$\boldsymbol P_0(\Delta)\nprec\boldsymbol P_\psi(\Delta)$, proving the strict
relation for every non-vacuum Gaussian pure state and every $\Delta>0$.
 
As a consistency check, for a squeezed vacuum ($a=0$), \cref{eq:asympA} reduces to
$g_{\varsigma}(\xi)\sim(2\pi \xi\sinh 2\varsigma)^{-1/2}e^{-c\xi}$, in agreement
with the large-argument expansion of the exact expression
$g_{\varsigma}(\xi)=e^{-\xi\cosh 2\varsigma}I_0(\xi\sinh 2\varsigma)$.
 
\smallskip
\noindent\textit{General radial partitions.}
The strictness extends to every complete radial partition $\Pi$ with bounded
squared-radius widths. For a non-vacuum single-mode Gaussian pure state in the
fixed oscillator frame, we prove
\begin{align}
\sum_{j=1}^{m}P_{\psi,j}^{\downarrow}(\Pi)
<1-e^{-D_m(\Pi)},\qquad m\ge1\ .
\label{eq:strict-gaussian-general}
\end{align}
The benchmark is $\boldsymbol P_0(\Pi^\downarrow)$ as defined in
\cref{eq:benchmark}. For $\Pi=\Pi_\Delta$, this gives a strict inequality
at every finite partial sum of the equal-area Lorenz curves.

The Gaussian Wigner function is strictly positive and normalized, and a
complete partition therefore gives a positive summable sequence of shell
probabilities. For any $c>0$, only finitely many entries can be at least $c$.
Choosing $c$ to be one of the remaining entries shows that their largest
value is attained. Repeating this argument after removing each selected
entry yields a finite index set $I_m$ that attains the sum of the $m$ largest
probabilities. Define
\begin{align}
A_{I_m}:=\bigcup_{i\in I_m}A_i(\Pi),\qquad
 a_m:=\sum_{i\in I_m}d_i\le D_m(\Pi)\ .
\end{align}
The union has area $\pi a_m$ and is invariant under rotations about the
oscillator origin. No attainment of the supremum defining $D_m(\Pi)$ is
assumed.

Write the pure Gaussian Wigner function in \cref{eq:wigner_gauss} as
\begin{align}
W_\psi(\boldsymbol r)
=\pi^{-1}\exp[-(\boldsymbol r-\boldsymbol r_\alpha)^{\mathsf T}
G(\boldsymbol r-\boldsymbol r_\alpha)],\qquad
G=(2\Sigma)^{-1}>0,\quad \det G=1\ .
\end{align}
For $a>0$, the ellipse
\begin{align}
E_a:=\{\boldsymbol r:(\boldsymbol r-\boldsymbol r_\alpha)^{\mathsf T}
G(\boldsymbol r-\boldsymbol r_\alpha)<a\}
\end{align}
has area $\pi a$. The change of variables
$\boldsymbol u=G^{1/2}(\boldsymbol r-\boldsymbol r_\alpha)$ has unit Jacobian
and gives $\int_{E_a}W_\psi\,\dif\boldsymbol r=1-e^{-a}$.
Moreover, $E_a$ uniquely maximizes the integral of $W_\psi$ over regions
of this area, up to sets of measure zero. Indeed, for any measurable $A$
with $|A|=|E_a|=\pi a$, put $t_a=\pi^{-1}e^{-a}$ and write
\begin{align}
\int_{E_a}W_\psi\,\dif\boldsymbol r-\int_AW_\psi\,\dif\boldsymbol r
=\int_{\mathbb R^2}(\mathbf 1_{E_a}-\mathbf 1_A)
(W_\psi-t_a)\,\dif\boldsymbol r\ge0\ .
\end{align}
The integrand is positive on the symmetric difference, except on the
ellipse boundary, which has measure zero. Equality therefore requires
$A=E_a$ almost everywhere.

For $A=A_{I_m}$, such equality would make $E_{a_m}$ invariant under
rotations about the origin up to null sets. Its centroid is
$\boldsymbol r_\alpha$, whereas rotational invariance forces that centroid
to vanish. With $\boldsymbol r_\alpha=0$, its normalized second-moment
matrix is
\begin{align}
\frac{1}{\pi a_m}\int_{E_{a_m}}
\boldsymbol r\boldsymbol r^{\mathsf T}\,\dif\boldsymbol r
=\frac{a_m}{4}G^{-1}\ .
\end{align}
Rotational invariance requires this matrix to be proportional to the
identity. Since $G>0$ and $\det G=1$, it follows that $G=\mathbb I$, which
identifies the state as the vacuum. Every non-vacuum Gaussian pure state
therefore satisfies
\begin{align}
\sum_{j=1}^{m}P_{\psi,j}^{\downarrow}(\Pi)
=\int_{A_{I_m}}W_\psi\,\dif\boldsymbol r
<1-e^{-a_m}\le1-e^{-D_m(\Pi)}\ .
\end{align}
This proves \cref{eq:strict-gaussian-general}, including partitions for
which the area supremum $D_m(\Pi)$ is not attained.

\section{Majorization under phase-insensitive Gaussian channels}\label{app:channel}

This section proves \cref{eq:channel-general-app} for a complete radial
partition with bounded squared-radius widths. Its equal-area specialization is
\cref{eq:channel-maj}. Up to a phase rotation, every single-mode
phase-insensitive Gaussian channel admits the canonical decomposition \cite{mari14}
\begin{align}
\Phi=\mathcal A_\kappa\circ\mathcal E_\eta\ ,
\label{eq:channel-decomp}
\end{align}
where $\mathcal E_\eta$ is a quantum-limited attenuator with
$0\le\eta\le1$ and $\mathcal A_\kappa$ is a quantum-limited amplifier with
$\kappa\ge1$. Both are covariant under phase rotations \cite{weedbrook12}.
Set $c_\Phi:=(2\kappa-1)^{-1}\in(0,1]$. Since
$\mathcal E_\eta(\ket0\bra0)=\ket0\bra0$, the vacuum output has radial
Wigner density
\begin{align}
g_{\Phi(\ket0\bra0)}(\xi)=c_\Phi e^{-c_\Phi\xi}\ .
\label{eq:channel-vacuum-profile}
\end{align}
For $D_m(\Pi)$ defined in \cref{eq:Dm-app}, define the output-vacuum
benchmark components by
$P_{\Phi(\ket0\bra0),m}(\Pi^\downarrow):=
e^{-c_\Phi D_{m-1}(\Pi)}-e^{-c_\Phi D_m(\Pi)}$.
The complete benchmark is
\begin{align}
\boldsymbol P_{\Phi(\ket0\bra0)}(\Pi^\downarrow)
&:=\bigl(P_{\Phi(\ket0\bra0),m}(\Pi^\downarrow)\bigr)_{m\ge1}\ .
\label{eq:channel-output-benchmark}
\end{align}
This definition remains meaningful when the width suprema are not attained.
Indeed, let $\delta_m=D_m-D_{m-1}$. Monotonicity and concavity of $D_m$
imply $\delta_m\ge0$ and $\delta_{m+1}\le\delta_m$, while completeness
gives $D_m\to\infty$.
Consequently,
\begin{align*}
P_{\Phi(\ket0\bra0),m+1}(\Pi^\downarrow)
&=e^{-c_\Phi D_m}(1-e^{-c_\Phi\delta_{m+1}})\\
&\le e^{-c_\Phi D_{m-1}}(1-e^{-c_\Phi\delta_m})
=P_{\Phi(\ket0\bra0),m}(\Pi^\downarrow)\ .
\end{align*}
The components are normalized because
\begin{align}
\sum_{j=1}^{M}P_{\Phi(\ket0\bra0),j}(\Pi^\downarrow)
&=1-e^{-c_\Phi D_M}\xrightarrow[M\to\infty]{}1\ .
\label{eq:channel-benchmark-properties}
\end{align}
Thus \cref{eq:channel-output-benchmark} is a non-negative decreasing
probability vector. For the first $N$ widths, let $D_m^{(N)}$ be the sum of
their $m$ largest members. Placing these widths consecutively from the origin
gives the vacuum-output cumulative weight $1-e^{-c_\Phi D_m^{(N)}}$.
Since $D_m^{(N)}\uparrow D_m$, this converges to the cumulative sum in
\cref{eq:channel-benchmark-properties}. Hence the benchmark is the limit of
finite reorderings; it is the shell vector of an actual reordered partition
whenever the widths admit a non-increasing enumeration.

Because the shells are rotation invariant, their weights depend only on
the phase-averaged input. Pure loss requires no additional estimate:
$\mathcal E_\eta$ fixes the vacuum, so the main majorization relation applied to
$\mathcal E_\eta(\rho)$ gives the desired majorization relation directly.
It remains to
prove the corresponding statement for the quantum-limited amplifier.

\smallskip
\noindent\textit{Output majorization.} Let $\Phi$ be a
single-mode phase-insensitive Gaussian channel and let
$\Pi$ be a complete radial partition with bounded squared-radius widths. If
$\boldsymbol P_{\Phi(\rho)}(\Pi)$ is componentwise non-negative, then it is a
probability vector and
\begin{align}
\boldsymbol P_{\Phi(\rho)}(\Pi)
\prec
\boldsymbol P_{\Phi(\ket0\bra0)}(\Pi^\downarrow)\ .
\label{eq:channel-general-app}
\end{align}
For the equal-area partition $\Pi_\Delta$, this reduces to
\begin{align}
\boldsymbol P_{\Phi(\rho)}(\Delta)
\prec\boldsymbol P_{\Phi(\ket0\bra0)}(\Delta)\ .
\label{eq:channel-maj}
\end{align}
Here $D_m=m\Delta$ and the output-vacuum vector on the original partition is
\begin{align}
P_{\Phi(\ket0\bra0),m}(\Delta)
=(1-e^{-c_\Phi\Delta})e^{-c_\Phi(m-1)\Delta}\ .
\label{eq:channel-equal-vacuum}
\end{align}

The Fock-state Wigner convention is
$g_m(\xi):=\pi W_m(\boldsymbol r)|_{|\boldsymbol r|^2=\xi}
=(-1)^mL_m(2\xi)e^{-\xi}$, with $\int_0^\infty g_m(\xi)\,\dif\xi=1$.
The calculation repeatedly uses the Laguerre generating
function \cite[Eq.~18.12.13]{dlmf24}
\begin{align}
\sum_{m=0}^{\infty}w^mL_m(z)=\frac{1}{1-w}
\exp\!\left(\frac{zw}{w-1}\right),\qquad |w|<1\ .
\label{eq:channel-LGF}
\end{align}

\smallskip
\noindent\textit{Amplifier output.}
For the quantum-limited amplifier $\mathcal A_\kappa$ with $\kappa>1$, set
$\lambda=(\kappa-1)/\kappa$ and
$c:=c_\Phi=(1-\lambda)/(1+\lambda)=(2\kappa-1)^{-1}\in(0,1)$. Its transition
probabilities in the Fock basis are
$p_{n|m}^{\mathcal A_\kappa}
=\binom{n}{m}(1-\lambda)^{m+1}\lambda^{n-m}$ for $n\ge m$.
This is the negative-binomial Fock kernel of the quantum-limited
amplifier \cite{ivan11}.
Consider the double generating function
$G_{\mathcal A}(\xi,t):=\sum_{m=0}^\infty
g_{\mathcal A_\kappa(\ket{m}\bra{m})}(\xi)t^m$. For $|t|<1$, the Wigner
bound $|g_n(\xi)|\le1$ and normalization of the amplifier transition
probabilities give
\begin{align*}
\sum_{m=0}^\infty |t|^m\sum_{n=m}^\infty
p_{n|m}^{\mathcal A_\kappa}|g_n(\xi)|
\le\frac{1}{1-|t|}.
\end{align*}
Thus the double series is absolutely convergent and its order may be
exchanged. Moreover,
$|\lambda+(1-\lambda)t|\le\lambda+(1-\lambda)|t|<1$, so the resulting
Laguerre series remains inside the convergence disk of
\cref{eq:channel-LGF}. Exchanging the order of
summation gives
\begin{align}
G_{\mathcal A}(\xi,t)
&=e^{-\xi}\sum_{n=0}^{\infty}(-1)^nL_n(2\xi)
\sum_{m=0}^{n}\binom{n}{m}(1-\lambda)^{m+1}\lambda^{n-m}t^m \notag\\
&=(1-\lambda)e^{-\xi}
\sum_{n=0}^{\infty}[-\lambda-(1-\lambda)t]^nL_n(2\xi)\ .
\end{align}
Applying \cref{eq:channel-LGF} with
$w=-[\lambda+(1-\lambda)t]$ yields
$G_{\mathcal A}(\xi,t)=c(1+ct)^{-1}
\exp[-c\xi(1-t)/(1+ct)]$.
The $t=0$ term is $g_{\mathcal A_\kappa(\ket0\bra0)}(\xi)=ce^{-c\xi}$.
Dividing by this term and applying \cref{eq:channel-LGF} again, now with
$w=-ct$ and $z=(1+c)\xi$, gives
\begin{align}
g_{\mathcal A_\kappa(\ket{m}\bra{m})}(\xi)
=(-c)^mL_m((1+c)\xi)\,ce^{-c\xi}\ .
\label{eq:channel-amp-profile}
\end{align}
Thus, after the rescaling $y=c\xi$, the vacuum output becomes $e^{-y}$
and the $m$th output becomes
$\tilde g_m(y)=(-c)^mL_m[(1+c)y/c]e^{-y}$.

\smallskip
\noindent\textit{Cumulative weights and tail monotonicity.}
With $\beta=2c/(1+c)$, the function below measures the difference
between the cumulative weights
of the amplified vacuum and the $m$th output after rescaling.
We establish its non-negativity through a representation that also yields
the tail monotonicity needed for majorization.
\begin{align}
F_m(T):=
\int_0^T
\left[1-(-c)^mL_m\!\left(\frac{2y}{\beta}\right)\right]e^{-y}\,\dif y
\ge0\ .
\label{eq:channel-F-detailed}
\end{align}
For $m=0$, this is an equality. Hence assume $m\ge1$ below.

\smallskip
\noindent\textit{Positivity for $0<c\le1$.}
Set $Q_n(z,c):=\sum_{j=0}^n(-c)^jL_j(z)$ and
$H_m(z,c):=Q_m(z,c)+cQ_{m-1}(z,c)-1$.
First, $H_m(0,c)=0$ because $L_j(0)=1$ and
$Q_n(0,c)=[1-(-c)^{n+1}]/(1+c)$.
Second, $F_m(T)=e^{-T}H_m(2T/\beta,c)$. Indeed, if
$\Phi(T)=e^TF_m(T)$, then
$\Phi'(T)=\Phi(T)+1-(-c)^mL_m(2T/\beta)$, with $\Phi(0)=0$.
It remains to verify that $\widetilde H(T):=H_m(2T/\beta,c)$ satisfies
the same initial-value problem. With $Q_{-1}:=0$, the derivative identity
\begin{align}
\partial_zH_m(z,c)=cQ_{m-1}(z,c)\ ,
\label{eq:channel-H-derivative}
\end{align}
follows from $L_k'(z)=-L_{k-1}^{(1)}(z)$ and
$L_n^{(1)}(z)=\sum_{j=0}^nL_j(z)$. The derivative formula is
Eq.~18.9.23 of Ref. \cite{dlmf24}. The finite sum follows by comparing
the ordinary and associated Laguerre generating functions. After interchanging the finite sums
in $\partial_zQ_n$ and combining the terms in
$\partial_zQ_m+c\,\partial_zQ_{m-1}$, all associated-Laguerre terms
cancel except $c(-c)^{m-1}L_{m-1}$, giving
$\partial_zH_m=cQ_{m-2}+c(-c)^{m-1}L_{m-1}=cQ_{m-1}$.
Hence $\widetilde H'(T)=(2c/\beta)Q_{m-1}(2T/\beta,c)
=(1+c)Q_{m-1}(2T/\beta,c)$.
Using $Q_m=Q_{m-1}+(-c)^mL_m$, one checks
$\widetilde H(T)+1-(-c)^mL_m(2T/\beta)
=(1+c)Q_{m-1}(2T/\beta,c)=\widetilde H'(T)$.
Uniqueness of the linear ODE then proves $F_m=e^{-T}H_m$.

It remains to show $H_m\ge0$. Abel summation gives
\begin{align}
Q_n(z,c)=c^nQ_n(z,1)+(1-c)\sum_{j=0}^{n-1}c^jQ_j(z,1)\ .
\label{eq:channel-abel}
\end{align}
The sums $Q_j(z,1)$ are non-negative directly from the Mehler
decomposition,
$Q_j(z,1)=\sum_{\ell=0}^j(-1)^\ell L_\ell(z)=S_j(z/2)\ge0$,
by \cref{eq:mehler-sos-app}. Since the coefficients in
\cref{eq:channel-abel} are non-negative for $c\in[0,1]$,
$Q_n(z,c)\ge0$.
Together with \cref{eq:channel-H-derivative} and $H_m(0,c)=0$, this
implies $H_m(z,c)\ge0$, and hence $F_m(T)\ge0$.

\smallskip
\noindent\textit{From tail inequalities to majorization.}
Define $\tilde\Psi_m(T)=e^T\int_T^\infty\tilde g_m(y)\,\dif y$.
Since
$\int_0^T\tilde g_m(y)\,\dif y=(1-e^{-T})-F_m(T)$,
$\tilde\Psi_m(T)=1+e^TF_m(T)=1+H_m(2T/\beta,c)$ and therefore
$\tilde\Psi_m'(T)=(1+c)Q_{m-1}(2T/\beta,c)\ge0$.
The last inequality is the positivity just proved for $0<c\le1$.

Let $\Pi$ have squared-radius endpoints $0=x_0<x_1<\cdots$, widths
$d_k=x_k-x_{k-1}$, and set $y_k=cx_k$ and
$\widetilde d_k:=y_k-y_{k-1}=cd_k$. Denote the rescaled partition by
$\widetilde\Pi$ and define the tails
$\widetilde T_m(k):=\int_{y_{k-1}}^\infty\tilde g_m(y)\,\dif y$. The
associated shell weights are
$\widetilde P_m(k):=\widetilde T_m(k)-\widetilde T_m(k+1)$.
Thus
$\widetilde{\boldsymbol P}_m(\widetilde\Pi)
:=(\widetilde P_m(k))_{k\ge1}$; by the change of variables $y=c\xi$,
$\widetilde P_m(k)=P_{\mathcal A_\kappa(\ket m\bra m),k}(\Pi)$ for every
$k$.
Since
$\tilde\Psi_m(y_{k-1})=e^{y_{k-1}}\widetilde T_m(k)$ and
$\tilde\Psi_m(y_k)=e^{y_k}\widetilde T_m(k+1)$,
monotonicity of $\tilde\Psi_m$ gives the tail inequality
\begin{align}
\widetilde T_m(k+1)
\ge e^{-\widetilde d_k}\widetilde T_m(k)\ .
\label{eq:channel-tail-detailed}
\end{align}
If $\widetilde{\boldsymbol P}_m(\widetilde\Pi)$ is componentwise
non-negative, the normalization lemma \cref{lem:shell-normalization-app},
applied to the physical output state before the change of variable, shows
that these weights sum to one. Hence $\widetilde T_m(k)$ are the tails of a
probability vector, and the subset-sum argument leading to
\cref{eq:variable-tail-majorization} gives
\begin{align}
\widetilde{\boldsymbol P}_m(\widetilde\Pi)
\prec
\widetilde{\boldsymbol P}_0(\widetilde\Pi^\downarrow)\ .
\label{eq:channel-fock-general}
\end{align}
The rescaling gives
\begin{align}
\widetilde D_\ell
:=\sup_{|I|=\ell}\sum_{i\in I}\widetilde d_i
=cD_\ell(\Pi),
\label{eq:channel-rescaled-order-statistics}
\end{align}
and therefore
\begin{align}
\widetilde P_{0,\ell}(\widetilde\Pi^\downarrow)
&=e^{-\widetilde D_{\ell-1}}-e^{-\widetilde D_\ell}\notag\\
&=e^{-cD_{\ell-1}(\Pi)}-e^{-cD_\ell(\Pi)}
=P_{\mathcal A_\kappa(\ket0\bra0),\ell}(\Pi^\downarrow),
\label{eq:channel-rescaled-benchmark}
\end{align}
which identifies the limiting benchmark even when no width supremum is
attained.

Phase covariance gives
$\boldsymbol P_{\mathcal A_\kappa(\rho)}(\Pi)=
\boldsymbol P_{\mathcal A_\kappa(\bar\rho)}(\Pi)$, where
$\bar\rho=\sum_m \rho_m\ket m\bra m$. Since the exterior region operator is
bounded, trace-norm convergence of this Fock decomposition gives
$\widetilde T_\rho(k)=\sum_m\rho_m\widetilde T_m(k)$ and
$P_{\mathcal A_\kappa(\rho),k}(\Pi)
=\widetilde T_\rho(k)-\widetilde T_\rho(k+1)$. Hence
\begin{align}
\widetilde T_\rho(k+1)
\ge e^{-\widetilde d_k}\widetilde T_\rho(k)
\qquad(k\ge1).
\label{eq:channel-mixed-tail}
\end{align}
If $\boldsymbol P_{\mathcal A_\kappa(\rho)}(\Pi)$ is componentwise
non-negative, \cref{lem:shell-normalization-app} makes it a probability
vector. Applying the subset-sum lemma with widths
$\widetilde d_k=cd_k$ and using \cref{eq:channel-rescaled-benchmark} gives
$\boldsymbol P_{\mathcal A_\kappa(\rho)}(\Pi)
\prec\boldsymbol P_{\mathcal A_\kappa(\ket0\bra0)}(\Pi^\downarrow)$.

Now let $\Phi$ be an arbitrary single-mode phase-insensitive Gaussian channel
and use the decomposition \cref{eq:channel-decomp}. Set
$\sigma:=\mathcal E_\eta(\rho)$. Since the attenuator fixes the vacuum,
$\mathcal E_\eta(\ket0\bra0)=\ket0\bra0$. If $\kappa=1$, then
$c_\Phi=1$ and \cref{eq:channel-output-benchmark} reduces to
$\boldsymbol P_0(\Pi^\downarrow)$; the claim is the main majorization
relation applied to $\sigma$. If $\kappa>1$, the amplifier result applied to
$\sigma$ gives
\begin{align}
\boldsymbol P_{\Phi(\rho)}(\Pi)
&=\boldsymbol P_{\mathcal A_\kappa(\sigma)}(\Pi)\notag\\
&\prec
\boldsymbol P_{\mathcal A_\kappa(\ket0\bra0)}(\Pi^\downarrow)
=\boldsymbol P_{\Phi(\ket0\bra0)}(\Pi^\downarrow)\ .
\end{align}
No pointwise non-negativity of the input or output Wigner function is required;
the sole sign assumption is componentwise non-negativity of
$\boldsymbol P_{\Phi(\rho)}(\Pi)$. This completes the proof
for the full phase-insensitive class.

\section{Radial negativity scale}\label{app:negativity-scale}

Throughout this section, the partition is the equal-area family
$\Pi_\Delta$. Thus, $\Delta_\rho^*$ is a resolution scale for equal-area
shells, not a parameterization of arbitrary unequal bins. This section
establishes the finiteness of $\Delta_\rho^*$ for several broad state families,
derives its semiclassical Airy asymptotics for Fock states, and shows that
finite mean energy alone does not ensure a finite scale.

\smallskip
\noindent\textit{Far-field criterion.} Let
$g_\rho(\xi)=\pi W_{\bar\rho}(\boldsymbol r)
|_{|\boldsymbol r|^2=\xi}$ be a normalized radial Wigner profile. If
$g_\rho(\xi)\ge0$ for all $\xi>\xi_{\rm ff}$ with some finite
$\xi_{\rm ff}$, then
$\Delta_\rho^*<\infty$.

Let $T(a):=\mathcal T_\rho(a)$. As shown in the proof of
\cref{eq:tail-abel-app}, $T(a)\ge0$ and
\begin{align*}
a_{\rm L}\int_0^\infty e^{-a_{\rm L}a}T(a)\,\dif a
\xrightarrow[a_{\rm L}\downarrow0]{}0.
\end{align*}
The Wigner bound makes $g_\rho$ locally integrable \cite{hillery84}. Since
$g_\rho(a)\ge0$ for $a>\xi_{\rm ff}$, the finite-interval identity
$T(a)-T(b)=\int_a^b g_\rho(u)\,\dif u$ shows that $T$ is non-increasing on
$[\xi_{\rm ff},\infty)$. Hence $T(a)\downarrow L$ for some $L\ge0$. For
every $a_{\rm L}>0$,
\begin{align*}
a_{\rm L}\int_0^\infty e^{-a_{\rm L}a}T(a)\,\dif a
\ge a_{\rm L}\int_{\xi_{\rm ff}}^\infty e^{-a_{\rm L}a}L\,\dif a
=Le^{-a_{\rm L}\xi_{\rm ff}}.
\end{align*}
Taking $a_{\rm L}\downarrow0$ and using \cref{eq:tail-abel-app} gives $L=0$.
Thus $T(a)\to0$ as $a\to\infty$.
Choose $\Delta_0>\xi_{\rm ff}$ such that $T(\Delta)<1/2$ whenever
$\Delta>\Delta_0$. Then $P_{\rho,1}(\Delta)=1-T(\Delta)>1/2$, and
every shell with $k\ge2$ lies in the non-negative region. Therefore
$\boldsymbol P_\rho(\Delta)$ is componentwise non-negative for all
$\Delta>\Delta_0$.

Every finite-stellar-rank pure state has a finite radial negativity scale. To
show this, let $\rho=\ket\psi\bra\psi$ and write
$\ket{\psi}=D(\alpha)S(\zeta)\ket{g_N}$ with
$\ket{g_N}=\sum_{n=0}^N c_n\ket{n}$ and
$c_N\neq0$ \cite{chabaud20}. This is precisely the pure-state class whose
Wigner nodal set is bounded \cite{abreu25}.

If $N=0$, then $\ket\psi$ is Gaussian and its Wigner function is pointwise
positive, so $\Delta_\rho^*=0$. Assume $N\ge1$ below.

The first step is to show that $W_{g_N}$ is positive at sufficiently large
radius. Set
$z=q+ip$ and $\xi=|z|^2=q^2+p^2$. The Hermite--Wigner formula gives \cite{hillery84}
$W_{g_N}(z)=\pi^{-1}e^{-|z|^2}[D(\xi)+X(z,\bar z)]$, with diagonal
part $D(\xi)=\sum_{n=0}^N|c_n|^2(-1)^nL_n(2\xi)$.
The interference term is
\begin{align*}
X(z,\bar z)
&=2\operatorname{Re}\sum_{k=1}^{N}\sum_{n=0}^{N-k}
c_{n+k}\bar c_n\,(-1)^n \\
&\quad{}\times
\sqrt{\frac{2^k n!}{(n+k)!}}\,\bar z^k L_n^{(k)}(2\xi)\ .
\end{align*}
The diagonal part $D$ is a polynomial of degree $N$ in $\xi$. Since
$(-1)^NL_N(2\xi)=2^N\xi^N/N!+(\text{lower order})$, its leading coefficient
is $\alpha_N:=|c_N|^2 2^N/N!>0$.
In particular, there exists $M_1>0$ such that
$|D(\xi)-\alpha_N\xi^N|\le M_1\xi^{N-1}$ for all $\xi\ge1$. Each cross
term contains $|\bar z^k|=\xi^{k/2}$ and a polynomial
$L_n^{(k)}(2\xi)$ of degree $n\le N-k$. Its modulus is therefore bounded
by $C\xi^{n+k/2}\le C\xi^{N-1/2}$ for $\xi\ge1$. Summing the finitely many
cross terms gives a constant $M_2>0$ such that
$|X(z,\bar z)|\le M_2\xi^{N-1/2}$ uniformly in the angular variable.
Consequently, for every
$\xi>R_1:=\max\bigl(1,\,(M_1+M_2)^2/\alpha_N^2\bigr)$,
\begin{align}
D(\xi)+X(z,\bar z)&\ge \alpha_N \xi^N-(M_1+M_2)\xi^{N-1/2} \notag\\
&=\xi^{N-1/2}\bigl[\alpha_N \xi^{1/2}-(M_1+M_2)\bigr]>0\ .
\end{align}

Covariance under displacement and squeezing transfers this far-field
positivity to $W_\psi$ through
$W_\psi(\boldsymbol r)=W_{g_N}\!\bigl(\mathsf S_\zeta^{-1}
(\boldsymbol r-\boldsymbol r_\alpha)\bigr)$,
where $\mathsf S_\zeta$ is the symplectic matrix associated with $S(\zeta)$ and
$\boldsymbol r_\alpha$ is the displacement vector. Let $\sigma_{\min}>0$ be
the smallest singular value of $\mathsf S_\zeta^{-1}$. Then
$|\mathsf S_\zeta^{-1}(\boldsymbol r-\boldsymbol r_\alpha)|^2
\ge\sigma_{\min}^2(|\boldsymbol r|-|\boldsymbol r_\alpha|)^2$.
Thus $W_\psi(\boldsymbol r)>0$ whenever
$|\boldsymbol r|>R_0:=|\boldsymbol r_\alpha|+\sqrt{R_1}/\sigma_{\min}$.

The phase-averaged profile inherits this property. For $R>R_0$, every
point on the circle $|\boldsymbol r|=R$ has positive $W_\psi$, and hence
$W_{\bar\rho}(R)=(2\pi)^{-1}\int_0^{2\pi}
W_\psi(R\cos\theta,R\sin\theta)\,\dif\theta>0$. Therefore
$g_\rho(\xi)>0$ for all $\xi>R_0^2$.
The far-field criterion above yields the conclusion.

For a finite mixture $\rho=\sum_iw_i\ket{\psi_i}\!\bra{\psi_i}$ of
finite-stellar-rank states, with $w_i\ge0$ and $\sum_iw_i=1$, linearity of
the shell weights gives
$\Delta_\rho^*\le\max_i\Delta_{\psi_i}^*<\infty$.

The radial negativity scale is also finite for every cat state
$\ket{\mathrm{cat}_\pm(\alpha)}\propto
(\ket{\alpha}\pm\ket{-\alpha})$ with $\alpha\ne0$. By the $U(1)$ covariance
of $\Delta_\rho^*$, assume $\alpha>0$ without loss of generality and write
$\rho_\pm:=\ket{\mathrm{cat}_\pm(\alpha)}
\bra{\mathrm{cat}_\pm(\alpha)}$.

\smallskip
\noindent\textit{Phase-averaged profile.}
Using the Fock-basis representation of the Wigner function of
$\ket n\!\bra m$ \cite{hillery84}, the cat-state Wigner function evaluates to
\begin{align}
W_{\mathrm{cat}_\pm}(\gamma)
&=\frac{1}{\pi\mathcal{N}_\pm^2}
\begin{aligned}[t]
\Bigl[&e^{-2|\gamma-\alpha|^2}+e^{-2|\gamma+\alpha|^2}\\[-0.35em]
&{}\pm2e^{-2|\gamma|^2}
\cos\bigl(4\operatorname{Im}(\gamma^*\alpha)\bigr)\Bigr]
\end{aligned}\ .
\label{eq:Wcat}
\end{align}
Here, $\mathcal{N}_\pm^2=2(1\pm e^{-2\alpha^2})$ and
$\gamma=(q+ip)/\sqrt{2}$. To retain the radial convention used throughout this
work, set $\xi=q^2+p^2=2|\gamma|^2$ and
$\gamma=\sqrt{\xi/2}\,e^{i\theta}$.
Phase averaging the Gaussian term uses $(2\pi)^{-1}\int_0^{2\pi}
e^{\pm2\sqrt{2}\alpha\sqrt{\xi}\cos\theta}\,\dif\theta
=I_0(2\sqrt{2}\alpha\sqrt{\xi})$ \cite[Eq.~10.32.1]{dlmf24}. The
interference term similarly gives $(2\pi)^{-1}\int_0^{2\pi}
\cos(2\sqrt{2}\alpha\sqrt{\xi}\sin\theta)\,\dif\theta
=J_0(2\sqrt{2}\alpha\sqrt{\xi})$ \cite[Eq.~10.9.1]{dlmf24},
where $I_0$ and $J_0$ are the modified and ordinary Bessel functions of the
first kind. Thus, the normalized radial profile
$g_\pm(\xi):=\pi W_{\bar\rho_\pm}(\boldsymbol r)|_{|\boldsymbol r|^2=\xi}$
is $g_\pm(\xi)=2[G(\xi)\pm\Xi(\xi)]/\mathcal N_\pm^2$, with
$G(\xi):=e^{-(\xi+2\alpha^2)}I_0(2\sqrt{2}\alpha\sqrt{\xi})$ and
$\Xi(\xi):=e^{-\xi}J_0(2\sqrt{2}\alpha\sqrt{\xi})$.

\smallskip
\noindent\textit{A lower bound on $G$.}
From the integral representation $I_0(z)=\tfrac{1}{\pi}\int_0^\pi e^{z\cos\theta}\,\dif\theta$ \cite[Eq.~10.32.1]{dlmf24}, the substitution $u=\sin(\theta/2)$ gives
\begin{align}
I_0(z)e^{-z}=\frac{2}{\pi}\int_0^1 \frac{e^{-2zu^2}}{\sqrt{1-u^2}}\,\dif u
\ge\frac{2}{\pi}\int_0^1 e^{-2zu^2}\,\dif u
=\frac{\operatorname{erf}(\sqrt{2z})}{\sqrt{2\pi z}}\ ,
\end{align}
valid for every $z>0$. The inequality uses $(1-u^2)^{-1/2}\ge 1$ on
$[0,1)$, followed by evaluation of the remaining Gaussian integral. Taking
$z=2\sqrt{2}\alpha\sqrt{\xi}$ and using
$-(\xi+2\alpha^2)+2\sqrt{2}\alpha\sqrt{\xi}
=-(\sqrt{\xi}-\sqrt{2}\alpha)^2$ gives
\begin{align}\label{eq:Gbound}
G(\xi)\ge
\frac{\operatorname{erf}(\sqrt{4\sqrt{2}\alpha\sqrt{\xi}})}
{\sqrt{4\sqrt{2}\pi\alpha\sqrt{\xi}}}
e^{-(\sqrt{\xi}-\sqrt{2}\alpha)^2}\ .
\end{align}

\smallskip
\noindent\textit{Interference bound.} Since $|J_0(z)|\le 1$ for
$z\in\RR$ \cite[Eq.~10.14.1]{dlmf24}, $|\Xi(\xi)|\le e^{-\xi}$.

\smallskip
\noindent\textit{Far-field dominance.} Combining \cref{eq:Gbound} with the
interference bound, write $r:=\sqrt{\xi}$ and define
$P(r):=\operatorname{erf}(\sqrt{4\sqrt{2}\alpha r})/
\sqrt{4\sqrt{2}\pi\alpha r}$.
Then
\begin{align}\label{eq:ratio}
\frac{G(\xi)}{e^{-\xi}}\ge
P(r)e^{2\sqrt{2}\alpha r-2\alpha^2}\ .
\end{align}
The prefactor is strictly positive for $r>0$. Moreover,
$P(r)\to2/\pi$ as $r\to0^+$, by the Taylor expansion of the error
function \cite[Eq.~7.6.1]{dlmf24}, whereas
$P(r)\sim(4\sqrt{2}\pi\alpha r)^{-1/2}$ as $r\to\infty$. Hence
$\log[G(\xi)e^\xi]\ge
\log P(r)+2\sqrt{2}\alpha r-2\alpha^2\to+\infty$.
Therefore there exists $r_*(\alpha)<\infty$ such that
$G(\xi)>e^{-\xi}\ge|\Xi(\xi)|$ whenever
$\sqrt{\xi}>r_*(\alpha)$. By
$g_\pm=2(G\pm\Xi)/\mathcal N_\pm^2$, the profiles are positive for all
$\xi>R_0(\alpha):=r_*(\alpha)^2<\infty$.

\smallskip
\noindent\textit{Shell non-negativity.} This eventual positivity satisfies the
far-field criterion. Hence
$\Delta_{\mathrm{cat}_\pm}^*<\infty$.

\smallskip\noindent\textit{Remark.}
Cat states show explicitly that the finiteness of $\Delta_\rho^*$ is not
restricted to finite-stellar-rank states. The argument rests on the eventual
dominance of the phase-averaged coherent peaks over the interference term.

For Fock states, Bracken et al. obtained the disk-region eigenvalue
$\lambda_n^D(a)=C_n(a^2)$ \cite{bracken99,bracken03}. The question left open
for the present purpose is the location of its last zero, which determines when
all shell weights become non-negative. Berry's Airy-edge asymptotics give
the semiclassical mechanism, and Ref. \cite{hanin20} establishes this local
scaling rigorously for harmonic-oscillator Wigner distributions. The proof
below proceeds directly from uniform Laguerre--Airy asymptotics. It derives
uniform asymptotics for the radial profile and its cumulative integral, then
locates the last zero that fixes the critical resolution in
\cref{eq:airy-law}.

For a Fock state, set
$g_n(\xi)=(-1)^nL_n(2\xi)e^{-\xi}$ and
$C_n(X)=\int_0^X g_n(\xi)\,\dif \xi$, and write
$\nu=4n+2$, $\xi_n=2n+1$, and $\ell_n=2^{-1/3}\nu^{1/3}$.
Let $\mathcal I_{\rm Ai}(t)=\int_{-\infty}^t\mathrm{Ai}(u)\,\dif u$.

\smallskip
\noindent\textit{Fock-state Airy asymptotics.} Let $t_0$ be the largest zero of
$\mathcal I_{\rm Ai}$. Then, as $n\to\infty$,
\begin{align}
\Delta_n^*=(2n+1)+2^{-1/3}t_0(4n+2)^{1/3}+O(n^{-1/3})\ .
\label{eq:airy-law-app}
\end{align}

\smallskip
\noindent\textit{Reduction to the cumulative profile.} Fix $n\ge1$. Suppose
that $C_n$ attains negative values, let
$X_0:=\sup\{X>0:C_n(X)<0\}$, and assume that $g_n\ge0$ on
$[X_0,\infty)$. Then $\Delta_n^*=X_0$.

Since $C_n$ is continuous and $C_n(X)\to1$ as $X\to\infty$, the number
$X_0$ is finite, $C_n(X_0)=0$, and $C_n(X)\ge0$ for $X\ge X_0$.
If $\Delta\ge X_0$, then the first shell has mass
$C_n(\Delta)\ge0$, and every later shell is contained in the region where
$g_n\ge0$. Thus, no shell mass is negative. Conversely, for every
$\epsilon>0$ there is $X\in(X_0-\epsilon,X_0)$ with $C_n(X)<0$, and
choosing $\Delta=X$ makes the first shell negative. This proves the claim.

The input asymptotics are the Airy transition formula for Laguerre polynomials.
Let $u=2\xi$, $\upsilon=u/\nu$, and define the Langer variable $\zeta_{\mathrm{L}}$ by
\begin{align}
\frac23\zeta_{\mathrm{L}}(\upsilon)^{3/2}
=\frac12\int_1^\upsilon\sqrt{1-\frac1w}\,\dif w
\qquad(\upsilon\ge1)\ ,
\label{eq:airy-detail-zeta-plus}\\[-0.35em]
\frac23[-\zeta_{\mathrm{L}}(\upsilon)]^{3/2}
=\frac12\int_\upsilon^1\sqrt{\frac1w-1}\,\dif w
\qquad(\upsilon<1)\ .
\label{eq:airy-detail-zeta-minus}
\end{align}
Set $\chi(\upsilon):=2^{1/2}\upsilon^{-1/4}
[\zeta_{\mathrm L}(\upsilon)/(\upsilon-1)]^{1/4}$, with the removable
singularity at $\upsilon=1$ filled in by continuity.

The only external asymptotic input is the uniform Frenzen--Wong expansion with
its error bound. Define the Airy derivative modulus
\begin{align}
\mathfrak A_1(z):=
\begin{cases}
|\mathrm{Ai}'(z)|,&z\ge0,\\
\bigl[\mathrm{Ai}'(z)^2+\mathrm{Bi}'(z)^2\bigr]^{1/2},&z<0,
\end{cases}
\label{eq:airy-derivative-modulus}
\end{align}
where $\mathrm{Bi}$ is the second Airy function. On every compact
$\upsilon$-interval contained in $(0,\infty)$,
\begin{align}
(-1)^ne^{-u/2}L_n(u)
&=\frac{\chi(\upsilon)}{\nu^{1/3}}
\mathrm{Ai}\!\left(\nu^{2/3}\zeta_{\mathrm L}(\upsilon)\right)
+\mathcal R_n(\upsilon)\ ,\notag\\
|\mathcal R_n(\upsilon)|
&\le C\nu^{-5/3}\beta_1(\upsilon)
\mathfrak A_1\!\left(\nu^{2/3}\zeta_{\mathrm L}(\upsilon)\right)\ ,
\label{eq:airy-detail-fw}
\end{align}
where $\beta_1$ is continuous and therefore bounded on that interval.
This is the $\alpha=0$ specialization of the Frenzen--Wong
formula \cite[Sec.~5, Eq.~(5.12)]{frenzenwong88}. The explicit error estimate
in this normalization is restated in Ref. \cite[Prop.~4.1]{hanin20}.

We also use the exterior estimate of
Ref. \cite[Prop.~1.7]{hanin20}. For each fixed $\delta>0$, it implies
\begin{align}
\frac{\nu}{2}\int_{1+\delta}^{\infty}
\left|(-1)^ne^{-\nu\upsilon/2}L_n(\nu\upsilon)\right|
\,\dif\upsilon
=O(e^{-c_\delta\nu})\ ,
\label{eq:airy-exterior-tail}
\end{align}
for some $c_\delta>0$. Indeed, the exterior rate in that result is strictly
positive at $1+\delta$, increases thereafter, and grows linearly as
$\upsilon\to\infty$.

Near the turning point, the Langer variable satisfies, uniformly for $t$ in
compact subsets of $\mathbb R$,
\begin{align}
Z_\nu(t):=\nu^{2/3}\zeta_{\mathrm{L}}\!\left(1+2^{2/3}\nu^{-2/3}t\right)
=t-\frac{2^{2/3}}{5}t^2\nu^{-2/3}+O(\nu^{-4/3})\ .
\label{eq:airy-detail-langer}
\end{align}
In particular, $\zeta_{\mathrm{L}}'(1)=2^{-2/3}$ and $\chi(1)=2^{1/3}$.
To verify \cref{eq:airy-detail-langer}, put $w=1+v$. Then
\begin{align*}
\sqrt{1-\frac1w}
&=\sqrt{v}\left(1-\frac v2+O(v^2)\right)\ ,\\[-0.35em]
\int_1^{1+\eta}\sqrt{1-\frac1w}\,\dif w
&=\frac23\eta^{3/2}\left(1-\frac{3\eta}{10}+O(\eta^2)\right)\ ,\\[-0.35em]
\zeta_{\mathrm{L}}(1+\eta)
&=2^{-2/3}\eta\left(1-\frac{\eta}{5}+O(\eta^2)\right)\ ,
\end{align*}
and the same expansion holds on the $\upsilon<1$ side. Taking
$\eta=2^{2/3}\nu^{-2/3}t$ proves \cref{eq:airy-detail-langer}. The limits
for $\zeta_{\mathrm{L}}'(1)$ and $\chi(1)$ follow from the same expansion and
the definition of $\chi$.

Combining the preceding formulas gives the following asymptotics at the
turning point.
For every compact $K\subset\mathbb R$, uniformly for $t\in K$,
\begin{align}
\ell_n g_n(\xi_n+\ell_n t)
&=\mathrm{Ai}(t)+O(\nu^{-2/3})\ .
\label{eq:airy-detail-density}
\end{align}
The cumulative profile obeys
\begin{align}
C_n(\xi_n+\ell_n t)
&=\mathcal I_{\rm Ai}(t)+O(\nu^{-2/3})\ .
\label{eq:airy-detail-cumulative}
\end{align}
For the first relation, take $u=2(\xi_n+\ell_n t)$. Then
$\upsilon=1+2^{2/3}\nu^{-2/3}t$. Combining
\cref{eq:airy-detail-fw,eq:airy-detail-langer} gives
$\ell_ng_n(\xi_n+\ell_nt)=2^{-1/3}\chi(\upsilon)
\mathrm{Ai}(Z_\nu(t))+\ell_n\mathcal R_n(\upsilon)$.
On compact $K$, $Z_\nu(t)$ remains bounded,
$\ell_n\mathcal R_n(\upsilon)=O(\nu^{-4/3})$,
$\chi(\upsilon)=2^{1/3}+O(\nu^{-2/3})$, and
$\mathrm{Ai}(Z_\nu(t))=\mathrm{Ai}(t)+O(\nu^{-2/3})$. This proves
\cref{eq:airy-detail-density}.

For the cumulative law, use normalization:
\begin{align}
1-C_n(X)=\int_X^\infty g_n(\xi)\,\dif \xi
=\frac12\int_{2X}^{\infty}(-1)^ne^{-u/2}L_n(u)\,\dif u\ .
\label{eq:airy-detail-norm}
\end{align}
For $X=\xi_n+\ell_n t$, the integration range lies in
$\upsilon\ge1-O(\nu^{-2/3})$. Fix a small $\delta>0$ and split the
tail integral at $\upsilon=1+\delta$. The second part is
$O(e^{-c_\delta\nu})$ by \cref{eq:airy-exterior-tail}. On the compact first
part, change variables to
$w=\nu^{2/3}\zeta_{\mathrm L}(\upsilon)$ and define the upper endpoint
$W_{\nu,\delta}:=\nu^{2/3}\zeta_{\mathrm L}(1+\delta)$.
The transformed Jacobian is
\begin{align}
J_\nu(w)&:=
\frac{\chi(\upsilon)}
{2\zeta_{\mathrm L}'(\upsilon)}
\bigg|_{\upsilon=\upsilon(w)}\ .
\label{eq:airy-detail-J}
\end{align}
Then \cref{eq:airy-detail-fw} gives
\begin{align}
\int_X^\infty g_n(\xi)\,\dif \xi
=\int_{Z_\nu(t)}^{W_{\nu,\delta}}
\mathrm{Ai}(w)J_\nu(w)\,\dif w
+\mathcal E_\nu+O(e^{-c_\delta\nu})\ ,
\label{eq:airy-tail-split}
\end{align}
where $\mathcal E_\nu=O(\nu^{-4/3})$, uniformly for $t\in K$.
Indeed, $\beta_1$ and $1/\zeta_{\mathrm L}'$ are bounded on the compact
$\upsilon$-interval, while
$\int_{\min K}^{\infty}\mathfrak A_1(w)\,\dif w<\infty$.

The removable singularity at $\upsilon=1$ gives
$J_\nu(w)=1+O(|\upsilon(w)-1|)$. On the interval above,
$|\upsilon(w)-1|\le C\nu^{-2/3}|w|$, and hence
\begin{align}
\int_{Z_\nu(t)}^{W_{\nu,\delta}}
\mathrm{Ai}(w)J_\nu(w)\,\dif w
&=\int_t^\infty\mathrm{Ai}(w)\,\dif w+O(\nu^{-2/3})\ .
\label{eq:airy-leading-tail}
\end{align}
Here, the displacement of the lower endpoint is
$Z_\nu(t)-t=O(\nu^{-2/3})$, while the Jacobian error is bounded by
$C\nu^{-2/3}\int_{\min K}^{\infty}|w\mathrm{Ai}(w)|\,\dif w$.
The omitted Airy tail beyond $W_{\nu,\delta}$ is $O(e^{-c_\delta\nu})$.
Since $\int_t^\infty\mathrm{Ai}(w)\,\dif w
=1-\mathcal I_{\rm Ai}(t)$, \cref{eq:airy-detail-norm,eq:airy-tail-split,eq:airy-leading-tail}
prove \cref{eq:airy-detail-cumulative}.

The remaining special-function input is standard. The largest zero of
$\mathrm{Ai}$ is $-2.338\ldots$, so $\mathrm{Ai}>0$ on
$[-2,\infty)$ \cite[\S9.9]{dlmf24}. Moreover,
$\mathcal I_{\rm Ai}(0)=2/3$ \cite[Eq.~9.10.11]{dlmf24}, and direct
evaluation of the convergent Airy
series gives $\mathcal I_{\rm Ai}(-2)=-0.2351\ldots$. Hence
$\mathcal I_{\rm Ai}$ has a unique zero $t_0\in(-2,0)$, numerically
$t_0=-1.38418\ldots$, and $\mathrm{Ai}(t_0)>0$. Because
$\mathcal I_{\rm Ai}$ is positive for $t\ge0$, this is its largest zero.

The positivity needed beyond the turning point follows from the location of
the Laguerre zeros:
\begin{align*}
\max\{u:L_n(u)=0\}\le4n-2\ .
\end{align*}
Consequently,
\begin{align}
g_n(\xi)>0\quad(\xi\ge\xi_n)\ .
\label{eq:airy-outer-positive}
\end{align}
The three-term recurrence
$uL_k(u)=-(k+1)L_{k+1}(u)+(2k+1)L_k(u)-kL_{k-1}(u)$
identifies the zeros of $L_n$ with the eigenvalues of the $n\times n$
symmetric tridiagonal Jacobi matrix whose diagonal entries are $2k+1$
$(0\le k\le n-1)$ and whose off-diagonal coupling between $k$ and $k+1$
has magnitude $k+1$. The Gershgorin row sums are bounded by
$(2k+1)+k+(k+1)=4k+2\le4n-2$.
Thus, every zero of $L_n$ is at most $4n-2$. For $L_n(2\xi)$ this gives
$\xi\le2n-1$, and since the leading sign of $g_n$ is positive after the
last zero, $g_n>0$ for $\xi\ge \xi_n=2n+1$.

\smallskip
\noindent\textit{Completion of the Airy proof.} Fix $t_{\rm c}=2$ and let
$t_0\in(-t_{\rm c},0)$ be the zero established above. Choose $\varepsilon>0$ with
$[t_0-\varepsilon,t_0+\varepsilon]\subset(-t_{\rm c},0)$, and set
$m_\varepsilon:=\min_{[t_0-\varepsilon,0]}\mathrm{Ai}>0$.
Let $E_\nu:=\sup_{t\in[-t_{\rm c},0]}
|C_n(\xi_n+\ell_n t)-\mathcal I_{\rm Ai}(t)|$ and
$\delta_\nu:=2(E_\nu+\nu^{-2/3})/m_\varepsilon$.
By \cref{eq:airy-detail-cumulative}, $E_\nu=O(\nu^{-2/3})$. Also,
\cref{eq:airy-detail-density} gives a uniform error $c_1\nu^{-2/3}$ on
$[-t_{\rm c},0]$. For all sufficiently large $n$,
$E_\nu<\min\{|\mathcal I_{\rm Ai}(-t_{\rm c})|,1/3\}$,
$\delta_\nu<\varepsilon$, and $c_1\nu^{-2/3}<m_\varepsilon$. Then
\begin{align*}
C_n(\xi_n-t_{\rm c}\ell_n)
\!\le \mathcal I_{\rm Ai}(-t_{\rm c})+E_\nu<0\ .
\end{align*}
At the lower displaced point,
\begin{align*}
C_n(\xi_n+\ell_n(t_0-\delta_\nu))
&\le -m_\varepsilon\delta_\nu+E_\nu
=-(E_\nu+2\nu^{-2/3})<0\ .
\end{align*}
On the interval $t_0+\delta_\nu\le t\le0$,
\begin{align*}
C_n(\xi_n+\ell_n t)
&\ge m_\varepsilon\delta_\nu-E_\nu
=E_\nu+2\nu^{-2/3}>0
\ .
\end{align*}
Beyond the turning point,
\begin{align*}
C_n(X)&\ge C_n(\xi_n)\ge
\mathcal I_{\rm Ai}(0)-E_\nu>1/3
\quad(X\ge\xi_n)\ .
\end{align*}
The first inequality shows that $C_n$ has negative values. The mean-value
theorem and monotonicity of $\mathcal I_{\rm Ai}$ near $t_0$ give the next
two inequalities. The last uses \cref{eq:airy-outer-positive}.
Therefore the last zero $X_0$ of the cumulative profile lies in the interval
\begin{align*}
X_0=\xi_n+\ell_n t_\nu\ ,
\qquad |t_\nu-t_0|\le\delta_\nu=O(\nu^{-2/3})\ .
\end{align*}
The density on the remaining inner interval satisfies
\begin{align*}
\ell_n g_n(\xi)\ge m_\varepsilon-c_1\nu^{-2/3}>0
\qquad(X_0\le\xi\le\xi_n)\ .
\end{align*}
In the interval $[X_0,\xi_n]$, the rescaled variable belongs to
$[t_0-\varepsilon,0]$, which gives the displayed lower bound.
For $\xi\ge \xi_n$, positivity again follows from
\cref{eq:airy-outer-positive}. Hence $g_n\ge0$ on
$[X_0,\infty)$, and the reduction to the cumulative profile gives
\begin{align*}
\Delta_n^*&=X_0
=\xi_n+\ell_n t_0+\ell_n O(\nu^{-2/3})\ ,\\[-0.35em]
\Delta_n^*&=(2n+1)+2^{-1/3}t_0(4n+2)^{1/3}+O(\nu^{-1/3})\ ,
\end{align*}
where the second line uses $\xi_n=2n+1$,
$\ell_n=2^{-1/3}\nu^{1/3}$, and $\nu=4n+2$. The remainder is equivalently
$O(n^{-1/3})$.

\smallskip
\noindent\textit{Finite energy does not imply a finite negativity scale.}
There exists a Fock-diagonal state $\rho$ such that
$\Tr(\rho\hat n)<\infty$ but $\Delta_\rho^*=\infty$.

Choose a compact interval $[t_-,t_+]$ strictly between the two largest
zeros of $\mathrm{Ai}$, where $\mathrm{Ai}<0$. The uniform density
asymptotics in \cref{eq:airy-detail-density} imply the existence of constants
$c_{\rm w},c_{\rm A}>0$ and $n_0$ such that, for every $n\ge n_0$, the
interval
\begin{align}
I_n:=[\xi_n+\ell_nt_-,\xi_n+\ell_nt_+]
\label{eq:infinite-scale-interval}
\end{align}
has length at least $c_{\rm w}n^{1/3}$ and
$g_n(\xi)\le-c_{\rm A}n^{-1/3}$ throughout $I_n$. For each fixed $m$,
$g_m(\xi)$ is $e^{-\xi}$ times a polynomial. Since
$\inf I_n=2n+O(n^{1/3})$, it follows that
\begin{align}
e^n n^{1/3}\sup_{\xi\in I_n}|g_m(\xi)|\longrightarrow0
\qquad(n\to\infty)\ .
\label{eq:infinite-scale-separation}
\end{align}

We can therefore choose strictly increasing positive integers $N_j\ge n_0$
recursively so that the following three conditions hold for every $j\ge1$.
First, the interval condition is
\begin{align}
|I_{N_j}|&\ge2j\ .
\label{eq:infinite-scale-width}
\end{align}
Second, the contribution of the previously chosen indices obeys
\begin{align}
\sum_{i<j}e^{-N_i}\sup_{\xi\in I_{N_j}}|g_{N_i}(\xi)|
&\le \frac{c_{\rm A}e^{-N_j}}{4N_j^{1/3}}\ .
\label{eq:infinite-scale-lower}
\end{align}
Finally, the next coefficient satisfies
\begin{align}
\frac{e^{-N_{j+1}}}{1-e^{-1}}
&\le \frac{c_{\rm A}e^{-N_j}}{4N_j^{1/3}}\ .
\label{eq:infinite-scale-upper}
\end{align}
Indeed, after $N_1,\ldots,N_j$ are fixed, taking $N_{j+1}$
sufficiently large enforces
\cref{eq:infinite-scale-width,eq:infinite-scale-lower} at index $j+1$ by
\cref{eq:infinite-scale-separation}, and simultaneously enforces
\cref{eq:infinite-scale-upper} at index $j$.

Set $Z:=\sum_{j=1}^\infty e^{-N_j}$.
The state is
\begin{align}
\rho:=\frac1Z\sum_{j=1}^\infty e^{-N_j}
\ket{N_j}\!\bra{N_j}\ .
\label{eq:infinite-scale-state}
\end{align}
Here $0<Z\le\sum_{n=1}^\infty e^{-n}<\infty$. Moreover,
$\sum_j e^{-N_j}|g_{N_j}|/Z\le1$, so the radial-profile series converges
uniformly and may be integrated term by term over every shell. This is a
finite-energy state because the $N_j$ are distinct positive integers and hence
\begin{align}
\Tr(\rho\hat n)
=\frac1Z\sum_{j=1}^\infty N_je^{-N_j}
\le\frac1Z\sum_{n=1}^\infty ne^{-n}<\infty\ .
\label{eq:infinite-scale-energy}
\end{align}

Let $a_j:=\inf I_{N_j}$ and $k_j:=\lfloor a_j/j\rfloor+2$. By
\cref{eq:infinite-scale-width}, the complete shell
$\mathcal J_j=[(k_j-1)j,k_jj)$ lies inside $I_{N_j}$ and belongs to the equal-area
partition with resolution $\Delta_j=j$. Define
$Q_j:=Z^{-1}c_{\rm A}j e^{-N_j}N_j^{-1/3}$. The $N_j$ component contributes
at most $-Q_j$ to this shell. The components below it obey
\begin{align}
\left|\frac1Z\sum_{i<j}e^{-N_i}\int_{\mathcal J_j}g_{N_i}(\xi)\,\dif\xi\right|
&\le\frac jZ\sum_{i<j}e^{-N_i}
\sup_{\xi\in I_{N_j}}|g_{N_i}(\xi)|
\le\frac{Q_j}{4}\ .
\label{eq:infinite-scale-lower-bound}
\end{align}
Because the $N_i$ are strictly increasing integers,
$\sum_{i>j}e^{-N_i}\le e^{-N_{j+1}}/(1-e^{-1})$. Using
$|g_n|\le1$ \cite[Eq.~18.14.8]{dlmf24} and
\cref{eq:infinite-scale-upper}, the components above $N_j$ similarly obey
\begin{align}
\left|\frac1Z\sum_{i>j}e^{-N_i}\int_{\mathcal J_j}g_{N_i}(\xi)\,\dif\xi\right|
\le\frac jZ\sum_{i>j}e^{-N_i}
\le\frac{Q_j}{4}\ .
\label{eq:infinite-scale-upper-bound}
\end{align}
Consequently, $P_{\rho,k_j}(\Delta_j)\le-Q_j/2<0$. Since $\Delta_j=j$ is
unbounded, $\Delta_\rho^*=\infty$.

\section{Shell-entropy bounds and fine-graining}

For any complete radial partition with bounded squared-radius widths,
$\boldsymbol P_0(\Pi^\downarrow)$ supplies a
finite-resolution lower bound on every Schur-concave shell entropy. When the
shell widths admit a non-increasing enumeration, it is obtained by placing the
same shell areas in that order from the origin and evaluating the shell
probabilities of the vacuum state; otherwise it denotes the limiting benchmark
defined by the order statistics $D_m$. The
equal-area partition is distinguished by a closed-form geometric distribution, whose
fine-graining limit yields the radial Wigner-entropy relation derived below.

Let $\Pi$ be complete, have bounded squared-radius widths, and suppose that
$\boldsymbol P_\rho(\Pi)$ is componentwise non-negative. By
\cref{eq:radial-shell-app} and Schur concavity,
\begin{align}
H_\alpha\!\left(\boldsymbol P_\rho(\Pi)\right)
\ge
H_\alpha\!\left(\boldsymbol P_0(\Pi^\downarrow)\right)\ ,
\label{eq:general-shell-entropy-app}
\end{align}
for every R\'enyi order $\alpha>0$ for which the entropy is defined, with the
Shannon case obtained at $\alpha=1$. The right-hand side depends only on the
shell areas. For each $m$, the first $m$ components of the vacuum benchmark
sum to the integral of $W_0$ over a centered disk whose area is the supremum
of the total area of any $m$ shells. This separates
state-dependent concentration from the purely geometric advantage of a wide
shell.

For the equal-area partition $\Pi_\Delta$, ordering the widths changes
nothing and the shell probabilities of the vacuum state form the geometric distribution
$P_{0,k}(\Delta)=(1-e^{-\Delta})e^{-(k-1)\Delta}$, $k\ge1$.
The geometric sum gives the Shannon entropy
\begin{align}
H(\boldsymbol P_0(\Delta))
&=-\ln(1-e^{-\Delta})
+\frac{\Delta e^{-\Delta}}{1-e^{-\Delta}}\ .
\label{eq:vacuum-entropies-app}
\end{align}
For $\alpha>0$, $\alpha\ne1$, the R\'enyi entropy is
\begin{align*}
H_\alpha(\boldsymbol P_0(\Delta))
&=\frac{\alpha\ln(1-e^{-\Delta})
-\ln(1-e^{-\alpha\Delta})}{1-\alpha}
\qquad(\alpha>0,\ \alpha\ne1)\ .
\end{align*}

\smallskip
\noindent\textit{Vacuum asymptotics.}
Expanding \cref{eq:vacuum-entropies-app} at small $\Delta$ gives
$H(\boldsymbol P_0(\Delta))=-\ln\Delta+1+\Delta^2/24+O(\Delta^3)$.

\smallskip
\noindent\textit{From shell entropy to radial differential entropy.}
Let $\bar\rho$ be phase invariant and Wigner-positive. With
$\xi=|\boldsymbol r|^2$, radial integration gives
$\int_{\mathbb R^2}f(|\boldsymbol r|^2)\,\dif\boldsymbol r
=\pi\int_0^\infty f(\xi)\,\dif\xi$. Define
$g_\rho(\xi):=\pi W_{\bar\rho}(\boldsymbol r)|_{|\boldsymbol r|^2=\xi}$,
with $\int_0^\infty g_\rho(\xi)\,\dif\xi=1$.
The universal Wigner bound gives $0\le g_\rho\le1$.

Let $g$ be a probability density on $[0,\infty)$ with
$0\le g\le1$, finite differential entropy, and finite Shannon entropy for its
unit-bin distribution. If
$P_{g,k}(\Delta):=\int_{(k-1)\Delta}^{k\Delta}g(\xi)\,\dif \xi$, then
\begin{align}
\lim_{n\to\infty}\bigl[H(\boldsymbol P_g(1/n))-\ln n\bigr]
=h(g):=-\int_0^\infty g(\xi)\ln g(\xi)\,\dif \xi\ .
\label{eq:GY-limit}
\end{align}

For completeness, let $X$ have density $g$, set $J=\lfloor X\rfloor$,
and let $g_j$ be the conditional density on $[j,j+1)$, with probability
$p_j$. The continuous chain rule gives
$h(g)=H(J)+\sum_jp_jh(g_j)$.
For $Q_n=\lfloor nX\rfloor$, the discrete chain rule gives
\begin{align}
H(Q_n)-\ln n
&=H(J)+\sum_jp_j[H(Q_n|J=j)-\ln n]\ .
\label{eq:entropy-chain-app}
\end{align}
Let $g_{j,n}$ be the piecewise constant density obtained by averaging
$g_j$ on the $n$ equal subintervals of $[j,j+1)$. Then
$H(Q_n|J=j)-\ln n-h(g_j)=D(g_j\Vert g_{j,n})\to0$.
Indeed, $g_{j,n}\to g_j$ almost everywhere and in $L^1$ by the
Lebesgue differentiation theorem. For
$\Phi(x)=x\ln x-x+1$, Jensen's inequality gives
$\Phi(g_{j,n})\le
\mathbb E[\Phi(g_j)\mid\mathcal P_n]$, where $\mathcal P_n$ is the
partition into $n$ equal subintervals. Since
$\int\Phi(g_j)=D(g_j\Vert1)<\infty$, these conditional averages are
uniformly integrable. Hence
$\int\Phi(g_{j,n})\to\int\Phi(g_j)$, which is equivalent to the stated
relative-entropy convergence. Jensen's
inequality also gives $h(g_j)\le H(Q_n|J=j)-\ln n\le0$.
Since $\sum_jp_j|h(g_j)|=H(J)-h(g)<\infty$, dominated convergence proves
\cref{eq:GY-limit}.

For a Wigner-positive state of finite mean photon number these hypotheses hold:
$\int_0^\infty \xi g_\rho(\xi)\,\dif \xi=2\langle\hat n\rangle_\rho+1<\infty$.
Finite mean gives $H(\boldsymbol P_\rho(1))<\infty$ and bounds the
differential entropy from above, while $0\le g_\rho\le1$ gives
$h(g_\rho)\ge0$.

The radial and two-dimensional Wigner entropies are related by
\begin{align}
h(W_{\bar\rho})
&=-\int_0^\infty g_\rho(\xi)
\ln\!\left[\frac{g_\rho(\xi)}{\pi}\right]\dif\xi
=h(g_\rho)+\ln\pi\ .
\end{align}
For the vacuum, $g_0(\xi)=e^{-\xi}$ and $h(g_0)=1$, so
$h(W_0)=1+\ln\pi$.

\smallskip
\noindent\textit{Wigner entropy bound.}
For every $n\ge1$, the specialization of
\cref{eq:radial-shell-app} gives
$H(\boldsymbol P_{\bar\rho}(1/n))\ge H(\boldsymbol P_0(1/n))$.
Subtracting $\ln n$ and taking $n\to\infty$, \cref{eq:GY-limit} and the
vacuum asymptotics yield
$h(g_\rho)\ge h(g_0)=1$.
Equivalently,
\begin{align}\label{eq:continuous_entropy}
h(W_{\bar\rho})\ge1+\ln\pi\ .
\end{align}
For phase-invariant finite-energy Wigner-positive states, this is the Wigner
entropy bound \cite{vanherstraeten23}.

\section{Finite-bin EPR entanglement detection}\label{app:epr-benchmarks}

This section proves \cref{cor:epr-criterion}, derives its entropic
and covariance corollaries, and gives the calculations underlying
\cref{fig:epr-nongaussian-benchmarks}. The Gaussian calibration is followed
by two non-Gaussian benchmarks: the first-order biphoton family and
Schmidt-correlated states with finite Fock support.

\smallskip
\noindent\textit{Proof of the entanglement criterion.}
Introduce the balanced quadratures
$\hat q_\pm=(\hat q_1\pm\hat q_2)/\sqrt2$ and
$\hat p_\pm=(\hat p_1\pm\hat p_2)/\sqrt2$.
They obey $[\hat q_\pm,\hat p_\pm]=i$ and
$[\hat q_\pm,\hat p_\mp]=0$. Hence the EPR pair
$\hat u=\hat q_-$ and $\hat v=\hat p_+$ is jointly measurable. The Wigner
marginal theorem \cite{leonhardt97} gives its normalized outcome density as
\begin{align}
F_\rho(u,v)
:=\int_{\mathbb R^2}W_{\rho_{12}}^{(\pm)}
  (q_+,v,u,p_-)\,\dif q_+\,\dif p_-\ ,
\end{align}
where $W_{\rho_{12}}^{(\pm)}$ is the two-mode Wigner function in the ordered
coordinates $(q_+,p_+,q_-,p_-)$. As a joint measurement density,
$F_\rho(u,v)\ge0$ and
$\int_{\mathbb R^2}F_\rho(u,v)\,\dif u\dif v=1$.

Let $\Gamma$ denote partial transposition with respect to the second mode,
taken in the position basis.
In phase space it acts as momentum reflection \cite{simon00},
$W_{\rho_{12}^\Gamma}(q_1,p_1,q_2,p_2)
=W_{\rho_{12}}(q_1,p_1,q_2,-p_2)$.
It therefore interchanges $p_+$ and $p_-$ while leaving $q_+$ and $q_-$
unchanged:
\begin{align}
W_{\rho_{12}^\Gamma}^{(\pm)}(q_+,p_+,q_-,p_-)
=W_{\rho_{12}}^{(\pm)}(q_+,p_-,q_-,p_+)\ .
\label{eq:epr-pt-wigner-app}
\end{align}
Choose the balanced beam-splitter unitary $U_{\rm BS}$ to map the local
quadratures to the $+$ and $-$ modes, and set
$\tau_\rho:=\Tr_+[U_{\rm BS}\rho_{12}^\Gamma U_{\rm BS}^\dagger]$.
The Wigner function of a reduced state is obtained by integrating over the
discarded mode. Using \cref{eq:epr-pt-wigner-app} gives
\begin{align}
W_{\tau_\rho}(u,v)
&=\int_{\mathbb R^2}W_{\rho_{12}^\Gamma}^{(\pm)}
  (q_+,p_+,u,v)\,\dif q_+\,\dif p_+ \notag\\[-0.35em]
&=\int_{\mathbb R^2}W_{\rho_{12}}^{(\pm)}
  (q_+,v,u,p_-)\,\dif q_+\,\dif p_-
=F_\rho(u,v)\ .
\label{eq:epr-wigner-marginal}
\end{align}
No factor of $2\pi$ appears in these marginals because the Wigner convention
in \cref{eq:wigner} is
normalized with respect to the Lebesgue phase-space measure.

Consequently,
$P_{\tau_\rho,k}(\Delta)=\int_{A_k(\Delta)}F_\rho(u,v)\,\dif u\dif v$
is the probability that the measured pair lies in the $k$th shell. From
$N$ independent outcomes it is estimated directly as
$\widehat P_{\tau_\rho,k}(\Delta)=N^{-1}\sum_{\ell=1}^N
\mathbbm1_{A_k(\Delta)}(u_\ell,v_\ell)$, without reconstructing a
continuous density.

If $\rho_{12}$ has a positive partial transpose (PPT), then
$\rho_{12}^\Gamma\ge0$. Unitary conjugation and the partial trace preserve
positivity and trace, so $\tau_\rho$ is a single-mode density operator. Its
Wigner function is the non-negative density $F_\rho$. The equal-area
relation \cref{eq:main} therefore gives
\begin{align}
\boldsymbol P_{\tau_\rho}(\Delta)
\prec\boldsymbol P_0(\Delta)\ ,\qquad \Delta>0\ .
\end{align}
This proves \cref{cor:epr-criterion}; equivalently,
$\sum_{i=1}^{k}P_{\tau_\rho,i}^{\downarrow}(\Delta)\le1-e^{-k\Delta}$ for
every $k\ge1$. Since every separable state is PPT, any violation certifies
negative-partial-transpose (NPT) entanglement. The converse is not implied:
satisfying all the inequalities does not establish separability, and
PPT-entangled states cannot violate this criterion. This is a bound on the
exact bin probabilities, so finite bin width cannot produce a violation for a
separable state.

\smallskip
\noindent\textit{Entropic and covariance consequences.}
Schur concavity gives
$H_\alpha(\boldsymbol P_{\tau_\rho}(\Delta))
\ge H_\alpha(\boldsymbol P_0(\Delta))$, which is
\cref{eq:general-shell-entropy-app} for equal-area shells.

The optimized MGVT variance-product consequence follows without assuming
Gaussianity. After removing the first moments of the measured EPR pair, write
its covariance matrix as $\Sigma$. In the single-mode representation
\cref{eq:epr-wigner-marginal}, $(u,v)$ are canonical phase-space
coordinates. For
$\mathsf S\in\operatorname{Sp}(2,\mathbb R)
=\operatorname{SL}(2,\mathbb R)$, define
\begin{align}
\begin{pmatrix}u_{\mathsf S}\\v_{\mathsf S}\end{pmatrix}
=\mathsf S\begin{pmatrix}u\\v\end{pmatrix}\ .
\label{eq:epr-symplectic-outcomes}
\end{align}
The transformed density is the Wigner function of the corresponding
Gaussian-unitary image of $\tau_\rho$, so
\cref{eq:epr-criterion} applies separately in every transformed frame.
Operationally, \cref{eq:epr-symplectic-outcomes} is a linear
post-processing of each measured outcome pair. Since
$\det\mathsf S=1$, it preserves $\dif u\,\dif v$ and maps circular shells in
the transformed coordinates to equal-area ellipses in the measured plane.
The transformed outcomes can therefore be rebinned without a new
measurement or density reconstruction. This constructs a family of valid
criteria; it does not assume symplectic invariance of a fixed shell
partition.

Let $g_{\mathsf S}$ denote the probability density of
$\xi_{\mathsf S}:=u_{\mathsf S}^2+v_{\mathsf S}^2$, whose mean is
$\langle\xi_{\mathsf S}\rangle
=\Tr(\mathsf S\Sigma\mathsf S^{\mathsf T})$.
If the state is PPT, the fine-graining limit of its Shannon-entropy bound for
the energy-shell probabilities gives
$h(g_{\mathsf S})\ge1$. Setting $m=\langle\xi_{\mathsf S}\rangle$,
positivity of
$D(g_{\mathsf S}\Vert m^{-1}e^{-\xi/m})=1+\ln m-h(g_{\mathsf S})$ gives
$h(g_{\mathsf S})\le1+\ln\Tr(\mathsf S\Sigma\mathsf S^{\mathsf T})$, with equality only for an
exponential radial density. Hence every PPT state satisfies
$\Tr(\mathsf S\Sigma\mathsf S^{\mathsf T})\ge1$ for every
$\mathsf S\in\operatorname{Sp}(2,\mathbb R)$. In particular,
$\mathsf S=I_2$ gives
\begin{align}
\operatorname{Var}(u)+\operatorname{Var}(v)=\Tr\Sigma\ge1\ ,
\label{eq:epr-balanced-dgcz}
\end{align}
which is the balanced Duan--Giedke--Cirac--Zoller variance bound in the
present normalization \cite{duan00}.

For a rotation $\mathsf R$ with
$\mathsf R\Sigma\mathsf R^{\mathsf T}=\operatorname{diag}(V_u,V_v)$, set
$\mathsf S_\mu=\operatorname{diag}(\mu,\mu^{-1})\mathsf R$. Then
\begin{align}
\Tr(\mathsf S_\mu\Sigma\mathsf S_\mu^{\mathsf T})
=\mu^2V_u+\mu^{-2}V_v\ge1\ .
\label{eq:epr-symplectic-trace}
\end{align}
The minimum, at $\mu^4=V_v/V_u$, is $2\sqrt{\det\Sigma}$. Hence
\begin{align}
\det\Sigma\ge\frac14\ .
\label{eq:epr-determinant-criterion}
\end{align}
In the principal axes of $\Sigma$, this is the MGVT variance-product
condition \cite{mancini02,giovannetti03}. The argument also parallels the
maximum-entropy reduction of the GHN criterion in
Ref.~\cite{garttner23}.

\smallskip
\noindent\textit{Gaussian calibration and finite-bin comparisons.}
Let $F(u,v)$ be a centered Gaussian EPR density with covariance matrix
$\Sigma$. A one-mode Williamson transformation, implemented by a symplectic
change of the canonical pair obtained after partial transposition, brings the
covariance to $\mathsf S\Sigma\mathsf S^{\mathsf T}=\nu I_2$ with
$\nu=\sqrt{\det\Sigma}$. In these optimized coordinates,
$F_\nu(u,v)=(2\pi\nu)^{-1}\exp[-(u^2+v^2)/(2\nu)]$, and the shell weights form
a geometric distribution:
\begin{align}
P_{\nu,k}(\Delta)
=\left(1-e^{-\Delta/(2\nu)}\right)
e^{-(k-1)\Delta/(2\nu)}\ .
\label{eq:epr-gaussian-geometric}
\end{align}
The vacuum corresponds to $\nu=1/2$. Since the first $K$ probabilities of
a geometric distribution with ratio $q$ sum to $1-q^K$,
\cref{eq:epr-gaussian-geometric} is majorized by
$\boldsymbol P_0(\Delta)$ exactly
when $\nu\ge1/2$. For $\nu<1/2$, the first component already violates the
order. Thus, for this Gaussian EPR pair and every $\Delta>0$, the optimized
form of criterion \eqref{eq:epr-criterion} is equivalent to the MGVT
variance-product condition $\sqrt{\det\Sigma}\ge1/2$. This equivalence does not by itself
recover the Simon covariance criterion, equivalently the PPT criterion for
two-mode Gaussian states, which uses the complete
$4\times4$ two-mode covariance matrix \cite{simon00}. Relating EPR variance
criteria to that full test requires optimization over a larger family of EPR
observables \cite{duan00,giovannetti03}.

For completeness, the finite-bin comparisons in
\cref{fig:epr-nongaussian-benchmarks}(a) use equal phase-space areas and
parametrize the two-mode squeezed vacuum by the real squeezing parameter
$\zeta\ge0$. A square
cell of area $\gamma$ in the canonical $(u,v)$ plane corresponds to
$\gamma=\pi\Delta$. In this convention, the TRGTW criterion
\cite{tasca13} reads
\begin{align}
H(U_\gamma)+H(V_\gamma)
\ge-\ln\min\!\left\{\frac{\gamma}{\pi e},
\lambda_0\!\left(\frac{\gamma}{4}\right)\right\}\ ,
\label{eq:tasca-shannon-app}
\end{align}
where $U_\gamma$ and $V_\gamma$ are the measured bin-probability vectors and
$\lambda_0(c)$ is the largest eigenvalue of the time--band-limiting operator
with kernel $\sin[c(x-y)]/[\pi(x-y)]$ on $[-1,1]$. The plotted threshold is
obtained by integrating the centered Gaussian marginals of variance
$e^{-2\zeta}/2$ over bins of width $\sqrt\gamma$.

For the regular square-grid discretization of Ref.~\cite{garttner23}, the
nonlocal coordinates obey
$r_\pm=\sqrt2u$ and $s_\mp=\sqrt2v$. Hence, a square of side $\delta$ has
area $\delta^2/2$ in the canonical $(u,v)$ plane. Equating this with the
shell area $\pi\Delta$ gives
$\delta=\sqrt{2\pi\Delta}$. If
$\boldsymbol p$ contains the one-dimensional Gaussian bin probabilities with
variance $1+e^{-2\zeta}$, Eqs.~(114)--(117) of Ref. \cite{garttner23} give
\begin{align}
\mathcal W_{\beta}^{\Delta}
=2H_\beta(\boldsymbol p)+\ln\Delta
-\frac{\ln\beta}{\beta-1}-\ln2\ .
\label{eq:garttner-grid-app}
\end{align}
The GHN curve in \cref{fig:epr-nongaussian-benchmarks}(a) is obtained by
numerically evaluating \cref{eq:garttner-grid-app} after the above change of
variables and reproduces the published benchmark. In particular, our numerical
curve passes through
$(\zeta,\delta_c)=(0.100335,2.650)$ and $(1.472219,6.190)$ and terminates at
$\delta_c\simeq6.47$, corresponding to $\Delta\simeq6.66$.

\smallskip
\noindent\textit{First-order Hermite--Gaussian biphoton benchmark.}
The biphoton family in \cref{eq:hg-state} was experimentally realized and analyzed
by Gomes \textit{et al.} \cite{gomes09pnas,gomes09prl}. Its linear prefactor is
the first Hermite polynomial in the collective coordinate $r_1+r_2$, which is
the origin of the term first-order Hermite--Gaussian mode. It was subsequently
used to benchmark STW \cite{saboia11} and GHN
\cite{garttner23prl,garttner23}:
\begin{align}
\psi_{\sigma_+,\sigma_-}(r_1,r_2)
=\frac{r_1+r_2}{\sqrt{\pi\sigma_-\sigma_+^3}}
\exp\!\left[-\frac14\left(
\frac{(r_1+r_2)^2}{\sigma_+^2}
+\frac{(r_1-r_2)^2}{\sigma_-^2}
\right)\right]\ ,
\label{eq:hg-state}
\end{align}
where $\sigma_+,\sigma_->0$. The state is entangled throughout this parameter
domain, including at $\sigma_+=\sigma_-$. It therefore permits a direct
comparison of criterion \eqref{eq:epr-criterion} with the earlier criteria.

In the convention of Refs.~\cite{garttner23prl,garttner23},
$r_\pm=r_1\pm r_2$ and $s_\pm=s_1\pm s_2$, where
$s_j$ is conjugate to $r_j$. The canonically normalized commuting pair is
$u=r_\pm/\sqrt2$ and $v=s_\mp/\sqrt2$.

Suppose first that $\sigma_-\ge\sigma_+$ and set
$a=\sigma_-/\sigma_+$. Choose the EPR pair $(r_+,s_-)$, align its principal
axes, and apply the determinant-one gain
$\mathsf S_{\rm loc}=\operatorname{diag}(\mu_{\rm loc},\mu_{\rm loc}^{-1})$
with $\mu_{\rm loc}^2=\sigma_+\sigma_-$. In the phase-space convention of
Refs. \cite{garttner23prl,garttner23}, the resulting EPR density is
$W_+(r_+,s_-)=a^2r_+^2
\exp[-a(r_+^2+s_-^2)/2]/(2\pi)$.
The rescaling $u=r_+/\sqrt2$ and $v=s_-/\sqrt2$ brings the vacuum to the
normalization used in the main text. Accounting for the Jacobian gives
\begin{align}
F_a(u,v)=\frac{2a^2}{\pi}u^2e^{-a(u^2+v^2)}\ .
\label{eq:hg-epr-density}
\end{align}
If $\sigma_+>\sigma_-$, choosing $(r_-,s_+)$ instead gives the same expression
up to interchanging the axes, with $a=\sigma_+/\sigma_-$. Thus, in both cases,
$a=\max\{\sigma_-/\sigma_+,\sigma_+/\sigma_-\}\ge1$.

With $\xi=u^2+v^2$, polar integration gives
$\varrho_a(\xi)=a^2\xi e^{-a\xi}$ for $\xi\ge0$.
The shell probabilities are
\begin{align}
P_{a,k}(\Delta)
&=\bigl[1+a(k-1)\Delta\bigr]e^{-a(k-1)\Delta}
-\bigl(1+ak\Delta\bigr)e^{-ak\Delta}\ .
\label{eq:hg-shell-probabilities}
\end{align}

The R\'enyi consequence of criterion \eqref{eq:epr-criterion} admits an explicit
$\alpha\to0^+$ limit at every fixed resolution. Set
$q_a=e^{-a\Delta}$ and rewrite
\cref{eq:hg-shell-probabilities} as
$P_{a,k}(\Delta)=q_a^{k-1}(A_ak+B_a)$, where
$A_a=a\Delta(1-q_a)$ and $B_a=1-a\Delta-q_a$.
Then
\begin{align}
\lim_{\alpha\to0^+}\alpha a\Delta
\sum_{k=1}^\infty P_{a,k}^\alpha(\Delta)=1\ .
\label{eq:hg-shell-abelian}
\end{align}
With $t=\alpha a\Delta$, the geometric factor restricts the sum to
$k=O(t^{-1})$, where $(A_ak+B_a)^\alpha\to1$. Multiplication by $t$
gives the Riemann sum for $\int_0^\infty e^{-y}\dif y=1$, with an
exponentially decaying majorant.

For $\boldsymbol P_0(\Delta)$,
$\sum_{k\ge1}P_{0,k}^\alpha(\Delta)
=(1-e^{-\Delta})^\alpha/(1-e^{-\alpha\Delta})$, and
$\lim_{\alpha\to0^+}\alpha\Delta\sum_kP_{0,k}^\alpha(\Delta)=1$.
Taking the logarithm of the ratio of the two power sums gives
\begin{align}
\lim_{\alpha\to0^+}\left[
H_\alpha(\boldsymbol P_a(\Delta))
-H_\alpha(\boldsymbol P_0(\Delta))\right]
=-\ln a\ .
\label{eq:hg-shell-zero-witness}
\end{align}
Thus, every $a>1$ violates criterion \eqref{eq:epr-criterion} at every fixed
$\Delta>0$ for all sufficiently small positive orders. The limiting entropy
difference is independent of $\Delta$.

For comparison, consider the GHN criterion \cite{garttner23} for the same
optimized state. In that convention, the
nonlocal $Q$ density is
$Q_a(r,s)=a^2(1+a+r^2)(1+a)^{-3}
\exp[-a(r^2+s^2)/(2(1+a))]$,
normalized with respect to $\dif r\dif s/(2\pi)$. Let
$I_\beta(a)=\int Q_a^\beta\dif r\dif s/(2\pi)$. Under the rescaling
$(r,s)=(R,S)/\sqrt\beta$, dominated convergence gives
$I_\beta(a)=(1+a)[1+o(1)]/(a\beta)$ and
$\ln I_\beta(a)/(1-\beta)
=-\ln\beta+\ln[(1+a)/a]+o(1)$ as $\beta\to0^+$.
\begin{samepage}
The vacuum R\'enyi--Wehrl bound is
$\ln\beta/(\beta-1)+\ln2=-\ln\beta+\ln2+o(1)$, so the optimized
R\'enyi--Wehrl entropy difference is
\begin{align}
\lim_{\beta\to0^+}\left[
\frac{\ln I_\beta(a)}{1-\beta}
-\frac{\ln\beta}{\beta-1}-\ln2\right]
=-\ln\frac{2a}{1+a}\ .
\label{eq:hg-q-zero-witness}
\end{align}
It is negative for every $a>1$, exactly the ideal detection region of the
small-$\alpha$ R\'enyi consequence of criterion
\eqref{eq:epr-criterion}.
\end{samepage}

For comparison, the marginals of \cref{eq:hg-epr-density} are
$f_a^{(u)}(u)=2a^{3/2}u^2e^{-au^2}/\sqrt\pi$ and
$f_a^{(v)}(v)=\sqrt{a/\pi}\,e^{-av^2}$.
For a probability density $f$, write
$h_\alpha(f):=(1-\alpha)^{-1}\ln\int f(x)^\alpha\,\dif x$, with the usual
continuous extensions at $\alpha=1$ and $\infty$.
The optimized STW criterion \cite{saboia11} uses orders
$1/2$ for $f_a^{(u)}$ and $\infty$ for $f_a^{(v)}$. Direct integration gives
$h_{1/2}(f_a^{(u)})=2\ln\int\sqrt{f_a^{(u)}(u)}\,\dif u
=\ln8-\tfrac12\ln(\pi a)$ and
$h_\infty(f_a^{(v)})=-\ln\max_v f_a^{(v)}(v)=\tfrac12\ln(\pi/a)$.
Their sum is $\ln(8/a)$, while the PPT bound in the present
normalization is $\ln(2\pi)$. Hence
\begin{align}
h_{1/2}(f_a^{(u)})+h_\infty(f_a^{(v)})-\ln(2\pi)
=-\ln\frac{\pi a}{4}\ ,
\label{eq:hg-stw-witness}
\end{align}
which becomes negative only for $a>4/\pi$.

The optimized MGVT threshold follows from the same density.
\Cref{eq:hg-epr-density} has vanishing cross covariance and
$\langle u^2\rangle=3/(2a)$,
$\langle v^2\rangle=1/(2a)$, and $\det\Sigma=3/(4a^2)$.
The condition $\det\Sigma<1/4$ thus detects only
$a>\sqrt3$.

\begin{figure}[t]
\centering
\includegraphics[width=0.42\textwidth]{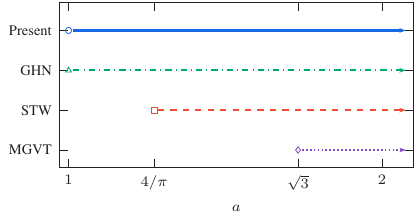}
\caption{Detection intervals for the first-order Hermite--Gaussian biphoton
family, with
$a=\max\{\sigma_-/\sigma_+,\sigma_+/\sigma_-\}$. Arrows point into the
detected regions, and open markers denote excluded thresholds. The present
criterion and GHN detect $a>1$, whereas STW and MGVT require $a>4/\pi$ and
$a>\sqrt3$, respectively. The entangled state at $a=1$ is a blind point of
these linear-EPR criteria.}
\label{fig:supp-biphoton-benchmark}
\end{figure}

At~$a=1$, \cref{eq:hg-state} is, up to equal local squeezing, the first NOON
state. Although entangled, it remains a blind point for all the criteria
considered here because they use linear EPR variables.

\smallskip
\noindent\textit{Finite-support Schmidt-correlated states.}
Consider $\ket{\psi_{\boldsymbol c}}=\sum_{k\in\mathcal K}c_k\ket{k,k}$,
where $\mathcal K$ is finite and $\sum_{k\in\mathcal K}|c_k|^2=1$. Apply the
local phase rotation to obtain
$\ket{\psi_\Theta}=(e^{i\Theta\hat n}\otimes\mathbb I)
\ket{\psi_{\boldsymbol c}}$. This local unitary leaves entanglement unchanged
and is equivalent to changing the phase of the first homodyne setting, so
varying $\Theta$ provides a local optimization of the criterion. Let
$\varphi_k$ be the normalized oscillator wave function. For the commuting
EPR pair $u=q_-$ and $v=p_+$, the
contribution of $\ket{k,k}$ to the joint amplitude is
\begin{align}
\mathcal A_k(u,v)
=\frac{1}{\sqrt{2\pi}}\int_{\mathbb R}e^{-ivx}
\varphi_k\!\left(\frac{x+u}{\sqrt2}\right)
\varphi_k\!\left(\frac{x-u}{\sqrt2}\right)\dif x\ .
\label{eq:schmidt-amplitude}
\end{align}
The Mehler formula \cite[Eq.~18.18.28]{dlmf24}, followed by the Gaussian Fourier
integral and the Laguerre generating function \cite[Eq.~18.12.13]{dlmf24},
gives
\begin{align}
\sum_{k=0}^\infty z^k\mathcal A_k(u,v)
=\frac{e^{-\xi/2}}{\sqrt\pi(1-z)}
\exp\!\left(-\frac{z\xi}{1-z}\right)
=\frac{e^{-\xi/2}}{\sqrt\pi}\sum_{k=0}^\infty L_k(\xi)z^k\ ,
\label{eq:schmidt-amplitude-generating}
\end{align}
where $\xi=u^2+v^2$. Hence
$\mathcal A_k(u,v)=\pi^{-1/2}e^{-\xi/2}L_k(\xi)$.
Taking the modulus square of the superposed amplitude gives
\begin{align}
F_{\psi_\Theta}(u,v)
=\frac{e^{-\xi}}{\pi}
\left|\sum_{k\in\mathcal K}c_ke^{ik\Theta}L_k(\xi)\right|^2\ ,
\qquad \xi=u^2+v^2\ .
\label{eq:schmidt-epr-density}
\end{align}
The square form is manifestly non-negative, as required by its direct
interpretation as the joint probability density of the commuting EPR pair.
For a PPT input, it is also the Wigner function of the single-mode density
operator $\tau_{\psi_\Theta}$ defined above.

For brevity, write
$P_{\psi_\Theta,j}(\Delta):=P_{\tau_{\psi_\Theta},j}(\Delta)$.
Integrating \cref{eq:schmidt-epr-density} over the $j$th radial shell gives
\begin{align}
P_{\psi_\Theta,j}(\Delta)
&=\sum_{m,n\in\mathcal K}c_m c_n^*e^{i(m-n)\Theta}
\mathcal I_{mn}\!\left((j-1)\Delta,j\Delta\right)\ .
\label{eq:schmidt-shell}
\end{align}
The auxiliary integral is
\begin{align}
\mathcal I_{mn}(a,b)
&:=\int_a^b e^{-\xi}L_m(\xi)L_n(\xi)\,\dif \xi =\sum_{r=0}^{m}\sum_{\ell=0}^{n}
\frac{(-1)^{r+\ell}\binom mr\binom n\ell}{r!\,\ell!}
\left[\Gamma(r+\ell+1,a)-\Gamma(r+\ell+1,b)\right]\ .
\label{eq:schmidt-integral}
\end{align}
Here, $\Gamma(\,\cdot\,,\,\cdot\,)$ is the upper incomplete gamma function.
The second equality
follows by expanding both Laguerre polynomials and using
$\int_a^b \xi^\ell e^{-\xi}\,\dif \xi=\Gamma(\ell+1,a)-\Gamma(\ell+1,b)$ term by term.

For the first shell, write
$\mathcal A_\Theta:=\sum_{k\in\mathcal K}c_ke^{ik\Theta}$.
Since $L_k(\xi)=1-k\xi+O(\xi^2)$ and
$e^{-\xi}=1-\xi+O(\xi^2)$ uniformly over the finite set $\mathcal K$,
\begin{align}
P_{\psi_\Theta,1}(\Delta)
=\int_0^\Delta e^{-\xi}
\left|\sum_{k\in\mathcal K}c_ke^{ik\Theta}L_k(\xi)\right|^2\,\dif \xi=\Delta|\mathcal A_\Theta|^2+O(\Delta^2)\ .
\end{align}
Since $1-e^{-\Delta}=\Delta+O(\Delta^2)$,
\begin{align}
P_{\psi_\Theta,1}(\Delta)-(1-e^{-\Delta})
=\Delta(|\mathcal A_\Theta|^2-1)+O(\Delta^2)\ .
\label{eq:schmidt-fine}
\end{align}
The phase average of
$|\mathcal A_\Theta|^2$ is
$\sum_{k\in\mathcal K}|c_k|^2=1$. If at least two coefficients are nonzero, this
trigonometric polynomial cannot be constant. A polynomial of constant modulus
on the unit circle is a monomial. It therefore exceeds $1$ for some $\Theta$.
Hence every entangled state in this Schmidt-correlated family with finite
Fock support
violates the first partial-sum inequality at sufficiently small $\Delta$.
The required resolution may depend on the coefficients $\{c_k\}$.

The MGVT comparison follows directly from the same
coefficients. If the
occupied photon numbers differ pairwise by at least two, then
$\langle\hat a_1\hat a_2\rangle=\sum_{k\ge0}(k+1)c_k^*c_{k+1}=0$, where
coefficients outside $\mathcal K$ are understood to vanish. The other
intermode second moments also vanish, and each reduced state is Fock diagonal.
Thus the covariance matrix of $(u,v)$ is
$\Sigma=(\bar n+1/2)I_2$, where
$\bar n=\sum_{k\in\mathcal K}k|c_k|^2$. Every entangled member of this
subclass has $\det\Sigma=(\bar n+1/2)^2>1/4$, so even the optimized MGVT
criterion does not detect its entanglement.

For the benchmark in \cref{fig:epr-nongaussian-benchmarks}, set
$\ket{\psi_p}=\sqrt{1-p}\ket{0,0}+\sqrt p\ket{2,2}$ and $\Theta=0$. The
radial EPR density is then
$g_p(\xi)=e^{-\xi}[\sqrt{1-p}+\sqrt p\,L_2(\xi)]^2$, so all shell
probabilities follow analytically from \cref{eq:schmidt-integral}. For the
balanced transformation,
$U_{\rm BS}\ket{2,2}=\sqrt{3/8}(\ket{0,4}+\ket{4,0})-\ket{2,2}/2$.

The two reduced output states of $U_{\rm BS}\ket{\psi_p}$ coincide:
\begin{align}
\omega_p
&=\left(1-\frac{5p}{8}\right)\ket0\!\bra0
+\frac p4\ket2\bra2+\frac{3p}{8}\ket4\!\bra4 \notag\\[-0.35em]
&\quad+\sqrt{\frac{3p(1-p)}8}
\bigl(\ket0\!\bra4+\ket4\!\bra0\bigr)\ .
\label{eq:schmidt-physical-marginal-p}
\end{align}
Consequently, the distributions of $\hat u=\hat q_-$ and
$\hat v=\hat p_+$ are the position and momentum distributions of the same
physical one-mode state $\omega_p$. For every conjugate pair of R\'enyi orders,
$1/\alpha+1/\beta=2$, the Babenko--Beckner
relation therefore gives \cite{beckner75,saboia11}
\begin{align}
h_\alpha(u)+h_\beta(v)
&\ge\mathcal B_{\alpha,\beta}\ .
\label{eq:schmidt-stw-balanced}
\end{align}
Here $\mathcal B_{\alpha,\beta}:=\ln\pi-
\ln\alpha/[2(1-\alpha)]-\ln\beta/[2(1-\beta)]$.
The endpoint values are understood by continuity. This is the STW separability
bound \cite{saboia11} in the present normalization, so the balanced STW
criterion detects no $\ket{\psi_p}$.

The same conclusion holds after gain optimization. Up to an overall scale, the
nonlocal variables are
$\hat u_t=(t\hat q_1-\hat q_2)/\sqrt{1+t^2}$ and
$\hat v_t=(\hat p_1+t\hat p_2)/\sqrt{1+t^2}$, with $t>0$.
They correspond to a beam splitter with $T=t^2/(1+t^2)$, $R=1-T$, and
\begin{align}
U_T\ket{2,2}=\sum_{k=0}^4b_k(T)\ket{k,4-k}\ .
\label{eq:schmidt-unbalanced-bs}
\end{align}
Its coefficients are
\begin{align*}
b_0=b_4=\sqrt6\,TR,\qquad
b_1=-b_3=\sqrt{6TR}\,(R-T),\qquad
b_2=1-6TR\ .
\end{align*}
The symmetry $b_k^2=b_{4-k}^2$, together with $b_0=b_4$, makes the two
output reductions identical, including their vacuum--four-photon coherence.
Their quadratures therefore obey \cref{eq:schmidt-stw-balanced}, whereas the
PPT separability bound is only
$\mathcal B_{\alpha,\beta}+\ln(2\sqrt{TR})\le\mathcal B_{\alpha,\beta}$.
Rotations change only the coherence phase. Thus gains, rotations, and entropy
orders cannot produce an STW violation for $0<p<1$.

For GHN \cite{garttner23}, the same
partial-transpose reduction associates the balanced mixed-variable marginal
with the Husimi function of the trace-one operator
\begin{align*}
\sigma_p
&:=\Tr_+\!\left[
U_{\rm BS}\bigl(\ket{\psi_p}\!\bra{\psi_p}\bigr)^\Gamma
U_{\rm BS}^\dagger\right]\\[-0.35em]
&=d_0\ket0\!\bra0+d_1\ket1\!\bra1
+d_2\ket2\!\bra2+d_4\ket4\!\bra4\ .
\end{align*}
The coefficients are
\begin{align*}
d_0&=1-\frac{5p}{8}+\frac{\chi_p}{2},\qquad
d_1=-\chi_p,\qquad
d_2=\frac{p}{4}+\frac{\chi_p}{2},\\[-0.35em]
d_4&=\frac{3p}{8},\qquad
\chi_p=\sqrt{p(1-p)}\ .
\end{align*}
For an entangled input, $\sigma_p$ need not be positive. Nevertheless, its
Husimi function below equals the non-negative measured marginal.
In the convention of Ref. \cite{garttner23},
$Q_p(\alpha)=e^{-y}(d_0+d_1y+d_2y^2/2+d_4y^4/24)$, where
$y=|\alpha|^2$ and the measure is $\dif^2\alpha/\pi$. Its continuous
majorization witness is
\begin{align*}
\mathcal M_Q(p):=\sup_{t\ge0}\left\{
\int\frac{\dif^2\alpha}{\pi}\,[Q_p(\alpha)-t]_+-\mathcal V_0(t)
\right\}\ ,\qquad [z]_+:=\max\{z,0\}\ .
\end{align*}
The vacuum function is
\begin{align*}
\mathcal V_0(t)=
\begin{cases}
1-t+t\ln t,&0<t<1\ ,\\
0,&t\ge1\ ,
\end{cases}
\qquad \mathcal V_0(0)=1\ .
\end{align*}
Here, $\mathcal M_Q(p)>0$ signals a continuous-majorization violation and
therefore a violation of the GHN criterion. The boundary for
criterion \eqref{eq:epr-criterion} follows from
\begin{align*}
I(\Delta)&:=\int_0^\Delta e^{-\xi}L_2(\xi)\,\dif \xi
=e^{-\Delta}\Delta\left(1-\frac{\Delta}{2}\right)\ .
\end{align*}
The second auxiliary integral is
\begin{align*}
J(\Delta)&:=\int_0^\Delta e^{-\xi}\left[L_2^2(\xi)-1\right]\,\dif \xi
=-e^{-\Delta}\Delta^2\left(2-\Delta+\frac{\Delta^2}{4}\right)\ .
\end{align*}
Indeed, the first-shell gap is
\begin{align}
P_{\psi_p,1}(\Delta)-P_{0,1}(\Delta)
=2\sqrt{p(1-p)}\,I(\Delta)+pJ(\Delta)\ .
\label{eq:schmidt-first-shell-gap}
\end{align}
For $0<\Delta<2$, it is positive precisely when $0<p<p_*(\Delta)$, where
\begin{align}
p_*(\Delta)=
\frac{(2-\Delta)^2}
{(2-\Delta)^2+\Delta^2\left(2-\Delta+\Delta^2/4\right)^2}\ .
\label{eq:schmidt-shell-boundary}
\end{align}
For $\Delta\ge2$ the first-shell gap is non-positive. Numerical sorting of all
shell weights on a $25\times49$ grid over $0.02\le p\le0.98$ and
$0.04\le\Delta\le1.96$, with radial cutoff $\xi=60$, found that, on this grid,
no partial sum with $K>1$ enlarged the first-shell violation region in
\cref{fig:epr-nongaussian-benchmarks}. The optimized MGVT criterion detects no member
with $p\in(0,1)$.

The balanced GHN endpoint can be found analytically. The radial
function $Q_p(y)$ is strictly decreasing for every $p\in[0,1]$. To see
this for $p>0$, set $a_p=\sqrt{(1-p)/p}$ and write
$Q_p'(y)=-e^{-y}R_p(y)$. Direct simplification gives
\begin{align}
64(1+a_p^2)R_p(y)
={}&y^4-4y^3+(8+16a_p)y^2-(16+96a_p)y
+24+96a_p+64a_p^2>0\ .
\label{eq:schmidt-q-monotone}
\end{align}
Indeed, with $h(y)=y^2-6y+6$, the right-hand side is
$y^2(y-2)^2+4[(y-2)^2+2]+16a_p h(y)+64a_p^2$. If $h(y)\ge0$, its minimum over
$a_p\ge0$ occurs at $a_p=0$. If $h(y)<0$, the minimum occurs at
$a_p=-h(y)/8$ and equals
$4(2y^3-10y^2+14y-3)$, whose minimum on
$(3-\sqrt3,3+\sqrt3)$ is $68/27$.
Continuous majorization can therefore be tested by centered disks. Their
cumulative difference from the vacuum is
\begin{align}
\int_0^Y[Q_p(y)-e^{-y}]\,\dif y
=Ye^{-Y}\bigg[
\frac{\chi_p}{2}-\frac{5p}{8}
-\left(\frac{5p}{16}+\frac{\chi_p}{4}\right)Y
-\frac{pY^2}{16}-\frac{pY^3}{64}\bigg]\ .
\label{eq:schmidt-q-cumulative}
\end{align}
It is positive for some $Y>0$ exactly when
$\chi_p/2-5p/8>0$, or $0<p<16/41$.

We also searched $\mathcal M_Q$ numerically over the relative scaling, local
phase, and nonlocal squeezing allowed in Ref. \cite{garttner23}. The
non-negative measured $Q$ marginal was evaluated directly from the two-mode Husimi
distribution; its complementary Gaussian integrations were performed by
Gauss--Hermite quadrature. Complete phase periods, logarithmic scaling and
squeezing parameters in $[-1.6,1.6]$, and phase-space grids up to $201$
points per axis were used, with independent Fock-space calculations through
cutoff $31$ as a convergence check. The search reproduces the balanced
violation below $p=16/41$. After refinement of the grids and cutoff, no
violation above this endpoint was observed. This conclusion is numerical; the
balanced endpoint itself is exact. This endpoint is the GHN threshold plotted in
\cref{fig:epr-nongaussian-benchmarks}(b).

\section{Extension to Cahill--Glauber
\texorpdfstring{$s$}{s}-ordered distributions}
\label{app:sparam}

The finite-shell relation extends from the Husimi $Q$ function at $s=-1$ to
the Wigner function at $s=0$ \cite{cahill69a,cahill69b}. For an arbitrary
single-mode input, rotational covariance makes its radial shell weights
identical to those of its phase average $\bar\rho$. Hence, for every complete
radial partition $\Pi$ with bounded squared-radius widths and every
$s\in[-1,0]$, componentwise
non-negativity of $\boldsymbol P_\rho^{(s)}(\Pi)$ implies
$\boldsymbol P_\rho^{(s)}(\Pi)\prec
\boldsymbol P_0^{(s)}(\Pi^\downarrow)$. A separate continuous statement is
established here for the one-dimensional radial distribution of the phase-averaged
state: if $g_{\bar\rho}^{(s)}$ is pointwise non-negative, then
$g_{\bar\rho}^{(s)}\prec_{\mathrm{cont}}g_0^{(s)}$.

Let $W_s(\alpha;\rho)$ denote the normalized Cahill--Glauber
$s$-ordered quasiprobability \cite{cahill69a,cahill69b}, with
$\alpha=(q+ip)/\sqrt2$. The values $s=-1$ and $s=0$ correspond to the
Husimi $Q$ and Wigner representations, respectively.
For $s\le0$, $W_s$ is well defined for every density operator.
Representations of the same state at $s_1>s_2$ satisfy
\begin{align}
W_{s_2}(\alpha;\rho)
&=\frac{2}{\pi(s_1-s_2)}
\int W_{s_1}(\beta;\rho)\,
e^{-2|\alpha-\beta|^2/(s_1-s_2)}\,\dif^2\beta\ .
\label{eq:s-gauss}
\end{align}
The convolution kernel is strictly positive.

Set $\xi=q^2+p^2=2|\alpha|^2$. For a phase-invariant state
$\bar\rho=\sum_n\rho_n\ket n\bra n$, define
$g_{\bar\rho}^{(s)}(\xi):=(\pi/2)W_s(\alpha;\bar\rho)|_{|\alpha|^2=\xi/2}$.
The normalization for a general countable Fock mixture will be established
below. For $s\in[-1,0]$, introduce
$\chi_s=(1+s)/(1-s)$ and $y=\xi/(1-s)$. For $-1<s\le0$, additionally set
$a_s=2/(1+s)$.
Write the rescaled radial profile as
$\widetilde g_{\bar\rho}^{(s)}(y):=
(1-s)g_{\bar\rho}^{(s)}((1-s)y)$. For $-1<s\le0$, the Fock-state profile is
given by \cite{cahill69b}
\begin{align}
\widetilde g_n^{(s)}(y)=(-\chi_s)^nL_n(a_s y)e^{-y},
\qquad \widetilde g_0^{(s)}(y)=e^{-y}\ .
\label{eq:gn-y}
\end{align}
Thus the vacuum is independent of $s$ in the rescaled coordinate. The
identity $1+\chi_s t(1-a_s)=1-t$, used below, follows from
$\chi_s(1-a_s)=-1$.

\smallskip
\noindent\textit{Core identity.}

Define
$\Psi_n^{(s)}(y):=e^{y}\int_y^\infty
\widetilde g_n^{(s)}(u)\,\dif u$, normalized so that
$\Psi_n^{(s)}(0)=1$.

For every $s\in[-1,0]$, $n\ge1$, and $y\ge0$, one has
\begin{align}
\Psi_n^{(s)\prime}(y) = n(1+\chi_s)\, e^{y}\,
\widetilde g_{\sigma_n}^{(s)}(y)\ ,
\label{eq:core}
\end{align}
where $\sigma_n:=n^{-1}\sum_{j=0}^{n-1}\ket j\bra j$. At the Wigner
endpoint, \cref{eq:core} reduces to the Laguerre derivative identity used in
the radial proof above.

For $-1<s\le0$, the Laguerre generating function with $w=-\chi_s t$ and
$\chi_s(1-a_s)=-1$ gives
\begin{align*}
\sum_{n=0}^\infty \widetilde g_n^{(s)}(y)t^n
&=\frac{e^{-\omega_s(t)y}}{1+\chi_s t}\ ,\qquad
\omega_s(t):=\frac{1-t}{1+\chi_s t}\ .
\end{align*}
The corresponding generating function for $\Psi_n^{(s)}$ is
\begin{align*}
\sum_{n=0}^\infty \Psi_n^{(s)}(y)t^n
&=\frac{e^{y\phi_s(t)}}{1-t}\ ,\qquad
\phi_s(t):=1-\omega_s(t)=\frac{t(1+\chi_s)}{1+\chi_s t}\ .
\end{align*}
Using $\Psi_n'=\Psi_n-e^y\widetilde g_n$ and
$n\widetilde g_{\sigma_n}=\sum_{j=0}^{n-1}\widetilde g_j$, the two sides
of \cref{eq:core} have generating functions
\begin{align*}
\sum_{n=0}^\infty \Psi_n^{(s)\prime}(y)t^n
&=\frac{\phi_s(t)e^{y\phi_s(t)}}{1-t}\ .
\end{align*}
For the right-hand side,
\begin{align*}
\sum_{n\ge1}\bigl[n(1+\chi_s)e^y
\widetilde g_{\sigma_n}^{(s)}(y)\bigr]t^n
&=(1+\chi_s)e^y\frac{t}{1-t}
\frac{e^{-\omega_s(t)y}}{1+\chi_s t}\ .
\end{align*}
They coincide because
$\phi_s(t)(1+\chi_s t)=t(1+\chi_s)$ and
$\phi_s(t)+\omega_s(t)=1$, proving
\cref{eq:core} coefficientwise.

At $s=-1$, $\widetilde g_n^{(-1)}(y)=e^{-y}y^n/n!$ and
$\Psi_n^{(-1)}(y)=e^y\int_y^\infty e^{-u}u^n/n!\,\dif u
=\sum_{j=0}^{n}y^j/j!$. Differentiation verifies \cref{eq:core} directly.

\smallskip
\noindent\textit{Non-negativity of the $s$-ordered radial quasiprobability of
$\sigma_n$.}

The sign needed in the majorization argument is
\begin{align}
\widetilde g_{\sigma_n}^{(s)}(y)\ge0
\qquad(s\in[-1,0],\ y\ge0,\ n\ge1)\ .
\label{eq:sigma-pos-s}
\end{align}
At $s=0$, \cref{eq:gn-y,eq:S-def-app} give
$\widetilde g_{\sigma_n}^{(0)}(y)=e^{-y}\sum_{j=0}^{n-1}
(-1)^jL_j(2y)/n=e^{-y}S_{n-1}(y)/n\ge0$,
by \cref{eq:mehler-sos-app}. For $s<0$, \cref{eq:s-gauss} is convolution
with a positive Gaussian kernel, so it preserves this non-negativity.

For a countable Fock expansion, the required interchanges follow from a
uniform phase-point-operator bound. The Cahill--Glauber representation gives
\begin{align}
\widetilde g_{\bar\rho}^{(s)}(y)
&=\Tr\!\left[
\bar\rho\,D(\alpha_y)(-\chi_s)^{\hat n}D^\dagger(\alpha_y)
\right],
\qquad
|\alpha_y|^2=\frac{1-s}{2}y ,
\label{eq:s-phase-point}
\end{align}
where the phase of $\alpha_y$ is immaterial and, at $s=-1$, the operator
$0^{\hat n}$ is understood as $\ket0\bra0$. Since
$\|(-\chi_s)^{\hat n}\|\le1$ for $s\in[-1,0]$,
$|\widetilde g_n^{(s)}(y)|\le1$.
Moreover,
\begin{align}
\widetilde g_{\bar\rho}^{(s)}(y)
&=\sum_{n=0}^\infty\rho_n\widetilde g_n^{(s)}(y)\ .
\label{eq:s-profile-mixture}
\end{align}
The latter series converges uniformly for $y\ge0$.

\Cref{eq:core,eq:sigma-pos-s} imply
$\Psi_n^{(s)\prime}(y)\ge0$ and hence
$\Psi_n^{(s)}(y)\ge\Psi_n^{(s)}(0)=1$. Define
$\mathcal T_n^{(s)}(y):=e^{-y}\Psi_n^{(s)}(y)
=1-\int_0^y\widetilde g_n^{(s)}(u)\,\dif u$.
Then $\mathcal T_n^{(s)}(y)\ge0$, while \cref{eq:s-phase-point} gives
$\mathcal T_n^{(s)}(y)\le1+y$. Therefore, for every phase-invariant state
$\bar\rho=\sum_n\rho_n\ket n\bra n$,
\begin{align}
\mathcal T_{\bar\rho}^{(s)}(y)
&:=\sum_{n=0}^\infty\rho_n\mathcal T_n^{(s)}(y)\notag\\
&=1-\int_0^y\widetilde g_{\bar\rho}^{(s)}(u)\,\dif u
=e^{-y}\sum_{n=0}^\infty\rho_n\Psi_n^{(s)}(y)
\ge e^{-y}.
\label{eq:s-tail-dom}
\end{align}
Here uniform convergence on finite intervals justifies exchanging the Fock
sum and the integral.

For $0<a_{\rm L}<1$, the Laguerre Laplace transform, with the Gamma expression used
at $s=-1$, gives
\begin{align}
\widehat g_{\bar\rho}^{(s)}(a_{\rm L})
&:=\int_0^\infty e^{-a_{\rm L}y}
\widetilde g_{\bar\rho}^{(s)}(y)\,\dif y \notag\\
&=\frac{1}{1+a_{\rm L}}
\sum_{n=0}^\infty\rho_n
\left(\frac{1-\chi_s a_{\rm L}}{1+a_{\rm L}}\right)^n
\xrightarrow[a_{\rm L}\downarrow0]{}1 .
\label{eq:s-radial-laplace}
\end{align}
\Cref{eq:s-phase-point} justifies exchanging the Fock sum and the Laplace
integral, and dominated convergence in $n$ gives the limit. Since
$\mathcal T_{\bar\rho}^{(s)}(y)\le1+y$, integration by parts yields
\begin{align}
a_{\rm L}\int_0^\infty e^{-a_{\rm L}y}
\mathcal T_{\bar\rho}^{(s)}(y)\,\dif y
=1-\widehat g_{\bar\rho}^{(s)}(a_{\rm L})
\xrightarrow[a_{\rm L}\downarrow0]{}0 .
\label{eq:s-tail-abel}
\end{align}
If $\widetilde g_{\bar\rho}^{(s)}$ is pointwise non-negative, then
$\mathcal T_{\bar\rho}^{(s)}$ is non-increasing.
\Cref{eq:s-tail-abel} therefore forces
$\mathcal T_{\bar\rho}^{(s)}(y)\to0$, proving
$\int_0^\infty\widetilde g_{\bar\rho}^{(s)}(y)\,\dif y=1$ in the
pointwise-positive case.

\smallskip
\noindent\textit{Continuous majorization.}

Assume that $\widetilde g_{\bar\rho}^{(s)}$ is pointwise non-negative.
The normalization proved above turns $\mathcal T_{\bar\rho}^{(s)}$ into its
survival function,
$\mathcal T_{\bar\rho}^{(s)}(y)=
\int_y^\infty\widetilde g_{\bar\rho}^{(s)}(u)\,\dif u$.
Moreover, \cref{eq:core,eq:sigma-pos-s} give
\begin{align}
\mathcal T_{\bar\rho}^{(s)}(y)-\widetilde g_{\bar\rho}^{(s)}(y)
&=\sum_{n=0}^\infty\rho_n
\bigl[\mathcal T_n^{(s)}(y)-\widetilde g_n^{(s)}(y)\bigr]\ge0,
\label{eq:s-survival-density}
\end{align}
because each bracket equals $e^{-y}\Psi_n^{(s)\prime}(y)$. The sum is
justified by \cref{eq:s-profile-mixture} and the bounds above.

Set $f(y):=\widetilde g_{\bar\rho}^{(s)}(y)$,
$T(y):=\mathcal T_{\bar\rho}^{(s)}(y)$, and define the hazard rate
$h(y):=f(y)/T(y)$. Since
\cref{eq:s-tail-dom,eq:s-survival-density} imply $T(y)>0$ and
$0\le h(y)\le1$, the identity $T'=-f$ gives $T(y)=e^{-H(y)}$, where
$H(y):=\int_0^y h(u)\,\dif u$.
For any measurable $A\subset[0,\infty)$ of finite Lebesgue measure $|A|$,
let $H_A(y):=\int_0^y\mathbbm{1}_A(u)h(u)\,\dif u$. Since $H\ge H_A$,
\begin{align}
\int_A f(y)\,\dif y
&=\int_0^\infty\mathbbm{1}_A(y)h(y)e^{-H(y)}\,\dif y\notag\\
&\le\int_0^\infty\mathbbm{1}_A(y)h(y)e^{-H_A(y)}\,\dif y\notag\\
&=1-\exp\!\left[-\int_A h(y)\,\dif y\right]
\le1-e^{-|A|}.
\label{eq:s-set-concentration}
\end{align}
Taking the supremum over measurable sets of measure $m$ and using the
characterization of the decreasing rearrangement by maximal set integrals
\cite{hardy29,ryff65,chong74} yields
\begin{align}
\int_0^m f^\downarrow(y)\,\dif y
&=\sup_{|A|=m}\int_A f(y)\,\dif y
\le1-e^{-m}
=\int_0^m e^{-y}\,\dif y.
\label{eq:s-rearrangement-concentration}
\end{align}
Both cumulative integrals tend to unity as $m\to\infty$. Hence
\begin{align}
\widetilde g_{\bar\rho}^{(s)}(y)
&\prec_{\mathrm{cont}}e^{-y},\qquad s\in[-1,0].
\label{eq:s-cont-majorization}
\end{align}
This proves the continuous statement for every countable Fock mixture whose
radial distribution is pointwise non-negative. Undoing the common positive rescaling gives
$g_{\bar\rho}^{(s)}\prec_{\mathrm{cont}}g_0^{(s)}$.

\smallskip
\noindent\textit{Finite-shell majorization.}
The finite-shell relation follows under the weaker condition that the
integrated shell weights are non-negative.
Let $\Pi$ be a complete radial partition
$0=r_0<r_1<\cdots$ with $r_k\to\infty$ and bounded squared-radius widths
$d_k=r_k^2-r_{k-1}^2$, and let
$s\in[-1,0]$. Define
$P_{\rho,k}^{(s)}(\Pi):=\int_{r_{k-1}^2}^{r_k^2}
g_{\bar\rho}^{(s)}(\xi)\,\dif\xi$, and let $D_m(\Pi)$ be defined by
\cref{eq:Dm-app}.
If $\boldsymbol P_\rho^{(s)}(\Pi)$ is componentwise non-negative,
then
\begin{align}
\boldsymbol P_\rho^{(s)}(\Pi)
&\prec
\boldsymbol P_0^{(s)}(\Pi^\downarrow)\ .
\label{eq:s-general-majorization}
\end{align}
The ordered vacuum components are
\begin{align}
P_{0,m}^{(s)}(\Pi^\downarrow)
&=e^{-D_{m-1}(\Pi)/(1-s)}
-e^{-D_m(\Pi)/(1-s)}\ .
\label{eq:s-ordered-vacuum}
\end{align}

Rotational covariance reduces the proof to
$\bar\rho=\sum_n\rho_n\ket n\bra n$. In the coordinate
$y=\xi/(1-s)$, set $y_k=r_k^2/(1-s)$,
$\widetilde d_k=d_k/(1-s)$, and
$T_k=\mathcal T_{\bar\rho}^{(s)}(y_{k-1})$.
\Cref{eq:core,eq:sigma-pos-s} make
$\Psi_{\bar\rho}^{(s)}=\sum_n\rho_n\Psi_n^{(s)}$ non-decreasing, which gives
$T_{k+1}\ge e^{-\widetilde d_k}T_k$. The identity
$P_{\rho,k}^{(s)}=\int_{y_{k-1}}^{y_k}
\widetilde g_{\bar\rho}^{(s)}(y)\,\dif y=T_k-T_{k+1}$ holds independently of
sign. Componentwise non-negativity therefore makes the boundary tails $T_k$
decrease to some $L\ge0$. The rescaled widths have the finite upper bound
$\widetilde D:=\sup_k\widetilde d_k<\infty$. For
$y\in[y_{k-1},y_k]$, monotonicity of $\Psi_{\bar\rho}^{(s)}$ gives
\begin{align*}
\mathcal T_{\bar\rho}^{(s)}(y)
\ge e^{-(y-y_{k-1})}T_k
\ge e^{-\widetilde D}L .
\end{align*}
\Cref{eq:s-tail-abel} forces $L=0$. Consequently,
$\sum_{k=1}^\infty P_{\rho,k}^{(s)}=T_1-\lim_{k\to\infty}T_k=1$, and the
non-negative shell sequence is a probability vector. The subset-sum
argument then gives
$\sum_{j=1}^m(P_{\rho,j}^{(s)})^\downarrow
\le1-e^{-D_m(\Pi)/(1-s)}$, which is the cumulative form of
\cref{eq:s-general-majorization}.

For equal-area shells $d_k=\Delta$, ordering the widths leaves the partition
unchanged and \cref{eq:s-general-majorization} reduces to
\begin{align}
\boldsymbol P_\rho^{(s)}(\Delta)
&\prec\boldsymbol P_0^{(s)}(\Delta)\ .
\label{eq:s-equal-area-majorization}
\end{align}
The vacuum shell probabilities are
$P_{0,k}^{(s)}(\Delta)=[1-e^{-\Delta/(1-s)}]
e^{-(k-1)\Delta/(1-s)}$.

At $s=0$, the continuous statement agrees with the
phase-invariant result of Ref. \cite{vanbever21}, and the finite-resolution
statement reduces to the main-text relation.

\end{document}